\def\colorschemesepia{sepia}
\def\colorschemedark{dark}
\def\colorschemelight{light}
\let\colorscheme\colorschemelight
\colorlet{textColor}{black}
\colorlet{bgColor}{white}
\definecolor{textColor}{HTML}{433423}
\definecolor{bgColor}{HTML}{fbf0da}
\definecolor{textColor}{HTML}{bdc1c6}
\definecolor{bgColor}{HTML}{202124}
\definecolor{textBlue}{HTML}{8ab4f8}
\definecolor{textRed}{HTML}{f9968b}
\definecolor{textGreen}{HTML}{81e681}
\definecolor{textPurple}{HTML}{c58af9}
\colorlet{textBlue}{blue!50!black}
\colorlet{textRed}{red!50!black}
\colorlet{textGreen}{green!50!black}
\definecolor{textPurple}{HTML}{681da8}
\newtheorem{lemma}{Lemma}
\newtheorem{theorem}{Theorem}
\newtheorem{observation}{Observation}
\newtheorem{corollary}{Corollary}
\newtheorem{definition}{Definition}
\newtheorem{claim}{Claim}
\newtheorem*{claim*}{Claim}
\newtheorem{example}{Example}
\newtheorem{proposition}{Proposition}
\newcommand{\defeq}{:=}
\newcommand{\Th}{^{\mathrm{th}}}
\newcommand{\wLoG}{without loss of generality}
\newcommand*{\eps}{\varepsilon}
\newcommand*{\floor}[1]{\lfloor #1 \rfloor}
\newcommand*{\ceil}[1]{\lceil #1 \rceil}
\renewcommand{\t}[1]{\texttt{#1}}
\newcommand{\MMS}{\text{MMS}}
\newcommand{\I}{\mathcal{I}}
\renewcommand{\t}[1]{\texttt{#1}}
\newcommand{\ins}{\I=(N,M,V)}
\newcommand{\tf}[2]{\tfrac{#1}{#2}}
\newcommand*{\etal}{{\em et~al.}}
\newenvironment{claimproof}[1]{\emph{Proof. }\space#1}{\hfill $\blacksquare$\medskip\par}
\algnewcommand{\LineComment}[1]{\State \textcolor{gray}{\texttt{//} \textit{#1}}}
\newcommand*{\Null}{\mathtt{null}}
\DeclareMathOperator{\bagFill}{\texttt{bagFill}}
\DeclareMathOperator{\bagFillHyp}{\hyperref[algo:bagFill]{\bagFill}}
\DeclareMathOperator{\RBF}{\mathtt{RBF}}
\DeclareMathOperator{\ordSt}{\#}
\newcommand*{\Acal}{\mathcal{A}}
\newcommand*{\Ical}{\mathcal{I}}
\newcommand*{\Vcal}{\mathcal{V}}
\newlength{\cellW}\newlength{\cellH}
\newcommand*{\Description}[1]{}
\newcommand*{\captionAndDescr}[1]{\caption{#1}}
\newcommand*{\figScale}{1.1}
\let\shortCite\cite
\title{Improving Approximation Guarantees for Maximin Share}
\author{Hannaneh Akrami\thanks{Max Planck Institute for Informatics and Graduiertenschule Informatik, Universit\"at des Saarlandes}\\\texttt{\small hakrami@mpi-inf.mpg.de}
\and Jugal Garg\thanks{University of Illinois at Urbana-Champaign. Supported by NSF Grant CCF-1942321}\\\texttt{\small jugal@illinois.edu}
\and Eklavya Sharma\textsuperscript{\textdagger}\\\texttt{\small eklavya2@illinois.edu}
\and Setareh Taki\thanks{Grubhub, USA}\\ \texttt{\small Staki@grubhub.com}
}
\date{\empty}
\begin{document}
\maketitle
\begin{abstract}
We consider fair division of a set of indivisible goods among $n$ agents with additive valuations using the fairness notion of maximin share (MMS). MMS is the most popular share-based notion, in which an agent finds an allocation fair to her if she receives goods worth at least her ($1$-out-of-$n$) MMS value. An allocation is called MMS if all agents receive their MMS values. However, since MMS allocations do not always exist~\texorpdfstring{\cite{KurokawaPW18}}{[Kurokawa et al., JACM'18]}, the focus shifted to investigating its ordinal and multiplicative approximations.

In the ordinal approximation, the goal is to show the existence of $1$-out-of-$d$ MMS allocations (for the smallest possible $d>n$). A series of works led to the state-of-the-art factor of $d=\lfloor3n/2\rfloor$~%
\texorpdfstring{\cite{Hosseini2021OrdinalMS}}{[Hosseini et al., J.Artif.Intell.Res.'21]}.
We show that $1$-out-of-$4\lceil n/3\rceil$ MMS allocations always exist, thereby improving the state-of-the-art of ordinal approximation. 

In the multiplicative approximation, the goal is to show the existence of $\alpha$-MMS allocations (for the largest possible $\alpha < 1$), which guarantees each agent at least $\alpha$ times her MMS value. A series of works in the last decade led to the state-of-the-art factor of $\alpha = \frac{3}{4} + \frac{3}{3836}$~%
\texorpdfstring{\cite{Akrami2023BreakingT}}{[Akrami and Garg, SODA'24]}.

We introduce a general framework of \emph{approximate MMS with agent priority ranking}. We order the agents, and agents earlier in the order are considered more \emph{important}.  An allocation is said to be $T$-MMS, for a non-increasing sequence $T \defeq (\tau_1, \ldots, \tau_n)$ of numbers, if the agent at rank $i$ in the order gets a bundle of value at least $\tau_i$ times her MMS value. This framework captures both ordinal approximation and multiplicative approximation as special cases. We show the existence of $T$-MMS allocations where $\tau_i \ge \max(\frac{3}{4} + \frac{1}{12n}, \frac{2n}{2n+i-1})$ for all $i$. Furthermore, by ordering the agents randomly, we can get allocations that are $(\frac{3}{4} + \frac{1}{12n})$-MMS ex-post and $(0.8253 + \frac{1}{36n})$-MMS ex-ante. We also investigate the limitations of our algorithm and show that it does not give better than $(0.8631 + \frac{1}{2n})$-MMS ex-ante.
\end{abstract}

\section{Introduction}

Fair allocation of resources (goods) is a fundamental problem in multiple disciples, including computer science, economics, and social choice theory, where the goal is to divide a set $M$ of $m$ goods among a set $N$ of $n$ agents in a \emph{fair} manner. This field has received significant attention since the seminal work of Steinhaus in the 1940s~\cite{steinhaus1948problem}.
When the goods are divisible, the two standard fairness notions are \emph{envy-freeness} and \emph{proportionality}, based on envy and share, respectively. In an envy-free allocation, no agent prefers another agent's allocation, and in a proportional allocation, each agent receives her proportionate share, i.e., at least a $1/n$ fraction of her value of all the goods. In the case of divisible goods, an envy-free and proportional allocation exists; see~\cite{Varian74,Foley67,AzizM16b}.

We study the discrete setting, in which each good can be given to exactly one agent. 
Let $v_i: 2^M \rightarrow \mathbb{R}_{\geq 0}$ denote the valuation function of agent $i$. We assume that $v_i(.)$'s are additive, i.e., $v_i(S) = \sum_{g\in S}v_i(\{g\})$. 
Simple examples show that in the discrete case, neither envy-freeness nor proportionality can be guaranteed.\footnote{Consider an instance with two agents and one good with positive value to both agents.} This necessitates the refinement of these notions.

We consider the natural and most popular discrete analog of proportionality called \emph{maximin share} (MMS) introduced in~\cite{budish2011combinatorial}. It is also shown to be preferred by participating agents in real-life experiments~\cite{GatesGD20}. The MMS value of an agent is the maximum value she can guarantee if she divides the goods into $n$ bundles (one for each agent) and then receives a bundle with the minimum value. Formally, for a set $S$ of goods and an integer $d$, let $\Pi_d(S)$ denote the set of all partitions of $S$ into $d$ bundles. Then,
\begin{align*}
    \MMS_i^d(S) := \max_{(P_1, \ldots, P_d) \in \Pi_d(S)} \min_{j} v_i(P_j).
\end{align*}
The MMS value of agent $i$ is denoted by $\MMS_i := \MMS_i^n(M)$.
An allocation is said to be MMS if each agent receives at least their MMS value. However, MMS is an unfeasible share guarantee that cannot always be satisfied when there are more than two agents with additive valuations~\cite{procaccia2014fair, KurokawaPW18, FeigeST21}. Therefore, the MMS share guarantee needs to be relaxed, and the two natural ways are its multiplicative and ordinal approximations.

\paragraph{\bf \boldmath $\alpha$-MMS}
Since we need to lower the share threshold, a natural way is to consider $\alpha<1$ times the MMS value. Formally, an allocation $X = (X_1, \ldots, X_n)$ is $\alpha$-MMS if for each agent $i$, $v_i(X_i) \geq \alpha \cdot \MMS_i$. Earlier works showed the existence of $2/3$-MMS allocations using several different approaches~\cite{procaccia2014fair,amanatidis2017approximation,barman2020approximation,garg2019approximating,KurokawaPW18}.
Later, in a groundbreaking work~\cite{ghodsi2018fair}, the existence of $3/4$-MMS allocations was obtained through more sophisticated techniques and involved analysis. This factor was slightly improved to $3/4 + 1/(12n)$ in~\cite{garg2021improved}, then more recently to $\frac{3}{4} + \min(\frac{1}{36}, \frac{3}{16n-4})$~\cite{akrami2023simplification}, and finally to $3/4+3/3836$~\cite{Akrami2023BreakingT}. On the other hand, $\alpha$-MMS allocations need not exist for $\alpha> 1-1/n^4$~\cite{FeigeST21}.

\paragraph{\bf \boldmath $1$-out-of-$d$ MMS}
Another natural way of relaxing MMS is to consider the share value of $\MMS_i^d(M)$ for $d>n$ for each agent $i$, which is the maximum value that $i$ can guarantee if she divides the goods into $d$ bundles and then takes a bundle with the minimum value. This notion was introduced together with the MMS notion in~\cite{budish2011combinatorial}, which also shows the existence of $1$-out-of-$(n+1)$ MMS after \emph{adding excess goods}. Unlike $\alpha$-MMS, this notion is robust to small perturbations in the values of goods because it only depends on the bundles' ordinal ranking and is not affected by small perturbations as long as the ordinal ranking of the bundles does not change.%
\footnote{The $\alpha$-MMS is very sensitive to agents' precise cardinal valuations: Consider the example mentioned in \cite{Hosseini2021OrdinalMS}. Assume $n=3$ and there are four goods $g_1$, $g_2$, $g_3$ and $g_4$ with values $30$, $39$, $40$ and $41$ respectively for agent $1$. Assume the goal is to guarantee the $3/4$-MMS value of each agent. We have $\MMS_1 = 40$, and therefore any non-empty bundle satisfies $3/4$-MMS for agent $1$. However, if the value of $g_3$ gets slightly perturbed and becomes $40+\epsilon$ for any $\epsilon>0$, then $\MMS_1 > 40$ and then $3/4 \cdot \MMS_1 > 30$ and the bundle $\{g_1\}$ does not satisfy agent $1$. Thus,  the acceptability of a  bundle  (in this example, $\{g_1\}$) might be affected by an arbitrarily small perturbation in the value of an irrelevant good (i.e., $g_3$).

Observe that in this example, whether the value of $g_3$ is $40$ or $40+\epsilon$ for any $\epsilon \in \mathbb{R}$, $\{g_1\}$ is an acceptable $1$-out-of-$4$ MMS bundle for agent $1$.}

In the standard setting (i.e., without excess goods), the first non-trivial ordinal approximation was the existence of $1$-out-of-$(2n-2)$ MMS allocations~\cite{AignerHorev2019EnvyfreeMI}, which was later improved to $1$-out-of-$\lceil 3n/2 \rceil$~\cite{hosseini2021guaranteeing}, and then to the current state-of-the-art $1$-out-of-$\lfloor 3n/2 \rfloor$~\cite{Hosseini2021OrdinalMS}. On the other hand, the (non-)existence of $1$-out-of-$(n+1)$ MMS allocations is open to date.

Our first main result in the following theorem shows that $1$-out-of-$4\lceil n/3\rceil$ MMS allocations always exist, thereby improving the state-of-the-art of $1$-out-of-$d$ MMS.

\begin{theorem}\label{thm:1}
$1$-out-of-$4\lceil n/3 \rceil$ MMS allocations always exist. 
\end{theorem}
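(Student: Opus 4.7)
The plan is to establish the theorem via a reduction argument tuned to the ratio $d/n \approx 4/3$, where $d \defeq 4\lceil n/3\rceil$. After normalizing each valuation so that $\MMS_i^d(M) = 1$ for every agent $i$, it suffices to exhibit an allocation giving each agent a bundle worth at least $1$. The normalization also guarantees $v_i(M) \ge d$ for every $i$, which supplies the value budget needed for the concluding step.

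The core of the proof would be a set of ``valid reductions'': each reduction picks a subset $N' \subseteq N$ of agents and a subset $S \subseteq M$ of items, assigns each agent in $N'$ a sub-bundle of $S$ of value at least $1$ to her, and verifies that in the residual instance on $(N\setminus N', M \setminus S)$ every remaining agent $j$ still satisfies $\MMS_j^{d-|S|}(M \setminus S) \ge 1$, while the residual sizes $(n',d')$ continue to obey $d' \ge 4\lceil n'/3\rceil$. Because $4\lceil \cdot/3\rceil$ is a step function modulo $3$, the cleanest reductions remove three agents together with four items at a time (keeping $d/n = 4/3$ exactly), for instance by carving four disjoint value-$\ge 1$ bundles out of one agent's $\MMS_i^d$-partition and redistributing three of them among a carefully selected triple of agents via a matching argument; additional rules would handle edge cases according to $n \bmod 3$ and the presence of very heavy single items.

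Once no further reduction applies, all remaining items are small relative to $1$ and the residual should admit a bag-filling/matching argument of the kind used repeatedly in the $\alpha$-MMS literature: iteratively grow a bag by adding small items until some not-yet-satisfied agent values it at $\ge 1$, assign the bag, and continue. The main obstacle will be designing reductions that are both exhaustive (whenever the instance is not irreducible some rule applies) and invariant-preserving (the bound $d' \ge 4\lceil n'/3\rceil$ survives every step), which forces a careful case analysis on $n \bmod 3$ and on the heavy-item structure of each agent's $\MMS^d$-partition. A secondary obstacle is verifying that the irreducible residual always has enough regularity for the concluding bag-filling step to produce $n$ disjoint bundles each valued $\ge 1$ by its recipient; this is typically where the most delicate structural lemmas about small items and overlapping partitions enter.
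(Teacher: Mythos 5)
Your proposal is a plan rather than a proof, and the plan rests on exactly the step that the paper identifies as the obstruction to this whole approach. A ``valid reduction'' here must guarantee that after removing three agents and the union $S$ of four bundles of some agent $i$'s $d$-MMS partition, every \emph{remaining} agent $j$ satisfies $\MMS_j^{d-4}(M\setminus S)\ge 1$. Nothing in your sketch establishes this: agent $j$'s own $d$-partition can intersect $S$ in arbitrarily many of its bundles, so deleting $S$ may leave $j$ unable to repartition the residue into $d-4$ bundles of value $1$. The reductions used in the $\alpha$-MMS literature (removing $\{1\}$, $\{n,n+1\}$, etc.\ from an ordered instance) are valid precisely because they remove a controlled prefix/suffix of the good ordering and only need to preserve a $3/4$ fraction of value; neither crutch is available when you must preserve the exact threshold $1$ and remove four bundles chosen from one agent's partition. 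The ``matching argument'' that gives three of the four carved bundles to three other agents is likewise unsupported: those bundles have value $\ge 1$ to agent $i$, not to the agents receiving them. You flag both the exhaustiveness and the invariant-preservation of the reductions as ``the main obstacle,'' which is an accurate self-assessment -- but that obstacle is the entire content of the theorem, and the paper explicitly states it does not know how to overcome it.

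The paper's actual proof takes a different route that avoids reductions entirely. It runs pure bag filling on $n$ bags (not $d$) initialized as $B_k=\{k,2n-k+1\}$ from the $2n$ largest goods, and argues by contradiction: if some agent $i$ receives nothing, then since $v_i(M)=d=4n/3$ is spread over only $n$ bags, the average bag value to $i$ is $4/3$, and the proof shows the final bags cannot absorb that surplus. This requires classifying bags by how the first bag of value $<1$ to agent $i$ sits relative to the threshold $v_i(2n-\ell^*)\gtrless 1/3$, and then invoking nontrivial structural lemmas (the analogues of Lemmas~\ref{expowerful} and \ref{difficult-bound}) that bound the value of specific good ranges by counting how they distribute over the bundles of $i$'s $d$-MMS partition restricted to a prefix of the goods. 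None of this machinery appears in, or is replaceable by, your reduction framework, so the proposal as written does not constitute a proof.
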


Our proof is constructive; we give a simple algorithm to achieve $1$-out-of-$4\lceil n/3 \rceil$ MMS allocations. Our algorithm just utilizes \emph{bag-filling}, where we initialize $n$ bags $\{B_1, \dots, B_n\}$ with a particular set of two goods in each. 
Then, we iterate over the bags, and in each round $j$, we keep adding more goods to the bag $B_j$ until its value is at least $\MMS^d_i$ for some agent $i$, who has not received a bag yet, and $d=4\ceil{\frac{n}{3}}$. Then, we allocate $B_j$ to any such agent $i$ and proceed.

We note that our algorithm is very similar to the bag-filling phase of algorithms in~\cite{garg2021improved,akrami2023simplification} which obtain $\frac{3}{4}+O(\frac1n)$-MMS allocations. However, these algorithms use another critical phase called \emph{valid reductions} before bag-filling. Valid reductions allocate a set of goods to an agent $i$ such that while $i$ gets her desired share ($\alpha\cdot \MMS_i$), the MMS values of the other agents do not drop. The bag-filling phase in these algorithms is run only when no more valid reductions are feasible. It provides strong bounds on the values of the remaining goods utilized in the analysis of the bag-filling phase. However, it is unclear how to use valid reduction in an algorithm to find $1$-out-of-$d$ MMS allocations. This is because $d$ is not the same as $n$, and our goal is to guarantee exact $1$-out-of-$d$ MMS allocations and not an $\alpha$ approximation of it. Consequently, our analysis is totally different from the ones in ~\cite{garg2021improved,akrami2023simplification}, and much more involved and challenging. We give an overview of our analysis in~\cref{sec:tec}.
%
We also give a tight example showing that Theorem~\ref{thm:1} has the best factor possible with this approach (Section~\ref{sec:tighteg}). 

Another way to interpret $1$-out-of-$d$ MMS allocations is giving $n/d$ fraction of agents their MMS value and nothing to the remaining agents. We prove this equivalence in Section~\ref{sec:prelim}.
Both ordinal and multiplicative approximations focus on extremes. In $1$-out-of-$d$ MMS, some agents get nothing, and others are guaranteed their \emph{full} MMS value.  In $\alpha$-MMS, each agent receives (the same factor) $\alpha<1$ fraction of their MMS value.  As a middle ground between these two extreme notions, we introduce a general framework of \emph{approximate MMS with agent priority ranking}, where each agent has a unique \emph{priority rank} from the set $\{1, \ldots, n\}$. We are also given a list of \emph{thresholds} $T \defeq (\tau_1, \tau_2, \ldots, \tau_n)$ where $1 \ge \tau_1 \ge \ldots \ge \tau_n \ge 0$. An allocation is said to be $T$-MMS if the agent with priority rank $i$ gets a bundle of value at least $\tau_i$ times her MMS value. This framework captures both ordinal and multiplicative approximations as special cases. Namely, an $\alpha$-MMS allocation corresponds to the case where $\tau_i = \alpha$ for all $i \in [n]$, and $1$-out-of-$d$ allocations correspond to the case where the first $n/d$ thresholds in $T$ are 1 and the rest are 0. Furthermore, the $(\alpha, \beta)$-framework introduced in~\cite{hosseini2021guaranteeing}, where $\alpha$ fraction of agents receive $\beta$-MMS, is another special case where the first $n\alpha$ thresholds in $T$ are $\beta$ and the rest are 0.

Our second main result in the following theorem shows that a $T$-MMS allocation always exists where $\tau_i = \max(\frac{3}{4} + \frac{1}{12n}, \frac{2n}{2n+i-1})$ for all $i \in [n]$. 

\begin{theorem}\label{thm:2}
$T=(\tau_1, \ldots, \tau_n)$-MMS allocations exist when $\tau_i \defeq \max\left(\frac{2n}{2n+i-1}, \frac{3}{4} + \frac{1}{12n}\right)$ for all $i\in N$. 
\end{theorem}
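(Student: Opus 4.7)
The plan is to adapt the bag-filling framework of~\cite{garg2021improved, akrami2023simplification} (from which the uniform floor $\tfrac{3}{4}+\tfrac{1}{12n}$ inside our $\max$ is inherited) by replacing the single threshold with the rank-dependent thresholds $\tau_{r_i}$. After rescaling valuations so that $\MMS_i = 1$ for every agent, the target becomes to hand the rank-$i$ agent a bundle of value at least $\tau_i$ in her own valuation.

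Concretely, I would run two successive loops. The first is a rank-aware \emph{valid-reduction} loop: whenever an unallocated agent $i$ of rank $r_i$ admits a small candidate set $S$ with $v_i(S) \ge \tau_{r_i}$ and $v_j(S) \le 1$ for every other unallocated agent $j$, assign $S$ to $i$; the standard MMS-preservation lemma then keeps the surviving agents' MMS values equal to $1$ in the residual instance. The second is a rank-aware \emph{bag-filling} loop: start with an empty bag, add goods one at a time in a carefully chosen order, and as soon as some remaining agent $i$ values the current bag at least $\tau_{r_i}$, assign the bag to her. Ties are resolved by giving the bag to the satisfied agent of highest rank (smallest rank index), so that a high-priority agent with a stringent threshold is never preempted by a lower-priority agent with a looser one.

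The analysis rests on two accounting bounds. The first is the global budget inequality $\sum_{i=1}^n \tau_i < n$: a short computation gives $\sum_i \tau_i \approx 2n \ln(4/3) + n/4 \approx 0.826\,n$, matching the ex-ante guarantee quoted in the abstract and ensuring that the total value of the instance (at least $n$ in normalised units) suffices to complete every bag. The second is an upper bound on the value of each remaining good to each remaining agent after the valid-reduction loop has been exhausted; this bound is inherited from~\cite{garg2021improved, akrami2023simplification} and controls how much a bag can overshoot any single agent's threshold, ensuring that lower-rank agents cannot ``steal'' a bag that was really intended for a higher-rank one.

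The main obstacle, I expect, is the handling of the highest-ranked agents, and in particular the rank-$1$ agent for whom $\tau_1 = 1$. Because exact MMS allocations need not exist, the valid-reduction loop cannot always succeed at rank $1$, so we must argue that the bag-filling loop can always drive a bag up to her full MMS without running out of material. The specific form $\tau_i = \tfrac{2n}{2n+i-1}$ is what this argument buys: it is the harmonic-style bound on how much value the first $i-1$ bags can have absorbed before bag $i$ is opened, analogous to the accounting used in the proof of \cref{thm:1}, and it intersects the uniform floor $\tfrac{3}{4}+\tfrac{1}{12n}$ around $i \approx 2n/3 + 1$, beyond which the bound from~\cite{garg2021improved, akrami2023simplification} takes over.
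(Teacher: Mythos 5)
Your high-level architecture (reductions, then bag filling, with ties broken in favour of the higher-priority agent) matches the paper's algorithm $\RBF$, but the two pillars your analysis rests on do not hold, and the paper's proof is built precisely around avoiding them. First, the reductions here are \emph{not} valid: the MMS-preservation lemma of \cite{garg2021improved} for the bundle $\{1, 2n+1\}$ requires all thresholds to be at most $3/4$, whereas your $\tau_1 = 1$; and your substitute condition ``$v_j(S) \le 1$ for every other unallocated agent $j$'' does not imply preservation either (removing one agent together with a set of goods of total value at most $1$ can strictly decrease the remaining agents' MMS: take three agents with goods $0.9,0.9,0.9,0.1,0.1,0.1$ and $S=\{0.1,0.1,0.1\}$). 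The paper explicitly flags this failure and therefore cannot reduce to irreducible instances; instead it analyses the bundles produced in \emph{both} phases together, entirely in the valuation of a single hypothetical failing agent $i$.

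Second, the budget inequality $\sum_j \tau_j < n$ is not the right accounting and does not yield the claimed bound, because the bundle given to a higher-priority agent $j$ is only known to be worth at least $\tau_j$ \emph{to agent $j$}; what matters is its value \emph{to agent $i$}. The correct inequality, which is where $2n/(2n+i-1)$ actually comes from, is: if agent $i$ never gets a bundle worth $\tau_i$, then the $n-i+1$ bundles going to agents of priority rank at least $i$ (and any leftover bags) are each worth less than $\tau_i$ to agent $i$, while each of the $i-1$ bundles going to higher-priority agents is worth (or can be charged) less than $\tfrac{3}{2}\tau_i$ to agent $i$; summing gives $n = v_i(M) < (i-1)\tfrac{3}{2}\tau_i + (n-i+1)\tau_i$, i.e.\ $\tau_i > 2n/(2n+i-1)$. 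Establishing the $\tfrac{3}{2}\tau_i$ cap is the technical heart of the proof and is not automatic: the first type-2 reduction bundle taken after a type-3 reduction can individually exceed $\tfrac{3}{2}\tau_i$, and the paper handles this by a charging scheme that redistributes value between the type-3 bundle and that type-2 bundle (their combined value is below $\tfrac{5}{2}\tau_i$, so each can be charged below $\tfrac{3}{2}\tau_i$ without turning a bundle of value below $\tau_i$ into one charged above $\tau_i$). Your proposal contains neither this per-bundle bound nor the redistribution, so the key inequality is unsupported. As a small consolation, your worry about the rank-$1$ agent dissolves under this accounting: with $i=1$ the inequality reads $n < n\tau_1$, so $\tau_1 = 1$ is achievable outright.
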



We prove Theorem~\ref{thm:2} constructively by giving an allocation algorithm. Our algorithm is almost identical to that of \cite{garg2021improved}. The main difference is that when multiple agents are eligible to receive a bundle, we give the bundle to the agent with the smallest priority rank.
However, despite a similar algorithm, our analysis is totally different from theirs. Our algorithm, like theirs, performs a few \emph{reduction} operations (phase 1) until the instance becomes \emph{irreducible}, and then does bag filling (phase 2). They prove that reduction operations in their algorithm are \emph{valid}, so they only need to prove their claim for irreducible instances. In our setting, reduction operations may not be valid, so we need to prove our claim for non-irreducible instances too. To do this, we bound the values of bundles from both phases together.

Another approach to get fair allocations is to use randomization.
For example, when there is a single good, giving the good to an agent selected uniformly randomly is fair. Hence, one can also consider the fairness of probability distributions of allocations.
However, randomness by itself is unsatisfactory: giving all the goods to a random agent is fair because each agent has equal opportunity, but is unfair after the randomness is realized due to the large disparity in this allocation.
To fix this, one can aim for a \emph{best-of-both-worlds} approach
(see, e.g.,~\cite{AleksandrovAGW15,freeman2020best,babaioff2022best,aziz2023best,cycle-breaking})
i.e., find a distribution of allocations where each allocation in the support of the distribution is fair (ex-post fairness), and the entire distribution is fair in a randomized sense (ex-ante fairness).
For example, Babaioff \etal{} \shortCite{babaioff2022best} gave an algorithm whose output is ex-post $1/2$-MMS and ex-ante proportional, i.e., it outputs a distribution over $1/2$-MMS allocations, such that each agent's expected value of her bundle is at least $v_i(M)/n$.

By ordering the agents randomly in \cref{thm:2}, we can get allocations that are $(\frac{3}{4} + \frac{1}{12n})$-MMS ex-post and $(0.8253 + \frac{1}{36n})$-MMS ex-ante. On the other hand, we show that our algorithm cannot give better than ex-ante $(0.8631+\frac{1}{2n})$-MMS, regardless of how we pick the thresholds (\cref{sec:hard1}).
The analysis of many MMS-approximation algorithms has a natural property called \emph{obliviousness}. In \cref{sec:hard2}, we show that obtaining better than ex-ante $(0.8578+\frac{1}{3n})$-MMS with our algorithm (for some choice of thresholds) requires non-oblivious proof techniques.

\subsection{Further Related Work}
Since the MMS notion and its variants have been intensively studied, we mainly focus on closely related work here. Computing the MMS value of an agent is NP-hard, but a PTAS exists \cite{woeginger1997polynomial}. For $n=2$, MMS allocations always exist \cite{bouveret2016characterizing}. For $n=3$, a series of work has improved the MMS approximation from $3/4$ \cite{procaccia2014fair} to $7/8$ \cite{amanatidis2017approximation} to $8/9$ \cite{gourves2019maximin}, and then to $11/12$~\cite{feige2022improved}.
For $n=4$, $(4/5)$-MMS allocations exist~\cite{ghodsi2018fair,babaioff2022fair}.

Babaioff \etal\ \cite{BabaioffNT21} considered $\ell$-out-of-$d$ MMS, in which the MMS value of an agent is the maximum value that can be guaranteed by partitioning goods into $d$ bundles and selecting the $\ell$ least-valuable ones. This was further studied by Segal-Halevi \cite{SegalHalevi2019TheMS, SegalHalevi2017CompetitiveEF}. Currently, the best result is the existence of $\ell$-out-of-$\lfloor (\ell+\frac{1}{2})n \rfloor$ MMS \cite{Hosseini2021OrdinalMS}. The MMS and its ordinal approximations have also been applied in the context of cake-cutting problems~\cite{Elkind2021GraphicalCC, Elkind2021KeepYD, Elkind2020MindTG, BogomolnaiaM22}.

Many works have analyzed randomly generated instances. Bouveret and Lema{\^\i}tre~\shortCite{bouveret2016characterizing}
showed that MMS allocations usually exist (for data generated randomly using uniform or Gaussian valuations).
MMS allocations exist with high probability when the valuation of each good is drawn independently and randomly from the uniform distribution on $[0, 1]$ \cite{amanatidis2017approximation} or for arbitrary distributions of sufficiently large variance \cite{kurokawa2016can}.

MMS can be analogously defined for fair division of chores where items provide negative value. Like the case of the goods, MMS allocations do not always exist for chores \cite{aziz2017algorithms}. Many papers studied approximate MMS for chores~\cite{aziz2017algorithms,barman2020approximation,huang2021algorithmic}, with the current best approximation ratio being $13/11$~\cite{huang2023reduction}. For three agents, $19/18$-MMS allocations exist~\cite{feige2022improved}. Also, ordinal MMS approximation for chores has been studied, and it is known that $1$-out-of-$\lfloor 3n/4 \rfloor$ MMS allocations exist~\cite{Hosseini2022OrdinalMS}. The chores case turns out to be easier than goods due to its close relation with the well-studied variants of bin-packing and job scheduling problems.

MMS has also been studied for non-additive valuations~\cite{MMS-XOS,ghodsi2018fair,li2021fair}. Generalizations have been studied where restrictions are imposed on the set of allowed allocations, like matroid constraints \cite{gourves2019maximin}, cardinality constraints \cite{biswas2018fair}, and graph connectivity constraints \cite{bei2022price,truszczynski2020maximin}. Stretegyproof versions of fair division have also been studied~\cite{barman2019fair,amanatidis2016truthful,amanatidis2017truthful,aziz2019Strategyproof}. MMS has also inspired other notions of fairness, like weighted MMS \cite{farhadi2019fair}, AnyPrice Share (APS) \cite{babaioff2021fair}, Groupwise MMS \cite{barman2018groupwise,chaudhury2021little}, and self-maximizing shares \cite{babaioff2022fair}.

\section{Preliminaries}\label{sec:prelim}

For $t \in \mathbb{N}$, we denote the set $\{1, 2, \ldots, t\}$ by $[t]$. A discrete fair division instance is denoted by $(N, M, \mathcal{V})$, where $N$ is the set of $n$ agents, $M$ is the set of $m$ indivisible goods, and $\mathcal{V}=(v_1, \ldots, v_n)$ is the vector of agents' valuation function. Often, we assume (\wLoG) that $N = [n]$ and $M = [m]$. For all $i \in [n]$, $v_i: 2^M \rightarrow \mathbb{R}_{\geq 0}$ is the valuation function of agent $i$ over all subsets of the goods. In this paper, we assume $v_i(\cdot)$ is additive for all $i \in [n]$, i.e., $v_i(S)=\sum_{g \in S} v_i(\{g\})$ for all $S \subseteq M$. For ease of notation, we also use $v_i(g)$ and $v_{i,g}$ instead of $v_i(\{g\})$. An allocation $X=(X_1, \ldots, X_n)$ is a partition of the goods into $n$ bundles such that each agent $i \in [n]$ receives $X_i$.

For a set $S$ of goods and any positive integers $d$, let $\Pi_d(S)$ denote the set of all partitions of $S$ into $d$ bundles. Then,
\begin{align}
    \MMS_i^d(S) := \max_{P \in \Pi_d(S)} \min_{j=1}^d v_i(P_j). \label{MMS-def}
\end{align}

Setting $d=n$, we obtain the standard MMS notion. Formally, $\MMS_i = \MMS^n_i(M)$.
We call an allocation $X$ \emph{$1$-out-of-$d$ MMS}, if for all agents $i$, $v_i(X_i) \geq \MMS^d_i(M)$.
For each agent $i$, $d$-MMS partition of $i$ is a partition $P=(P_1, \ldots, P_d)$ of $M$ into $d$ bundles such that $\min_{j=1}^d v_i(P_j)$ is maximized. Basically, for a $d$-MMS partition $P$ of agent $i$, $\MMS_i^d(M) = \min_{j=1}^d v_i(P_j)$.
In the rest of the paper, for each agent $i$, we denote a $d$-MMS partition of $i$ by $P^i = (P^i_1, \ldots, P^i_d)$.

Consider a fair division instance with $n$ agents,
where each agent $i$ has a unique priority rank $r_i \in [n]$.
Let $T \defeq (\tau_1, \ldots, \tau_n)$ be a list of numbers,
where $1 \ge \tau_1 \ge \ldots \ge \tau_n \ge 0$.
An allocation $X$ is $T$-MMS if for all $i \in [n]$, we have $v_i(X_i) \ge \tau_{r_i} \cdot \MMS_i$.
Henceforth, unless stated otherwise, we assume that $r_i = i$ for all $i \in [n]$.
This assumption is \wLoG{}, since we can just renumber the agents.

\begin{lemma}
For integers $d \geq n$, let $T$ be a list of length $d$ having $n$ ones and $d-n$ zeros.
Then $1$-out-of-$d$ MMS allocations exist for all instances with $n$ agents and additive valuations
if and only if $T$-MMS allocations exist for all instances with $d$ agents and additive valuations.
\end{lemma}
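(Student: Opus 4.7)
The plan is to prove both directions of the equivalence, exploiting the observation that $\MMS_i^d(M)$ depends only on agent $i$'s valuation function and on the good set $M$, not on the number or identity of the other agents in the instance. In particular, in any $d$-agent instance the target value $\MMS_i$ of agent $i$ equals $\MMS_i^d(M)$, so the $T$-MMS requirement $v_i(X_i) \ge \tau_{r_i} \cdot \MMS_i$ reduces to $v_i(X_i) \ge \MMS_i^d(M)$ when $\tau_{r_i} = 1$ and is vacuous (since valuations are non-negative) when $\tau_{r_i} = 0$.

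For the forward direction, assume $1$-out-of-$d$ MMS allocations always exist for $n$-agent instances, and consider an arbitrary $d$-agent instance with priority ranks. Let $S$ denote the set of $n$ agents whose ranks lie in $[n]$. I would apply the hypothesis to the $n$-agent instance consisting of the agents in $S$ together with the same goods $M$ and the same valuations restricted to $S$; this yields a $1$-out-of-$d$ MMS allocation $Y$ for $S$, so $v_i(Y_i) \ge \MMS_i^d(M)$ for every $i \in S$. Extending $Y$ by giving each of the remaining $d-n$ agents the empty bundle produces a $d$-agent allocation that is $T$-MMS by the observation above.

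For the reverse direction, assume $T$-MMS allocations always exist for $d$-agent instances, and consider an arbitrary $n$-agent instance $(N, M, \mathcal{V})$. I would pad it to a $d$-agent instance by adding $d-n$ dummy agents with arbitrary (say, identically zero) valuations, assigning the original agents ranks $1, \ldots, n$ and the dummies ranks $n+1, \ldots, d$. A $T$-MMS allocation $X'$ exists by hypothesis, and by the opening observation each original agent $i$ receives $v_i(X'_i) \ge \MMS_i^d(M)$. To produce a bona fide $n$-agent allocation that partitions $M$, I would reassign every good held by a dummy agent to any original agent; by additivity and non-negativity, this can only increase the original agents' values and so preserves the $1$-out-of-$d$ MMS guarantee.

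I do not anticipate any real obstacle: both directions are essentially trivial reductions, and the only subtlety is aligning the MMS thresholds across instances of different sizes, which works cleanly precisely because $\MMS_i^d(M)$ is intrinsic to the pair $(v_i, M)$. The one bookkeeping point to keep in mind is that allocations are required to partition all of $M$, which is why the reverse direction ends with the redistribution step rather than simply discarding the dummies' bundles.
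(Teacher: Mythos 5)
Your proof is correct and follows essentially the same route as the paper's: restrict to the $n$ highest-priority agents and pad with empty bundles in one direction, and pad with dummy agents in the other. The only difference is that you explicitly handle the requirement that the final allocation partition all of $M$ by redistributing the dummies' goods, a bookkeeping step the paper glosses over; your version is, if anything, slightly more careful.
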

\begin{proof}
All valuations considered in this proof are additive.

First, assume $1$-out-of-$d$ MMS allocations exist for all instances with $n$ agents.
Consider any instance $\mathcal{I} = ([d], M, (v_1, \ldots, v_d))$ with $d$ agents.
Then for the instance $\mathcal{I}' = ([n], M, (v_1, \ldots, v_n))$, a $1$-out-of-$d$ MMS allocation $X'$ exists.
Hence, $v_i(X_i') \ge \MMS_i^d(M)$ for all $i \in [n]$.
Let $X$ be an allocation of $M$ over $d$ agents where $X_i = X_i'$ for $i \in [n]$ and $X_i = \emptyset$ for $i \in [d] \setminus [n]$. Then $X$ is a $T$-MMS allocation for $\mathcal{I}$.
Since the choice of $\mathcal{I}$ was arbitrary, we get that $T$-MMS allocations exist for all instances with $d$ agents.

Now assume $T$-MMS allocations exist for all instances with $d$ agents.
Consider any instance $\mathcal{I} = ([n], M, (v_1, \ldots, v_n))$.
Add $d-n$ dummy agents with arbitrary additive valuations $v_{n+1}, \ldots, v_d$.
Let $\mathcal{I}' = ([d], M, (v_1, \ldots, v_d))$ be the resulting instance.
Then a $T$-MMS allocation $X'$ exists for $\mathcal{I}'$, i.e., $v_i(X_i') \ge \MMS_i^d(M)$ for all $i \in [n]$.
Hence, $X = (X_1', \ldots, X_n')$ is a 1-out-of-$d$ MMS allocation for $\mathcal{I}$.
Since the choice of $\mathcal{I}$ was arbitrary, we get that 1-out-of-$d$ MMS allocations exist for all instances with $n$ agents.
\end{proof}

\begin{definition}
\label{defn:ordered}
    An instance $\mathcal{I}=(N, M, \mathcal{V})$ is \emph{ordered} if there exists an ordering $[g_1, \ldots, g_m]$ of the goods such that for all agents $i$, $v_i(g_1) \geq \ldots \geq v_i(g_m)$.
\end{definition}

\begin{definition}
\label{defn:normalized}
    An instance $\mathcal{I}=(N, M, \mathcal{V})$ is $d$-normalized if for all agents $i$, there exists a partition $P = (P_1, \ldots, P_d)$ of $M$ into $d$ bundles such that $v_i(P_j)=1$ for all $j \in [d]$.
\end{definition}

Barman and Krishnamurthy \cite{barman2020approximation} proved that when the goal is to guarantee a minimum threshold of $\alpha_i$ for each agent $i$, it is without loss of generality to assume the instance is ordered.
Akrami et al. \cite{akrami2023simplification} proved that when the goal is to find an approximate MMS allocation, it is without loss of generality to assume the instance is $n$-normalized and ordered. Their proof does not rely on the number of agents. Formally, for any $d \in \mathbb{N}$, when the goal is to find a $1$-out-of-$d$ MMS allocation, it is without loss of generality to assume the instance is ordered and $d$-normalized, as shown in the following lemma
(proof is in \cref{sec:prelims-extra}).

\begin{restatable}{lemma}{ordNorm}\label{ord-norm}
    For any $d \in \mathbb{N}$, if $1$-out-of-$d$ MMS allocations exist for $d$-normalized ordered instances, then $1$-out-of-$d$ MMS allocations exist for all instances.
\end{restatable}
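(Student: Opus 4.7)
The plan is to combine two well-known reductions (Barman--Krishnamurthy's per-agent permutation for ordering, and Akrami et al.'s per-agent scaling for normalization), but compose them in the right order so that both properties hold simultaneously in the final instance.

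I would normalize first and order second. For each agent $i$, fix a $d$-MMS partition $P^i = (P^i_1, \ldots, P^i_d)$ of $\I$; letting $f_i(g)$ denote the index with $g \in P^i_{f_i(g)}$, define $v'_i(g) \defeq v_i(g)/v_i(P^i_{f_i(g)})$. In the resulting instance $\I'$, the partition $P^i$ witnesses $v'_i(P^i_j)=1$ for all $j$, so $\I'$ is $d$-normalized and $\MMS^d_i(\I') \geq 1$. Moreover, for every $S \subseteq M$ we have $v_i(S) \geq v'_i(S) \cdot \MMS^d_i(\I)$, because $v_i(P^i_{f_i(g)}) \geq \MMS^d_i(\I)$ for every $g$. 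I would then form $\I''$ from $\I'$ by, for each agent $i$ independently, sorting her valuation into decreasing order along $g_1,\ldots,g_m$ (so $v''_i(g_k) = v'_i(g_{\pi_i(k)})$ for the sorting permutation $\pi_i$). Since this per-agent permutation preserves each agent's multiset of values, $\I''$ is ordered, and the permuted partition $Q^i_j \defeq \{g_k : g_{\pi_i(k)} \in P^i_j\}$ satisfies $v''_i(Q^i_j) = v'_i(P^i_j) = 1$, so $\I''$ is also $d$-normalized.

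By the hypothesis, $\I''$ admits a $1$-out-of-$d$ MMS allocation $X''$, giving $v''_i(X''_i) \geq \MMS^d_i(\I'') \geq 1$. I would then translate $X''$ to an allocation $X'$ on the same ground set by the standard Barman--Krishnamurthy procedure: process the goods in the decreasing order $g_1, \ldots, g_m$ of $\I''$, and at step $k$ give the agent $i$ who owns $g_k$ in $X''$ her favorite unallocated good under $v'_i$. At step $k$ only $k-1$ goods have been allocated, so at least one of agent $i$'s top-$k$ goods under $v'_i$ is still available, yielding $v'_i(X'_i) \geq v''_i(X''_i) \geq 1$. Combining this with the Step-1 inequality gives $v_i(X'_i) \geq v'_i(X'_i) \cdot \MMS^d_i(\I) \geq \MMS^d_i(\I)$, so $X'$ is a $1$-out-of-$d$ MMS allocation for $\I$.

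The main subtlety I anticipate is the order of the two reductions: if one ordered first and normalized second, the scaling (which divides by $v_i(P^i_{f_i(g)})$, a quantity that varies across goods for a single agent) would typically destroy the decreasing arrangement and require an extra re-ordering pass. Normalizing first side-steps this, because the subsequent per-agent sorting is just a permutation of each agent's multiset of values and preserves the normalization witness up to a relabeling. Everything else (the scaling inequality and the Barman--Krishnamurthy counting argument) is routine once the composition is set up correctly.
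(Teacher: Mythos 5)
Your proposal is correct and takes essentially the same route as the paper: normalize each agent's valuation by scaling within her $d$-MMS partition, then sort each agent's values to obtain an ordered instance that remains $d$-normalized, and pull the resulting allocation back via the Barman--Krishnamurthy picking argument combined with the scaling inequality $v_i(S) \ge v'_i(S)\cdot \MMS^d_i(M)$. The only detail you omit is the trivial edge case of agents with $\MMS^d_i = 0$ (for whom the scaling would divide by zero); the paper dispatches this in one sentence by ignoring such agents, since any bundle satisfies them.
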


From now on, even if not mentioned, we assume the instance is ordered and $d$-normalized. Without loss of generality, for all $i \in [n]$, we assume $v_i(1) \geq v_i(2) \geq \ldots \geq v_i(m)$. In Section \ref{prelim-1}, we prove some properties of ordered $d$-normalized instances for arbitrary $d$. In Section \ref{sec:4n-3}, we set $d=4\ceil{n/3}$ and prove $1$-out-of-$4\ceil{n/3}$ MMS allocations always exist.

\subsection{1-out-of-d MMS}\label{prelim-1}

We prove the existence of $1$-out-of-$4\ceil{n/3}$ MMS assuming that $n$ is a multiple of 3.
This is \wLoG{}, because otherwise we can copy one of the agents $1$ or $2$ times
(depending on $n \bmod 3$) so that the new instance has $n' \defeq 3\ceil{n/3}$ agents.
Since we prove the existence of $1$-out-of-$4n'/3$ MMS for the new instance,
we prove the existence of an allocation that gives all the agents $i$ in the original instance
their $\MMS^{4n'/3}_i(M) = \MMS^{4\lceil n/3 \rceil}_i(M)$ value.
Hence, the existence of $1$-out-of-$4\lceil n/3 \rceil$ MMS allocations follows.

Recall that for a given instance $\mathcal{I}$ and integer $d$, for each agent $i$, $P^i = (P^i_1, \ldots, P^i_d)$ is a $d$-MMS partition of agent $i$.
\begin{proposition}\label{prop:trivial}
    Given a $d$-normalized instance for all $i \in N$ and $k \in [d]$, we have
    \begin{enumerate}
        \item $v_i(P^i_k)=1$, and
        \item $v_i(M)=d$.
    \end{enumerate}
\end{proposition}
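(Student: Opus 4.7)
The plan is to derive both parts directly from the definition of a $d$-normalized instance (\cref{defn:normalized}) combined with additivity of the valuation.

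First I would prove part (2). By definition of $d$-normalization, for each agent $i$ there exists a partition $Q = (Q_1, \ldots, Q_d)$ of $M$ with $v_i(Q_j) = 1$ for every $j \in [d]$. Since $v_i$ is additive and $Q$ is a partition of $M$, we have $v_i(M) = \sum_{j=1}^{d} v_i(Q_j) = d$. This gives the second claim.

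Next I would prove part (1). The witness partition $Q$ above also shows that $\MMS_i^d(M) = \max_{P \in \Pi_d(M)} \min_{j} v_i(P_j) \ge \min_j v_i(Q_j) = 1$. By the definition of the MMS partition $P^i$, we have $v_i(P^i_k) \ge \MMS_i^d(M) \ge 1$ for every $k \in [d]$. On the other hand, additivity gives $\sum_{k=1}^{d} v_i(P^i_k) = v_i(M) = d$ by part (2). Since $d$ nonnegative reals each at least $1$ sum to exactly $d$, each must equal $1$, yielding $v_i(P^i_k) = 1$ for all $k$.

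There is no real obstacle here; the proposition is essentially a bookkeeping statement. The only subtlety is that one has to invoke the normalization in both directions: the existence of a uniform partition both lower-bounds the MMS value (so that the minimum over bundles of $P^i$ is at least $1$) and, via additivity, pins down the total mass $v_i(M) = d$ so that the lower bounds must be tight. The whole argument fits in a few lines and does not require any prior lemma beyond the definitions.
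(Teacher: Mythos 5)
Your proof is correct; the paper states this proposition without proof, treating it as immediate, and your argument is exactly the standard bookkeeping the authors have in mind: the witness partition from $d$-normalization gives $v_i(M)=d$ by additivity and lower-bounds $\MMS_i^d(M)$ by $1$, and then the $d$ bundles of $P^i$, each of value at least $\MMS_i^d(M)\ge 1$ and summing to $d$, must each equal $1$. Nothing is missing.
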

We note that it is without loss of generality to assume $m \geq 2d$. Otherwise, we can add $2d-m$ dummy goods with a value of $0$ for all the agents. The normalized and ordered properties of the instance would be preserved.
Consider the bag setting with $d$ bags as follow.
\begin{equation}
    \label{eq:C_i}
    C_k := \{k , 2d-k+1\} \text{ for } k\in [d]
\end{equation}
See Figure \ref{c-bags} for more intuition. Next, we show some important properties of the values of the goods in $C_k$'s.
\begin{figure}[t]
    \centering
    \begin{tikzpicture}
[scale=1,
 good/.style={circle, draw=black, thick, minimum size=30pt},
]

\draw[black, very thick] (-0.4-0.25,0.8) rectangle (0.4+0.25,3.4);

\node[good]      at (0,2.75)      {$\scriptstyle{2d}$};
\node[good]      at (0,1.5)      {$\scriptstyle{1}$};

\node at (0, 0.5) {$\scriptstyle{C_1}$};

\filldraw[color=black!60, fill=black!5, thick](1.6, 2) circle (0.02);
\filldraw[color=black!60, fill=black!5, thick](1.7, 2) circle (0.02);
\filldraw[color=black!60, fill=black!5, thick](1.8, 2) circle (0.02);

\draw[black, very thick] (-1.4-0.25 +4.5,0.8) rectangle (-0.6+0.25 +4.5,3.4);

\node[good, scale=0.75]      at (0 +3.5,2.75)      {$\scriptstyle{2d-k+1}$};
\node[good]      at (0 +3.5,1.5)      {$\scriptstyle{k}$};

\node at (0 +3.5, 0.5) {$\scriptstyle{C_k}$};

\filldraw[color=black!60, fill=black!5, thick](0.1+5, 2) circle (0.02);
\filldraw[color=black!60, fill=black!5, thick](0.2+5, 2) circle (0.02);
\filldraw[color=black!60, fill=black!5, thick](0.3+5, 2) circle (0.02);

\draw[black, very thick] (-0.4-5.25 +12,0.8) rectangle (0.4+0.25-5 +12,3.4);

\node[good]      at (0 +7,2.75)      {$\scriptstyle{d+1}$};
\node[good]      at (0 +7,1.5)      {$\scriptstyle{d}$};

\node at (0 +7, 0.5) {$\scriptstyle{C_d}$};

\end{tikzpicture}
    \captionAndDescr{For all $k \in [d]$, we define $C_k := \{k, 2d-k+1\}$.}
    \label{c-bags}
\end{figure}
\begin{proposition}\label{prop:exact}
    For all agents $i \in N$, we have
    \begin{enumerate}
        \item $v_i(1) \leq 1$, \label{exact:1}
        \item $v_i(C_d) \leq 1$, and \label{exact:2}
        \item $v_i(d+1) \leq \tf12$. \label{exact:3}
    \end{enumerate}
\end{proposition}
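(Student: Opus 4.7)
The three claims all follow from two basic facts: since the instance is $d$-normalized, every bundle of agent $i$'s MMS partition $P^i$ satisfies $v_i(P^i_k) = 1$ (\cref{prop:trivial}), and since the instance is ordered, $v_i(1) \ge v_i(2) \ge \cdots \ge v_i(m)$. The plan is to handle them in order, with the second claim being the workhorse.

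For claim~\ref{exact:1}, observe that good $1$ lies in some bundle $P^i_k$, so $v_i(1) \le v_i(P^i_k) = 1$ since values are nonnegative. For claim~\ref{exact:2}, which is the main step, the key idea is a pigeonhole argument applied to the top $d+1$ goods. The partition $P^i$ has $d$ bundles, so among the goods $\{1, 2, \ldots, d+1\}$ at least two, say $g$ and $g'$ with $g < g'$, must lie in a common bundle $P^i_k$. Because the instance is ordered and $g \le d$, $g' \le d+1$, we have $v_i(g) \ge v_i(d)$ and $v_i(g') \ge v_i(d+1)$. Adding these,
\[
v_i(C_d) \;=\; v_i(d) + v_i(d+1) \;\le\; v_i(g) + v_i(g') \;\le\; v_i(P^i_k) \;=\; 1.
\]
Claim~\ref{exact:3} is then an immediate consequence: since $v_i(d) \ge v_i(d+1)$ by the ordering, claim~\ref{exact:2} gives $2 v_i(d+1) \le v_i(d) + v_i(d+1) \le 1$, so $v_i(d+1) \le 1/2$.

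There is no real obstacle here; the only thing to be careful about is making sure the pigeonhole is applied to $d+1$ goods (not $d$), which is exactly what is needed to force two goods into the same bundle, and to note that we are guaranteed at least $2d$ goods exist in $M$ (as observed just before the proposition), so goods $d$ and $d+1$ are well-defined. All three parts should fit in a short, self-contained paragraph in the actual proof.
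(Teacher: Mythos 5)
Your proposal is correct and follows essentially the same argument as the paper: claim 1 via membership of good $1$ in some MMS bundle, claim 2 via pigeonhole on the top $d+1$ goods landing two in a common bundle and comparing with goods $d$ and $d+1$ using the ordering, and claim 3 as an immediate corollary of claim 2. No substantive differences.
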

\begin{proof}
For the first part,
    fix an agent $i$. Let $1 \in P^i_1$. By Proposition \ref{prop:trivial}, $v_i(1) \leq v_i(P^i_1)=1$.

    For the second part, by the pigeonhole principle, there exists a bundle $P^i_k$ and two goods $j, j' \in \{1, 2, \ldots, d+1\}$ such that $\{j,j'\} \subseteq P^i_k$. Without loss of generality, assume $j < j'$. We have
    \begin{align*}
        v_i(C_d) &= v_i(d) + v_i(d+1) \tag{$C_d = \{d, d+1\}$}\\
        &\leq v_i(j) + v_i(j') &\tag{$j \leq d$ and $j' \leq d+1$} \\
        &\leq v_i(P^i_k) =1. &\tag{$\{j,j'\} \in P^i_k$}
    \end{align*}

    For the third part, we have
    \begin{align*}
        1 &\geq v_i(C_d) = v_i(d) + v_i(d+1) \geq 2 v_i(d+1).
    \end{align*}
    Thus, $v_i(d+1) \leq \tf12$.
\end{proof}
\begin{lemma}
\label{lem:Csum}
For all $i \in N$ and $k \in [d]$, $\sum_{j=k}^d v_i(C_j) \le d-k+1$.
\end{lemma}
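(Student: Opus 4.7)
The plan is to prove $\sum_{j=k}^d v_i(C_j) \le d-k+1$ by rewriting the left-hand side and then applying a pigeonhole argument to the $d$-MMS partition of agent $i$, followed by a swap that uses the sorted order of goods. First I would observe that $\bigcup_{j=k}^d C_j = \{k, k+1, \ldots, 2d-k+1\}$ is a disjoint union, so $\sum_{j=k}^d v_i(C_j) = v_i(\{k, \ldots, 2d-k+1\})$; thus the task reduces to bounding the total value of the $2(d-k+1)$ ``middle'' goods.

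Next, for the partition $P^i = (P^i_1, \ldots, P^i_d)$, define $q_\ell \defeq |P^i_\ell \cap \{1, \ldots, 2d-k+1\}|$, so that $\sum_\ell q_\ell = 2d-k+1$. The combinatorial heart of the argument is the claim: after relabeling bundles so that $q_1 \ge q_2 \ge \cdots \ge q_d$, we have $q_1 + \cdots + q_{d-k+1} \ge 2(d-k+1)$. I would prove this by splitting on $q_{d-k+1}$: if $q_{d-k+1} \ge 2$, then each of the top $d-k+1$ counts is $\ge 2$ and the bound is immediate; otherwise $q_{d-k+1} \le 1$, so each of the bottom $k-1$ counts is $\le 1$ and sums to at most $k-1$, whence subtracting from $2d-k+1$ gives the same bound.

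With this in hand, set $U \defeq \bigcup_{\ell=1}^{d-k+1} \bigl(P^i_\ell \cap \{1, \ldots, 2d-k+1\}\bigr)$. Then $|U| \ge 2(d-k+1)$ and, by \cref{prop:trivial}, $v_i(U) \le \sum_{\ell=1}^{d-k+1} v_i(P^i_\ell) = d-k+1$. Finally I would invoke the ordered structure of the instance to compare windows: the complement $\{1, \ldots, 2d-k+1\} \setminus U$ has at most $k-1$ elements, all drawn from $\{1, \ldots, 2d-k+1\}$, so since $v_i(1) \ge v_i(2) \ge \cdots$ the top $k-1$ goods dominate any $k-1$ others, giving $v_i(\{1, \ldots, 2d-k+1\} \setminus U) \le v_i(\{1, \ldots, k-1\})$. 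Subtracting both quantities from $v_i(\{1, \ldots, 2d-k+1\})$ yields $v_i(\{k, \ldots, 2d-k+1\}) \le v_i(U) \le d-k+1$, as desired.

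The main obstacle I anticipate is the counting claim on the $q_\ell$'s. A naive attempt to find $d-k+1$ bundles whose union literally contains $\{k, \ldots, 2d-k+1\}$ fails because these goods can be scattered across all $d$ bundles; the trick is to widen the window to $\{1, \ldots, 2d-k+1\}$, select bundles by count $q_\ell$ (not by the narrower intersection), and then use the sorted order to trade the ``extra'' top goods $\{1, \ldots, k-1\}$ for whatever small subset of $\{1, \ldots, 2d-k+1\}$ is missed by $U$.
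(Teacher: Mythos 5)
Your proof is correct, but it takes a genuinely different route from the paper's. The paper argues by contradiction: it takes the largest index $\ell$ violating the bound, deduces $v_i(C_\ell) > 1$, and then shows that the goods $1, \ldots, \ell-1$ must occupy $\ell-1$ pairwise-distinct MMS bundles, none of which can also contain a good from $\{\ell, \ldots, 2d-\ell+1\}$; since $\sum_{j=\ell}^d v_i(C_j) > d-\ell+1$ forces the window $\{\ell, \ldots, 2d-\ell+1\}$ to touch at least $d-\ell+2$ further bundles, the total exceeds $d$. Your argument is direct: you pigeonhole on bundle \emph{sizes} rather than bundle \emph{values}, showing that some $d-k+1$ bundles of $P^i$ capture at least $2(d-k+1)$ of the goods $\{1, \ldots, 2d-k+1\}$, bound their total value by $d-k+1$ via \cref{prop:trivial}, and then use the ordering to trade the at most $k-1$ uncovered goods against the prefix $\{1, \ldots, k-1\}$. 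All steps check out (in particular, the domination $v_i(S) \le v_i(\{1,\ldots,k-1\})$ for any $|S| \le k-1$ is valid even when $S$ overlaps the prefix, since the $j\Th$ smallest index in $S$ is at least $j$). Your size-counting step is essentially the same device the paper later uses in Lemma \ref{powerful} for a different purpose, so your proof has the aesthetic advantage of unifying the two; the paper's proof, in exchange, avoids needing the ordered property of the instance beyond what is implicit in the definition of the $C_j$'s, and extracts the stronger intermediate fact that a violation at $\ell$ forces $v_i(C_\ell) > 1$.
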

\begin{proof}
For the sake of contradiction, assume the claim does not hold for some agent $i$ and let $\ell \geq 1$ be the largest index for which we have $\sum_{j = \ell}^d v_i(C_j) > d - \ell + 1$. Proposition \ref{prop:exact}(\ref{exact:2}) implies that $\ell < d$.
We have
\begin{align*}
    v_i(\ell) + v_i(2d-\ell+1) &= v_i(C_\ell) \\
    &= \sum_{j=\ell}^d v_i(C_j) - \sum_{j=\ell+1}^d v_i(C_j) \\
    &>  (d - \ell + 1) - (d - (\ell+1) + 1) &\tag{$\sum_{j = k}^d v_i(C_j) \leq d - k + 1$ for $k>\ell$} \\
    &=1.
\end{align*}
 For all $j,j' < \ell$, $v_i(j) + v_i(j') \geq v_i(\ell) + v_i(2d-\ell+1) > 1$. Therefore, $j$ and $j'$ cannot be in the same bundle in any $d$-MMS partition of $i$. For $j<\ell$, let $j \in P^i_j$. For all $j<\ell$ and $\ell \leq j' \leq 2d-\ell+1$,
 \begin{align*}
    v_i(j) + v_i(j') &\geq v_i(\ell) + v_i(2d-\ell+1) \\
    &= v_i(C_{\ell}) >1.
 \end{align*}
 Therefore, $j' \notin P_j$. Also, since $\sum_{j = \ell}^d v_i(C_j) > d - \ell + 1$, there are at least $t \geq d-\ell+2$ different bundles $Q_1, \ldots, Q_t$ in $P$ such that $Q_j \cap \{\ell, \ldots, 2d-\ell+1\} \neq \emptyset$. It is a contradiction since these $t \geq d-\ell+2$ bundles must be different from $P^i_1, \ldots P^i_{\ell-1}$.
\end{proof}

\section{1-out-of-\texorpdfstring{$\mathbf{4\lceil n/3 \rceil}$}{(4n/3)} MMS Algorithm}
\label{sec:4n-3}

\subsection{Technical Overview}\label{sec:tec}
For $\alpha$-MMS problem, the algorithms for $\alpha\ge 3/4$~\cite{ghodsi2018fair,garg2021improved,akrami2023simplification} utilize the two-phase approach: \emph{valid reductions} and \emph{bag filling}. In a valid reduction, the instance is reduced by removing an agent $a$ and a subset of goods $S$ such that $v_a(S)\ge \alpha$, and the MMS values of the remaining agents do not decrease.
The valid reduction phase is crucial for the bag filling to work in the analysis of these algorithms. However, it is not clear how to define valid reductions in the case of $1$-out-of-$d$ MMS because $d$ is not the same as the number of agents $n$. Therefore, we only use bag filling in our algorithm, which makes its analysis quite involved and entirely different than from $\alpha$-MMS algorithms.

\begin{algorithm}[htb]
    \caption{$1$-out-of-$4\ceil{n/3}$ MMS}
    \label{algo}
    \textbf{Input:} Ordered $4\ceil{n/3}$-normalized instance $\mathcal{I} = (N, [m], \mathcal{V})$.\\
    \textbf{Output:} Allocation $\hat{B} = (\hat{B}_1, \ldots, \hat{B}_n)$.
    \begin{algorithmic}[1]
    \For{$k \in [n]$}   \Comment{Initialization}
        \State $B_k = \{k, 2n-k+1\}$
    \EndFor
    \State $j \leftarrow 2n+1$
    \For{$k \in [n]$} \Comment{Bag-filling}
        \While{$\nexists i \in N$ s.t. $v_i(B_k) \geq 1$}
            \State $B_k \leftarrow B_k \cup \{j\}$
            \State $j \leftarrow j+1$
            \If{$j>m$}
                \State Terminate
            \EndIf
        \EndWhile
        \State Let $i \in N$ be s.t. $v_i(B_k) \geq 1$
        \State $\hat{B}_i \leftarrow B_k$
        \State $N \leftarrow N \setminus \{i\}$
    \EndFor
    \State $\hat{B}_n \leftarrow \hat{B}_n \cup (M \setminus [j])$
    \State \Return $\hat{B}$
    \end{algorithmic}
\end{algorithm}

The algorithm is described in Algorithm~\ref{algo}. Given an ordered $d$-normalized instance, we initialize $n$ bags (one for each agent) with the first $2n$ (highest valued) goods as follows.
\begin{equation}
    \label{eq:B_i}
    B_k := \{k , 2n-k+1\} \text{ for } k\in [n].
\end{equation}

See Figure \ref{fig:bags} for a better intuition. Then, we do bag-filling. That is, at each round $j$, we keep adding goods in decreasing values to the bag $B_j$ until some agent with no assigned bag values it at least $1$ (recall that $1$-out-of-$d$ MMS value of each agent is 1 in a $d$-normalized instance). Then, we allocate it to an arbitrary such agent.
We note that in contrast to \cite{garg2021improved,akrami2023simplification}, in the bag-filling phase we do not add arbitrary goods to the bags but we add the goods in the decreasing order of their values.

To prove that the output of Algorithm~\ref{algo} is $1$-out-of-$d$ MMS, it is sufficient to prove that we never run out of goods in any round or, equivalently, each agent receives a bag in some round. Towards contradiction, assume that agent $i$ does not receive a bag and the algorithm terminates. It can be easily argued that agent $i$'s value for at least one of the initial bags $\{B_1, \ldots, B_n\}$ must be strictly less than $1$. Let $\ell^*$ be the smallest such that $v_i(B_{\ell^*})<1$. We consider two cases based on the value of $v_i(2n-\ell^*)$. In Section \ref{positive}, we reach a contradiction assuming $v_i(2n-\ell^*) \geq 1/3$ and in Section \ref{negative}, we reach a contradiction assuming $v_i(2n-\ell^*)<1/3$.

Let $\hat{B}_j$ denote the $j$-th bag at the end of the algorithm. The overall idea is to categorize the bags into different groups and prove an upper bound on the value of each bag $(\hat{B}_j)$ for agent $i$ depending on which group it belongs to. Since $v_i(M)=n$ due to the instance being $d$-normalized, we get upper and lower bounds on the size of the groups. For example, if we know that for all bags $\hat{B}_j$ in a certain group $v_i(\hat{B}_j)<1$, we get the trivial upper bound of $n-1$ on the size of this group since $n = v_i(M) = \sum_{j \in [n]}v_i(\hat{B}_j)$.

Unfortunately, upper bounding the value of the bags is not enough to reach a contradiction in all cases. However, for these cases, we have upper and lower bounds on the size of each group, and in general, we show several additional properties to make it work.
For example, we obtain nontrivial upper bounds on the values of certain subsets of goods using the fact that all bundles in a $d$-MMS partition of agent $i$ have value $1$ (see Lemmas \ref{expowerful} and \ref{difficult-bound}).

\subsection{Algorithm}
Our algorithm (shown in Algorithm~\ref{algo}) consists of \emph{initialization} and \emph{bag-filling}. First, we remark that assuming $|M| \geq 2n$ is without loss of generality. This is because we can always add dummy goods to $M$ with a value of $0$ for all the agents. The resulting instance is ordered and $d$-normalized if the original instance has these properties.

As mentioned in Section~\ref{sec:tec}, the algorithm first initialize $n$ bags as in \eqref{eq:B_i} (see Figure~\ref{fig:bags}).
\begin{figure}[t]
\centering
\begin{tikzpicture}
[scale=1,
 good/.style={circle, draw=black, thick, minimum size=30pt},
]

\draw[black, very thick] (-0.4-0.25,0.8) rectangle (0.4+0.25,3.4);

\node[good]      at (0,2.75)      {$\scriptstyle{2n}$};
\node[good]      at (0,1.5)      {$\scriptstyle{1}$};

\node at (0, 0.5) {$\scriptstyle{B_1}$};

\filldraw[color=black!60, fill=black!5, thick](1.6, 2) circle (0.02);
\filldraw[color=black!60, fill=black!5, thick](1.7, 2) circle (0.02);
\filldraw[color=black!60, fill=black!5, thick](1.8, 2) circle (0.02);

\draw[black, very thick] (-1.4-0.25 +4.5,0.8) rectangle (-0.6+0.25 +4.5,3.4);

\node[good, scale=0.75]      at (0 +3.5,2.75)      {$\scriptstyle{2n-k+1}$};
\node[good]      at (0 +3.5,1.5)      {$\scriptstyle{k}$};

\node at (0 +3.5, 0.5) {$\scriptstyle{B_k}$};

\filldraw[color=black!60, fill=black!5, thick](0.1+5, 2) circle (0.02);
\filldraw[color=black!60, fill=black!5, thick](0.2+5, 2) circle (0.02);
\filldraw[color=black!60, fill=black!5, thick](0.3+5, 2) circle (0.02);

\draw[black, very thick] (-0.4-5.25 +12,0.8) rectangle (0.4+0.25-5 +12,3.4);

\node[good]      at (0 +7,2.75)      {$\scriptstyle{n+1}$};
\node[good]      at (0 +7,1.5)      {$\scriptstyle{n}$};

\node at (0 +7, 0.5) {$\scriptstyle{B_n}$};

\end{tikzpicture}
\Description{$n$ bags, where the $i\Th$ bag contains goods $\{i, 2n+1-i\}$.}
\caption{Bag initialization}
\label{fig:bags}
\end{figure}
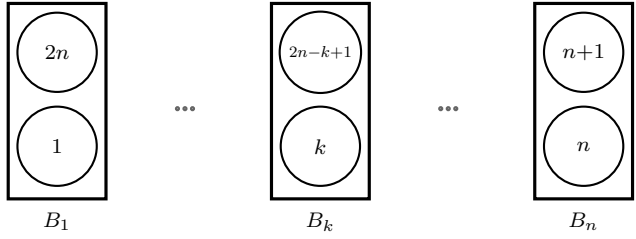
Then, in each round $j$ of bag-filling, we keep adding goods in decreasing value to the bag $B_j$ until some agent with no assigned bag values it at least $1$. Then, we allocate it to an arbitrary such agent.

In the rest of this section, we prove the following theorem, showing the correctness of the algorithm.

\begin{theorem}\label{thm:main}
    Given any ordered $4\ceil{n/3}$-normalized instance, Algorithm \ref{algo} returns a $1$-out-of-$4\ceil{n/3}$ MMS allocation.
\end{theorem}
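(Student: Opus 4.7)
\medskip
\noindent The plan is to argue by contradiction along the lines sketched in Section~\ref{sec:tec}. Suppose agent $i$ receives no bag. Since assigned bags come from the $n$ initial bags plus bag-filling additions, $i$'s failure means the algorithm exhausted the good supply mid-filling at some round $k \le n$; all $m$ goods are now distributed across the $k-1$ assigned bags $\hat{B}_1,\ldots,\hat{B}_{k-1}$ and the partially-filled bag $B_k$ (whose value is $<1$ for every remaining agent, in particular for $i$). By $d$-normalization with $d=4n/3$ we have $v_i(M)=4n/3$, which will be the quantity to be contradicted via a bag-by-bag value analysis.

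\medskip
\noindent Next I would pin down the index $\ell^*\in[n]$, defined as the smallest index with $v_i(B_{\ell^*})<1$. A short argument (tracking the round in which $i$ could first have been chosen, together with the tie-breaking freedom the algorithm allows) shows that if $v_i(B_k)\ge 1$ for every initial bag, then $i$ is eligible at some stage and must receive a bag; hence $\ell^*$ is well defined. With $\ell^*$ in hand, I would classify the bags $\hat{B}_1,\ldots,\hat{B}_{k-1}$ into groups based on (a)~whether their initial pair $\{k,2n-k+1\}$ predates $\ell^*$ or not, and (b)~which goods the bag-filling phase appended. For each assigned bag $\hat{B}_j$, two generic upper bounds are available: the ``just before the last good'' bound $v_i(\hat{B}_j) < 1 + v_i(g_{\text{last}})$ (because otherwise $i$ itself would have been eligible), and the ordering bound $v_i(g_{\text{last}}) \le v_i(2n+1)\le v_i(d+1)\le 1/2$ from Proposition~\ref{prop:exact}. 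These combine with bounds on the initial pairs (using Proposition~\ref{prop:exact} and Lemma~\ref{lem:Csum} applied to the $d$-MMS partition of $i$) to upper-bound $v_i(\hat{B}_j)$ in each group.

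\medskip
\noindent The analysis splits on $v_i(2n-\ell^*)$. In the \emph{large} case $v_i(2n-\ell^*)\ge 1/3$, monotonicity forces $v_i(g)<1/3$ for every good $g$ with index $>2n-\ell^*$, so the goods that bag-filling ever touches are small; together with the constraint that $v_i(B_{\ell^*})<1$ (which restricts how much the ``first'' good of $B_{\ell^*}$ can weigh), this tightly bounds the per-bag value and the sum comes in strictly below $4n/3$. In the \emph{small} case $v_i(2n-\ell^*)<1/3$, by contrast, the ``second good'' of every initial bag $B_k$ with $k<\ell^*$ is small, so the contribution of these bags is dominated by the first good $v_i(k)$, on which Lemma~\ref{lem:Csum} (applied to index $\ell^*$ and relating $\sum_j v_i(C_j)$ to $v_i(M)$) furnishes the needed upper bound; combined with the bag-filling bounds on the remaining bags, the total is again strictly less than $4n/3$.

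\medskip
\noindent The main obstacle will be the second case: the straightforward ``value $<1$ per bag'' counting only gives $n<4n/3$, which is not a contradiction, so the argument must extract additional slack from the structure of the $d$-MMS partition of $i$. This forces the use of the sharper inequality on $\sum_{j\ge\ell^*} v_i(C_j)$ in Lemma~\ref{lem:Csum} together with a careful case split on how many bags have an empty bag-filling phase versus a non-empty one. A secondary subtlety is that the algorithm's tie-breaking can hand a bag with $v_i\ge 1$ to another agent, so the argument must be counting-based (on total value distributed to $i$) rather than eligibility-based, and the categorization of bags has to be robust to this.
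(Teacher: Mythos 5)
Your skeleton matches the paper's: argue by contradiction for an agent $i$ with no bag, locate the first initial bag of value less than $1$ (your $\ell^*$), and split on whether $v_i(2n-\ell^*)$ is at least or below $1/3$. However, the substance of both cases is missing, and one of your intermediate claims is false. In the large case you assert that monotonicity forces $v_i(g)<1/3$ for all $g>2n-\ell^*$; it does not --- monotonicity only gives $v_i(g)\le v_i(2n-\ell^*)$, and the best available bound there is $v_i(g)<1/2$ (the paper's Observation~\ref{half-bound}). More importantly, you treat the large case as closable by per-bag bounds alone. It is not: the paper's per-bag bounds give roughly $4/3+x$ for bags in $A^+\cup A^1$ and $4/3-2x$ for bags in $A^2$, and the naive sum only yields a contradiction when $|A^2|\ge n/3$. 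When $|A^2|<n/3$ the paper must prove Lemma~\ref{expowerful}, a delicate counting statement over the $d$-MMS partition of $i$ restricted to the top $8n/3+\ell$ goods (via Lemmas~\ref{powerful}, \ref{k-t}, \ref{k+l}), to recover the missing slack. Nothing in your sketch produces this.

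In the small case you propose to extract the extra slack from Lemma~\ref{lem:Csum}. That lemma is used in the paper only for one auxiliary claim (Claim~\ref{claim-r-star}, to rule out $v_i(2n-r^*+1)\le 1/3$); the actual engine of the small case is Lemma~\ref{difficult-bound}, a new combinatorial bound of the form $v_i(\{r-s'+1,\ldots,r\}\cup\{2n-\ell-3s+2s'+1,\ldots,2n-\ell\})\le s$ proved by an intricate induction over the bundles of the MMS partition (Claims~\ref{claim-1}--\ref{claim-3}), together with the structural steps of introducing a second index $r$ (and $r^*$), the quantities $x,y$, and the inequality $r-\ell>2n/3$. None of this follows from Lemma~\ref{lem:Csum}, and your sketch gives no substitute. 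In short, the proposal correctly anticipates the architecture and correctly diagnoses that simple counting gives only $n<4n/3$ in one branch, but the two lemmas that actually close the argument --- and which constitute the bulk of the paper's proof --- are absent, so the proof as proposed does not go through.
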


To do so, it suffices to prove that we never run out of goods in bag-filling. Towards contradiction, assume that the algorithm stops before all agents receive a bundle. Let $i$ be an agent with no bundle. Let $\hat{B}_j$ be the $j$-th bundle after bag-filling.
\begin{observation}\label{upper-bound-left}
    For all $j,k$ such that $j \leq k \leq n$, $v_i(\hat{B}_j) \leq 1 + v_i(2n-k+1)$.
\end{observation}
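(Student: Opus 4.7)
The plan is to split on whether any good was added to $B_j$ by the inner while loop during round $j$. In both sub-cases $\hat{B}_j$ decomposes as the two ``initial'' goods plus possibly one trailing good, and each summand can be bounded cleanly using the ordered/normalized structure together with the termination condition of the while loop.

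First, I would handle the case in which the while loop for round $j$ never executes its body, so $\hat{B}_j = \{j, 2n-j+1\}$. Proposition~\ref{prop:exact}(\ref{exact:1}) yields $v_i(1) \leq 1$, and the ordered property gives $v_i(j) \leq v_i(1) \leq 1$, whence $v_i(\hat{B}_j) \leq 1 + v_i(2n-j+1)$. Since $k \geq j$ we have $2n-k+1 \leq 2n-j+1$, and the ordered property again gives $v_i(2n-k+1) \geq v_i(2n-j+1)$, closing this sub-case.

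In the other sub-case, at least one good was added to $B_j$ during bag-filling; let $g^{\ast}$ denote the last such good. Immediately before $g^{\ast}$ was appended the while condition held, so no agent still in $N$ at that moment valued $B_j \setminus \{g^{\ast}\}$ at $\geq 1$. The standing contradiction assumption is that $i$ never receives a bundle, so $i$ remains in $N$ throughout the entire algorithm; in particular $v_i(\hat{B}_j \setminus \{g^{\ast}\}) < 1$, and therefore $v_i(\hat{B}_j) < 1 + v_i(g^{\ast})$. Every good added during bag-filling has index at least $2n+1$ (the counter $j$ in the algorithm is initialised to $2n+1$ and only increments), whereas $k \geq 1$ gives $2n-k+1 \leq 2n < 2n+1$, so by the ordered property $v_i(g^{\ast}) \leq v_i(2n-k+1)$. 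Combining the two inequalities gives $v_i(\hat{B}_j) < 1 + v_i(2n-k+1)$.

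The one point that genuinely requires care is justifying that $i$ is still unallocated at every moment inside round $j$, which is what allows the while-loop invariant to be applied to $i$ specifically rather than only to some unspecified unallocated agent; once this is granted, both inequalities in either case are immediate consequences of the ordered assumption and Proposition~\ref{prop:exact}(\ref{exact:1}), so no real obstacle remains.
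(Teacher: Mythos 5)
Your proof is correct and follows essentially the same route as the paper's: split on whether the bag received any good during bag-filling, bound the bag minus its last good by $1$ (via Proposition~\ref{prop:exact}(\ref{exact:1}) in one case and the while-loop condition in the other), and bound the last good by $v_i(2n-k+1)$ using the ordered property. The only difference is that you make explicit the point that agent $i$ remains unallocated throughout, which the paper leaves implicit in the phrase ``otherwise, $g$ would not be added to $\hat{B}_j$.''
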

\begin{proof}
    Let $g$ be the good with the largest index in $\hat{B}_j$. If $g = 2n-j+1$, $v_i(\hat{B}_j \setminus \{g\}) = v_i(j)  \leq 1$ by Proposition \ref{prop:exact}(\ref{exact:1}). If $g > 2n-j+1$, meaning that $g$ was added to $\hat{B}_j$ during bag-filling, then $v_i(\hat{B}_j \setminus \{g\}) < 1$. Otherwise, $g$ would not be added to $\hat{B}_j$. Therefore,
    \begin{align*}
        v_i(\hat{B_j}) &= v_i(\hat{B}_j \setminus \{g\}) + v_i(g) \\
        &\leq 1 + v_i(2n-k+1). &\tag{$v_i(\hat{B}_j \setminus \{g\}) \leq 1$ and $g \geq 2n-k+1$}
    \end{align*}
\end{proof}
\begin{observation}\label{upper-bound-right}
    For all $j,k$ such that $k \leq j \leq n$, $v_i(\hat{B}_j) \leq \max(1 + v_i(2n-k+1), 2v_i(k))$.
\end{observation}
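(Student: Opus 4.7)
The plan is to mirror the case analysis already used for Observation~\ref{upper-bound-left}, but to extract the additional $2 v_i(k)$ branch in the maximum by exploiting the new hypothesis $k \le j$ together with the ordered property of the instance.

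First I would let $g$ be the good with the largest index in $\hat{B}_j$. Because every good introduced during bag-filling of $B_j$ has index at least $2n+1$, while the two initialization goods have indices $j$ and $2n-j+1$, there are only two possibilities for $g$: either $g = 2n-j+1$, meaning no good was ever appended to $B_j$ during the bag-filling phase, or $g \ge 2n+1$, meaning at least one good was appended.

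In the first sub-case, $\hat{B}_j$ is exactly the initial pair $\{j,\, 2n-j+1\}$. The ordered property together with $j \ge k$ gives $v_i(j) \le v_i(k)$, and since $k \le n$ and $j \le n$ we also have $2n-j+1 \ge n+1 \ge k+1$, so $v_i(2n-j+1) \le v_i(k)$ as well. Summing yields $v_i(\hat{B}_j) \le 2 v_i(k)$. In the second sub-case, the bound is obtained exactly as in the proof of Observation~\ref{upper-bound-left}: the while-loop guard in Algorithm~\ref{algo} is what caused $g$ to be appended, so at that moment no remaining agent valued $\hat{B}_j \setminus \{g\}$ at $1$ or above; in particular $v_i(\hat{B}_j \setminus \{g\}) < 1$, since $i$ by assumption never receives a bundle. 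Because $g \ge 2n+1 > 2n-k+1$ (using $k \ge 1$) and the instance is ordered, $v_i(g) \le v_i(2n-k+1)$. Summing gives $v_i(\hat{B}_j) \le 1 + v_i(2n-k+1)$.

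Taking the maximum of the two sub-case bounds gives the claim. The only new ingredient relative to Observation~\ref{upper-bound-left} is the chain $2n-j+1 \ge n+1 \ge k+1$ in the first sub-case, which is where the hypothesis $k \le n$ is consumed; it is elementary, and I do not anticipate a genuine obstacle. The structural point worth emphasizing is simply that the initialization contributes two goods of low index while bag-filling contributes only high-index goods, so the ``$2 v_i(k)$'' regime can only be reached when bag-filling did nothing to $B_j$.
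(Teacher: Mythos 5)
Your proof is correct and takes essentially the same approach as the paper's: the identical case split on whether $\hat{B}_j = B_j$, giving the bound $2v_i(k)$ when the bag is unchanged and $1 + v_i(2n-k+1)$ via the while-loop guard when a good was appended. The only cosmetic difference is that you justify $v_i(2n-j+1) \le v_i(k)$ through the chain $2n-j+1 \ge n+1 \ge k+1$ rather than the paper's $2n-j+1 > j \ge k$; both are valid.
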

\begin{proof}
    First, assume $\hat{B}_j \neq B_j$ and $g$ be the last good added to $\hat{B}_j$. We have $v_i(\hat{B}_j \setminus \{g\}) < 1$. Otherwise, $g$ would not be added to $\hat{B}_j$. Therefore,
    \begin{align*}
        v_i(\hat{B_j}) &= v_i(\hat{B}_j \setminus \{g\}) + v_i(g) \\
        &< 1 + v_i(2n-k+1). &\tag{$v_i(\hat{B}_j \setminus \{g\}) < 1$ and $g > 2n-k+1$}
    \end{align*}
    Now assume $\hat{B}_j = B_j$. We have
    \begin{align*}
        v_i(\hat{B}_j) &= v_i(B_j) \\
        &= v_i(j) + v_i(2n-j+1) \\
        &\leq 2v_i(k). &\tag{$2n-j+1 > j \geq k$}
    \end{align*}
    Hence, $v_i(\hat{B}_j) \leq \max(1 + v_i(2n-k+1), 2v_i(k))$.
\end{proof}
\begin{observation}\label{less-than-one}
    There exists a bag $B_j$, such that $v_i(B_j) < 1$.
\end{observation}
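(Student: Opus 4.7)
The plan is to argue by contradiction: assume every initial bag $B_j$ for $j \in [n]$ satisfies $v_i(B_j) \ge 1$, and show this forces agent $i$ to receive a bag, contradicting our standing hypothesis that $i$ does not.

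The key observation I would use is that agent $i$ never leaves the set $N$ during the execution of Algorithm \ref{algo}, because agents are removed from $N$ only at the moment they are assigned a bag, and by hypothesis $i$ is never assigned one. I would then walk through the bag-filling for-loop iteration by iteration. At the start of iteration $k \in [n]$, the bag $B_k$ is still in its initial state $\{k,\, 2n-k+1\}$, we have $i \in N$, and by the contrary assumption $v_i(B_k) \ge 1$. Hence the guard of the while loop (``$\nexists i' \in N$ with $v_{i'}(B_k) \ge 1$'') is false, so the while loop does not execute in this iteration. In particular, no new goods are added to $B_k$, and the early-termination branch $j > m$ is not triggered. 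The algorithm then proceeds to select some agent $i' \in N$ with $v_{i'}(B_k) \ge 1$ and remove $i'$ from $N$.

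Since this reasoning applies to every iteration $k \in [n]$, the removal step executes exactly $n$ times, so every agent, including $i$, is eventually removed from $N$, i.e., receives a bag. This contradicts the hypothesis on $i$, establishing the observation.

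There is no substantive obstacle here: the statement is essentially a sanity check on the control flow of Algorithm \ref{algo}. The only minor subtlety worth verifying is that, under the contrary assumption, the algorithm cannot abort via the $j > m$ branch; this is immediate because the while loop never executes under the contrary assumption, so no additional goods are consumed and the index $j$ never advances past $2n+1$.
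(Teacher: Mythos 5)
Your proposal is correct and follows essentially the same route as the paper's (one-line) proof: if every initial bag had value at least $1$ to agent $i$, then $i$ would remain a witness making the while-loop guard false in every iteration, so all $n$ bags would be assigned unchanged and $i$ would necessarily receive one, a contradiction. Your version merely spells out the control-flow details (including why the $j > m$ branch cannot trigger) that the paper leaves implicit.
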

\begin{proof}
    Otherwise, the algorithm would allocate the remaining bag with the smallest index to agent $i$.
\end{proof}

Let $\ell^*$ be the smallest such that $v_i(B_{\ell^*+1}) < 1$. I.e., $B_{\ell^*+1}$ is the leftmost bag in Figure \ref{fig:bags} with a value less than $1$ to agent $i$. In Section \ref{negative}, we reach a contradiction assuming $v_i(2n-\ell^*)<1/3$ and prove Theorem \ref{contradict-1}.
\begin{restatable}{theorem}{contradictOne}\label{contradict-1}
    If Algorithm \ref{algo} does not allocate a bag to some agent $i$, then $v_i(2n-\ell^*) \geq 1/3$ where $\ell^*$ is the smallest index such that $v_i(B_{\ell^*+1}) < 1$.
\end{restatable}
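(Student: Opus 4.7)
The plan is to argue by contradiction: assume agent $i$ receives no bag and set $\alpha \defeq v_i(2n-\ell^*) < 1/3$. I will derive a contradiction by comparing the total value of the final bags plus the un-allocated remainder against $v_i(M) = 4\lceil n/3\rceil$, since these sets partition $M$ and therefore have $v_i$-value exactly $v_i(M)$.

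First, I will draw out the structural consequences of $\alpha < 1/3$. Since $v_i(B_{\ell^*}) = v_i(\ell^*) + v_i(2n-\ell^*+1) \ge 1$ and $v_i(2n-\ell^*+1) \le v_i(2n-\ell^*) = \alpha$, monotonicity of $v_i$ along the sorted goods gives $v_i(k) \ge 1 - \alpha > 2/3$ for every $k \in [\ell^*]$. Each bundle in a $d$-MMS partition $P^i$ of agent $i$ has value $1$ by $d$-normalization and cannot hold two goods of value $> 2/3$, so the top $\ell^*$ goods occupy $\ell^*$ distinct bundles of $P^i$, yielding $\ell^* \le d = 4\lceil n/3 \rceil$. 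From $v_i(B_{\ell^*+1}) < 1$ I also obtain $v_i(\ell^*+1) < 1-\alpha$. Because $i$ remains in $N$ throughout (by our assumption) and $v_i(B_j) \ge 1$ for $j \le \ell^*$, the while loop does not trigger at those steps, so $\hat{B}_j = B_j$ for $j \le \ell^*$.

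Next I will bound $v_i(\hat{B}_j)$ on two ranges. For $j \le \ell^*$, $v_i(\hat{B}_j) = v_i(j) + v_i(2n-j+1) \le 1 + \alpha$. For $j > \ell^*$, Observation~\ref{upper-bound-right} with $k = \ell^*+1$ gives $v_i(\hat{B}_j) \le \max(1+\alpha,\, 2(1-\alpha))$. Combining these with $v_i(\text{failed bag}) < 1$ (otherwise $i$ would be a valid choice at the terminating step) and $\sum_j v_i(\hat{B}_j) + v_i(\text{failed bag}) = v_i(M) = 4\lceil n/3\rceil$ is the intended mechanism for contradiction.

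The main obstacle is that these crude bounds fall just short: summing them yields the contradiction condition $(1-3\alpha)\ell^* \ge 2n/3 + 1 - 2n\alpha$, which holds only when $\ell^*$ is large and $\alpha$ is small. To close the gap I plan two refinements. First, I will split into the regime $v_i(\ell^*+1) \le 1/2$, where every right bag also satisfies $v_i(\hat{B}_j) \le 1+\alpha$ (so all $n$ bags get the tighter $1+\alpha$ bound), and the regime $v_i(\ell^*+1) > 1/2$, where good $\ell^*+1$ itself occupies a $P^i$-bundle distinct from those of the top $\ell^*$ goods, strengthening the $\ell^* \le d$ constraint. Second, instead of bounding each $v_i(2n-j+1)$ crudely by $\alpha$, I will telescope using the $P^i$ partition — each bundle housing a top good of value $> 2/3$ has at most $1 - v_i(k) < 1/3$ remaining capacity — which constrains where the middle goods $\{2n-\ell^*+1,\ldots,2n\}$ can reside and sharpens the overall sum. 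The threshold $1/3$ in the statement should emerge precisely from $d/n - 1 = 4/3 - 1 = 1/3$ combined with the unit bundle size in $P^i$.
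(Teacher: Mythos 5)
Your setup and your first sub-case are genuinely correct: with $\alpha \defeq v_i(2n-\ell^*) < 1/3$, Observations~\ref{upper-bound-left} and~\ref{upper-bound-right} (applied with $k=\ell^*$ and $k=\ell^*+1$) bound every final bag by $\max(1+\alpha,\,2v_i(\ell^*+1))$, so if $v_i(\ell^*+1)\le 1/2$ every one of the $n$ bags is worth at most $1+\alpha<4/3$ and $v_i(M)<4n/3$ is an immediate contradiction. The gap is that everything after this point is a plan rather than an argument, and the remaining regime is exactly where all the difficulty of this theorem lives: unmodified initial bags $B_j$ with $j>\ell^*$ and $v_i(j)>2/3$ can have value up to $2(1-\alpha)>4/3$, and as you concede, your per-bag bounds then give a contradiction only when $\ell^*$ is large (roughly $\ell^*>2n/3$). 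Your two refinements do not close this. The fact that each MMS bundle containing a good of value $>2/3$ has residual capacity $<1/3$ only controls the \emph{global} value of the non-top goods; summing those capacities merely reproduces $v_i(M)=4n/3$ with no slack, and it says nothing about the values of the specific partner goods $2n-j+1$ sitting inside the dangerous unmodified bags, which is what the bag sum actually depends on. The claim that the threshold $1/3$ ``emerges from $d/n-1$'' via this telescoping is an aspiration, not a derivation.

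For comparison, the paper's proof of this direction does not case on $v_i(\ell^*+1)$ at all; it tracks where the right-half goods cross the value $1/3$. It first shows $v_i(2n-r^*+1)>1/3$ for the rightmost under-full bag $B_{r^*}$, using Lemma~\ref{lem:Csum} (itself a nontrivial consequence of the MMS partition applied to the pairs $C_k=\{k,2d-k+1\}$); it then introduces indices $\ell\le r$ and parameters $x,y$ measuring the gap to $1/3$ on each side, proves $r-\ell>2n/3$, and finally needs Lemma~\ref{difficult-bound} --- a delicate counting argument over the MMS bundles meeting $\{1,\dots,r\}$, including the four-case induction of Claim~\ref{claim-1} and the companion Claims~\ref{claim-2}, \ref{claim-22}, and~\ref{claim-3} --- to bound the value of specific blocks of goods in the middle region before the final weighted sum closes. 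Neither this machinery nor any workable substitute for it appears in your proposal, so the proof has a genuine hole at its core.
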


In Section \ref{positive}, we reach a contradiction assuming $v_i(2n-\ell^*) \geq 1/3$ and prove Theorem \ref{contradict-2}.
\begin{restatable}{theorem}{contradictTwo}\label{contradict-2}
    If Algorithm \ref{algo} does not allocate a bag to some agent $i$, then $v_i(2n-\ell^*) < 1/3$ where $\ell^*$ is smallest such that $v_i(B_{\ell^*+1}) < 1$.
\end{restatable}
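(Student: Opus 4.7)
My plan for \cref{contradict-2} is a proof by contradiction. Assume the algorithm fails to allocate a bag to agent $i$ and $a \defeq v_i(2n - \ell^*) \ge 1/3$. Let $b \defeq v_i(\ell^*+1)$. The defining condition $v_i(B_{\ell^*+1}) = a + b < 1$ gives $b < 1 - a$; combined with $b \ge a$ (from monotonicity and $\ell^* + 1 \le 2n - \ell^*$), this forces $a \in [1/3,\, 1/2)$. The goal is to upper bound $\sum_{j=1}^n v_i(\hat B_j)$ strictly below $v_i(M) = 4n/3$, contradicting the identity $\sum_{j=1}^n v_i(\hat B_j) = v_i(M)$.

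I would classify the $n$ bags at termination by their role. Let $k^*$ be the iteration at which the algorithm ran out of goods; since $v_i(B_j) \ge 1$ for $j \le \ell^*$, we have $k^* > \ell^*$. For $j \le \ell^*$, no filling occurs and $v_i(\hat B_j) = v_i(j) + v_i(2n-j+1) \le 1 + a$ by Proposition~\ref{prop:exact}(\ref{exact:1}) and monotonicity. For $\ell^* < j < k^*$, Observation~\ref{upper-bound-right} with $k = \ell^* + 1$ gives $v_i(\hat B_j) \le \max(1+a,\, 2b) = 1 + a$, since $a \ge 1/3$ and $b < 1 - a$ together imply $2b < 2(1-a) \le 1 + a$. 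For $j = k^*$, the terminating bag satisfies $v_i(B_{k^*}) < 1$ (otherwise $i$ would have been assigned it). For $j > k^*$, the bag is untouched and both $j > \ell^*$ and $2n - j + 1 > \ell^*$ force $v_i(B_j) \le 2b$.

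Summing yields $\sum_j v_i(\hat B_j) < (k^* - 1)(1 + a) + 1 + (n - k^*) \cdot 2b$, and using $b < 1 - a$ it is further bounded by $(k^* - 1)(1 + a) + 1 + (n - k^*)(2 - 2a)$. The main obstacle is that this relaxed bound is at most $4n/3$ only when $a \le \tfrac{n}{3(n-1)} = \tfrac{1}{3} + \tfrac{1}{3(n-1)}$ (the maximum over $k^*$, achieved at $k^* = n$ when $a > 1/3$). Hence the bag-wise argument alone closes the gap only in an $O(1/n)$-neighborhood of $a = 1/3$, leaving most of $[1/3,\, 1/2)$ unresolved for $n > 3$.

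To push through over the full range, I would invoke the $d$-MMS partition constraints—Proposition~\ref{prop:exact} and Lemma~\ref{lem:Csum}—which bound $\sum_{j=k}^{2d - k + 1} v_i(j) \le d - k + 1$ for every $k \in [d]$. A case split on whether $\ell^* \ge 2n/3$ (equivalently, $2n - \ell^* \le d$) is natural: in one case Proposition~\ref{prop:exact}(\ref{exact:3}) pins down the top-valued goods via $a \le 1/2$, and in the other Lemma~\ref{lem:Csum} at $k = 2n - \ell^*$ bounds the concentration of $v_i$ mass in the middle range. Feeding these into the bag-wise bounds—either sharpening $\sum_{j \le \ell^*} v_i(B_j)$ from above or controlling the filler total $F = v_i(M) - \sum_{j=1}^{2n} v_i(j)$—removes the residual slack and yields $\sum_j v_i(\hat B_j) < 4n/3$, completing the contradiction. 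The most delicate step is choosing the index $k$ in Lemma~\ref{lem:Csum} together with the bag-wise refinements so that the sharpening applies uniformly across all configurations of $(\ell^*, k^*)$.
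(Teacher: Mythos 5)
Your setup is sound, and you correctly diagnose the central difficulty: with $a \defeq v_i(2n-\ell^*) \ge 1/3$ and $b \defeq v_i(\ell^*+1)$, the uniform per-bag bounds ($1+a$ for processed bags, $2b$ for untouched ones) force $\sum_j v_i(\hat{B}_j) < 4n/3$ only when $a$ is within $O(1/n)$ of $1/3$. But the step you propose to close the rest of the range $[1/3, 1/2)$ --- a case split on $\ell^* \ge 2n/3$ combined with \cref{lem:Csum} --- is a gesture rather than an argument, and it does not work as stated. \Cref{lem:Csum} is the tool the paper uses for the \emph{complementary} case $v_i(2n-\ell^*) < 1/3$ (\cref{contradict-1}); in the present case your suggested application at $k = 2n-\ell^*$ bounds the total value of roughly $2\ell^* - 4n/3$ goods, all of index at least $2n-\ell^*$, by half their count --- which is already implied by the pointwise bound $v_i(j) < 1/2$ for $j \ge 2n-\ell^*$ (the paper's Observation~\ref{half-bound}), so it yields essentially no new savings. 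The slack you need to recover grows like $(a - 1/3)$ times the number of bags, and neither \cref{lem:Csum} nor \cref{prop:exact} produces a savings of that order.

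The paper's actual resolution is also structurally different from your decomposition by the failure index $k^*$. It partitions the bags into $A^+$ (initial value $\ge 1$), $A^1$ (bags receiving \emph{exactly one} filler good), and $A^2$ (everything else); the distinction your classification loses is that every bag in $A^2$ --- in particular every bag receiving two or more filler goods --- has value strictly below $4/3 - 2x$ where $x = a - 1/3$ (\cref{save-2}), while bags in $A^+ \cup A^1$ are bounded only by $4/3 + x$ (\cref{save-all}). The contradiction comes from balancing the sizes of these groups: writing $|A^1| = 2n/3 + \ell$, the case $\ell + \ell^* \le 0$ is immediate, and otherwise one must exhibit $3(\ell+\ell^*)$ bags of total value at most $4(\ell+\ell^*)$. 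That step is the heart of the proof and rests on \cref{expowerful}: a bound on the value of a carefully chosen set of goods, obtained by restricting the $4n/3$-MMS partition to the top $8n/3+\ell$ goods, counting its singleton bundles (the parameter $t$), and running swapping and counting arguments (Lemmas~\ref{powerful}, \ref{k-t} and \ref{k+l}). Nothing in your sketch anticipates the parameters $\ell$ and $t$ or the restricted-partition counting, and without a substitute for \cref{expowerful} the proof does not close. The elementary portion of your argument is correct, but the portion that would actually prove the theorem is missing.
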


By Theorems \ref{contradict-1} and \ref{contradict-2}, agent $i$ who receives no bundle by the end of Algorithm \ref{algo} does not exist, and Theorem \ref{thm:main} follows.

\subsection{\texorpdfstring{$\mathbf{v_i(2n-\ell^*) \geq 1/3}$}{vi(2n-l*) >= 1/3}}
\label{positive}

In this section we assume $v_i(2n-\ell^*) = 1/3 + x$ for $x \geq 0$. 
 We define $A^+ := \{B_1, B_2, \ldots, B_{\ell^*}\}$; see Figure~\ref{k-picture}.
\begin{figure}[t]
    \centering
    \scalebox{\figScale}{\begin{tikzpicture}
[scale=1,
 good/.style={circle, draw=black, thick, minimum size=30pt},
]

\draw[blue, fill=blue!5, very thick] (-0.4-0.35,3.5) rectangle (0.26+4,0.3);
\node at (1.8, 0) {\textcolor{blue}{$\scriptstyle{A^+}$}};

\draw[black, very thick] (-0.4-0.25,0.8) rectangle (0.4+0.25,3.4);

\node[good]      at (0,2.75)      {$\scriptstyle{2n}$};
\node[good]      at (0,1.5)      {$\scriptstyle{1}$};

\node at (0, 0.5) {$\scriptstyle{B_1}$};

\filldraw[color=black!60, fill=black!5, thick](1.6, 2) circle (0.02);
\filldraw[color=black!60, fill=black!5, thick](1.7, 2) circle (0.02);
\filldraw[color=black!60, fill=black!5, thick](1.8, 2) circle (0.02);

\draw[black, very thick] (-1.4-0.25 +4.5,0.8) rectangle (-0.6+0.25 +4.5,3.4);

\node[good, scale=0.86]      at (0 +3.5,2.75)      {$\scriptstyle{2n-\ell^*+1}$};
\node[good]      at (0 +3.5,1.5)      {$\scriptstyle{\ell^*}$};

\node at (0 +3.5, 0.5) {$\scriptstyle{B_{\ell^*}}$};

\draw[blue, fill=blue!5, very thick] (0.35+4,3.5) rectangle (4.75+5,0.3);
\node at (7, 0) {\textcolor{blue}{$\scriptstyle{\{B_{\ell^*+1}, \ldots, B_s\}\subset A^1 \cup A^2}$}};

\draw[black, very thick] (-2.3-0.25 +7,0.8) rectangle (-1.5+0.25 +7,3.4);

\node[good,scale=0.95]      at (0 +5.1,2.75)      {$\scriptstyle{2n-\ell^*}$};
\node[good]      at (0 +5.1,1.5)      {$\scriptstyle{\ell^*+1}$};

\node at (0 +5.1, 0.5) {$\scriptstyle{B_{\ell^*+1}}$};

\filldraw[color=black!60, fill=black!5, thick](2+4.9, 2) circle (0.02);
\filldraw[color=black!60, fill=black!5, thick](2.1+4.9, 2) circle (0.02);
\filldraw[color=black!60, fill=black!5, thick](2.2+4.9, 2) circle (0.02);

\draw[black, very thick] (-3.4-0.25 +12,0.8) rectangle (-2.6+0.25 +12,3.4);

\node[good, scale=0.9]      at (0 +9,2.75)      {$\scriptstyle{2n-s+1}$};
\node[good]      at (0 +9,1.5)      {$\scriptstyle{s}$};

\node at (0 +9, 0.5) {$\scriptstyle{B_s}$};

\draw[blue, fill=blue!5, very thick] (4.85+5,3.5) rectangle (7.75+5,0.3);
\node at (11.2, 0) {\textcolor{blue}{$\scriptstyle{\{B_{s+1}, \ldots, B_n\}\subset A^2}$}};

\filldraw[color=black!60, fill=black!5, thick](5+5.3, 2) circle (0.02);
\filldraw[color=black!60, fill=black!5, thick](5.1+5.3, 2) circle (0.02);
\filldraw[color=black!60, fill=black!5, thick](5.2+5.3, 2) circle (0.02);

\draw[black, very thick] (-0.4-0.25 +12,0.8) rectangle (0.4+0.25 +12,3.4);

\node[good]      at (0 +12,2.75)      {$\scriptstyle{n+1}$};
\node[good]      at (0 +12,1.5)      {$\scriptstyle{n}$};

\node at (0 +12, 0.5) {$\scriptstyle{B_n}$};

\end{tikzpicture}}
    \captionAndDescr{An illustration of which group each bag belongs to.}
    \label{k-picture}
\end{figure}
\begin{observation}\label{no-change}
    For all $B_j \in A^+$, $\hat{B}_j = B_j$.
\end{observation}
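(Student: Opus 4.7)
The plan is to argue directly from the definition of $\ell^*$ and the fact that agent $i$ never receives a bundle. Recall $\ell^*$ is the smallest index for which $v_i(B_{\ell^*+1}) < 1$, so by minimality, $v_i(B_j) \ge 1$ for every $j \in [\ell^*]$, i.e.\ for every $B_j \in A^+$.

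I would then examine iteration $k = j$ of the bag-filling loop in Algorithm~\ref{algo}. At the start of this iteration, agent $i$ has not yet been removed from $N$, because by the contradiction hypothesis of this section agent $i$ never receives a bundle during the entire run. Since $v_i(B_j) \ge 1$ and $i \in N$ at this moment, the condition ``$\nexists i \in N$ s.t.\ $v_i(B_k) \geq 1$'' of the \textbf{while} loop is false immediately upon entering iteration $k = j$. Consequently, the body of the while loop never executes in iteration $j$, no new good is appended to $B_j$, and some agent (possibly but not necessarily $i$) receives $B_j$ unmodified. Therefore $\hat{B}_j = B_j$.

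There is essentially no obstacle here: the only subtlety is making sure that the bag $B_j$ has not already been renamed $\hat{B}_{j'}$ for some earlier $j' < j$, but Algorithm~\ref{algo} processes the bags $B_1, B_2, \ldots, B_n$ in order, each in its own iteration, so $\hat{B}_j$ is precisely what $B_j$ looks like at the end of iteration $k = j$, which as shown above equals its initial value $\{j, 2n-j+1\}$.
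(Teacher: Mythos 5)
Your proposal is correct and follows essentially the same argument as the paper: since $v_i(B_j)\ge 1$ for every $B_j\in A^+$ and agent $i$ remains unsatisfied throughout, the while-loop condition fails immediately at iteration $j$, so no good is ever added to $B_j$. The paper's proof is just a more terse phrasing of exactly this reasoning.
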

\begin{proof}
    For all $B_j \in A^+$, $v_i(B_j) \geq 1$. Since $i$ did not receive any bundle, $B_j$ must have been assigned to some other agent, and no good needed to be added to $B_j$ in bag-filling since there is an agent (namely $i$) with no bag who values $B_j$ at least $1$.
\end{proof}
\begin{observation}\label{half-bound}
    For all $j\geq 2n-\ell^*$, $v_i(j) < 1/2$.
\end{observation}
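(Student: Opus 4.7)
The plan is to deduce the bound directly from the definition of $\ell^*$ together with the ordered structure of the instance. First I would observe that by \eqref{eq:B_i}, the bag $B_{\ell^*+1}$ is exactly $\{\ell^*+1,\, 2n-\ell^*\}$, and by the defining property of $\ell^*$ we have $v_i(B_{\ell^*+1}) < 1$, i.e.\ $v_i(\ell^*+1) + v_i(2n-\ell^*) < 1$.

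Next I would argue that the two goods inside $B_{\ell^*+1}$ are ordered so that good $\ell^*+1$ comes first. Since $B_{\ell^*+1}$ is one of the $n$ initial bags, $\ell^*+1 \le n$, hence $\ell^*+1 \le n < 2n-\ell^*$ (note $\ell^*+1 = 2n-\ell^*$ would force $\ell^* = n - 1/2$, which is impossible). By the ordered property of the instance, $v_i(\ell^*+1) \ge v_i(2n-\ell^*)$, and combining this with the inequality from the previous paragraph gives $2\,v_i(2n-\ell^*) \le v_i(\ell^*+1) + v_i(2n-\ell^*) < 1$, so $v_i(2n-\ell^*) < 1/2$.

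Finally, for any $j \ge 2n-\ell^*$, the ordered property yields $v_i(j) \le v_i(2n-\ell^*) < 1/2$, which is the claimed bound. There is no real obstacle here: the statement is essentially a one-line unpacking of the definition of $\ell^*$ combined with the ordering of the goods, and the only small care needed is checking that $\ell^*+1 \le n < 2n-\ell^*$ so that the ordering can be applied to the two goods of $B_{\ell^*+1}$.
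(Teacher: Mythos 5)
Your proof is correct and follows essentially the same route as the paper: both derive $v_i(2n-\ell^*)<1/2$ from $v_i(B_{\ell^*+1}) = v_i(\ell^*+1)+v_i(2n-\ell^*) < 1$ together with $v_i(2n-\ell^*) \le v_i(\ell^*+1)$, and then extend to all $j \ge 2n-\ell^*$ by the ordering of the goods. The only difference is that you spell out the sanity check $\ell^*+1 \le n < 2n-\ell^*$, which the paper leaves implicit.
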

\begin{proof}
    Since $v_i(B_{\ell^*+1}) = v_i(\ell^*+1)+ v_i(2n-\ell^*) < 1$ and $v_i(2n-\ell^*) \leq v_i(\ell^*+1)$, $v_i(2n-\ell^*) < 1/2$.
    Also for all $j \geq 2n-\ell^*$, $v_i(j) \leq v_i(2n-\ell^*) < 1/2$.
\end{proof}

\begin{corollary}[of Observation \ref{half-bound}]
    $x < 1/6$.
\end{corollary}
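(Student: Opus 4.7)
The plan is to apply Observation~\ref{half-bound} directly with $j = 2n - \ell^*$, which is a legitimate choice since the bound in that observation holds for all indices $j \ge 2n - \ell^*$. This yields $v_i(2n - \ell^*) < 1/2$.

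Combining this with the definition $v_i(2n - \ell^*) = 1/3 + x$ that was fixed at the start of Section~\ref{positive} gives $1/3 + x < 1/2$, and rearranging produces $x < 1/6$. No further case analysis is needed, so there is no real obstacle here — the corollary is just a one-line consequence of Observation~\ref{half-bound} and the parametrization of $v_i(2n - \ell^*)$.
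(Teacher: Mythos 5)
Your proof is correct and matches the paper's intended derivation: the paper states this as an immediate corollary of Observation~\ref{half-bound} (with no written proof), and the one-line argument $1/3 + x = v_i(2n-\ell^*) < 1/2$ is exactly what is meant.
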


Let $s$ be the smallest such that either the algorithm stops at step $s+1$ or $B_{s+1}$ gets more than one good in bag-filling.
\begin{observation}
    $s \geq \ell^*$.
\end{observation}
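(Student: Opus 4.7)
The plan is to exploit \cref{no-change} directly. By definition, $A^+ = \{B_1, \ldots, B_{\ell^*}\}$, and \cref{no-change} states that for every $B_j \in A^+$ we have $\hat{B}_j = B_j$, i.e., no goods were appended to $B_j$ during the bag-filling phase. Moreover, each such $B_j$ must have been allocated to some agent in iteration $j$ of the outer for-loop (otherwise agent $i$, who values $B_j$ at least $1$, would have received $B_j$, contradicting the assumption that $i$ gets no bag).

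With that in hand, suppose for contradiction $s < \ell^*$, so $s+1 \le \ell^*$ and therefore $B_{s+1} \in A^+$. Then by the observation above, during iteration $s+1$ the while-loop added zero goods to $B_{s+1}$, so $B_{s+1}$ did \emph{not} receive more than one good in bag-filling; and the iteration terminated normally by assigning $B_{s+1}$ to some agent, so the algorithm did not stop at step $s+1$. This contradicts the defining property of $s$, hence $s \ge \ell^*$.

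There is essentially no obstacle here: the statement is an immediate bookkeeping consequence of \cref{no-change} together with the definition of $A^+$. The proof should fit in two or three lines.
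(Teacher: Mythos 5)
Your proof is correct and follows essentially the same route as the paper: both arguments rest on the fact that every bag $B_{j}$ with $j \le \ell^*$ satisfies $v_i(B_j) \ge 1$, hence is assigned without any goods being added (Observation~\ref{no-change}), so neither of the two events defining $s$ can occur at any step $j+1 \le \ell^*$. The paper states this directly for all $j < \ell^*$ while you phrase it as a contradiction with $s < \ell^*$, but the content is identical.
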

\begin{proof}
    For all $j < \ell^*$, $v_i(B_{j+1}) \geq 1$. Since $i$ did not receive any bundle, $B_{j+1}$, must have been assigned to another agent. Therefore, the algorithm does not stop at step $j+1$. Also, by Observation \ref{no-change}, $B_{j+1}$ gets no good in bag-filling.
\end{proof}
Let $A^1$ be the set of bags in $\{B_{\ell^*+1}, \ldots, B_s\}$ which receive exactly one good in bag-filling. Formally, $A^1 = \{B_j | \ell^* < j \leq s \text{ and } |\hat{B}_j| = 3\}$. Let $A^2 = \{B_1, B_2, \ldots, B_n\} \setminus (A^+ \cup A^1)$.

\begin{lemma}\label{save-2}
    For all $B_j \in A^2$, $v_i(B_j) < 4/3 -2x$.
\end{lemma}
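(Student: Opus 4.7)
The plan is to bound $v_i(B_j) = v_i(j) + v_i(2n-j+1)$ for the initial bag directly, leveraging only the ordering of the instance and the defining property of $\ell^*$---the bag-filling dynamics do not enter. The key idea is that both goods in $B_j$ can be controlled by the single quantity $v_i(\ell^*+1)$.

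First I would pin down the range of $j$: since $A^2$ is disjoint from $A^+ = \{B_1, \dots, B_{\ell^*}\}$, any $B_j \in A^2$ satisfies $\ell^*+1 \le j \le n$. Next I would extract the key numerical bound from the definition of $\ell^*$: since $v_i(B_{\ell^*+1}) = v_i(\ell^*+1) + v_i(2n-\ell^*) < 1$ and $v_i(2n-\ell^*) = \tf13 + x$, we obtain
\[ v_i(\ell^*+1) \;<\; \tf23 - x. \]

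Then I would bound each of the two goods in $B_j = \{j,\, 2n-j+1\}$ by $v_i(\ell^*+1)$: the left good satisfies $v_i(j) \le v_i(\ell^*+1)$ by the ordering (since $j \ge \ell^*+1$), and the right good satisfies $v_i(2n-j+1) \le v_i(\ell^*+1)$ by the ordering as well, because $j \le n$ together with $\ell^* \le n-1$ (the latter holding because $B_{\ell^*+1}$ must exist) jointly give $2n-j+1 \ge n+1 > \ell^*+1$. Summing the two yields $v_i(B_j) \le 2\, v_i(\ell^*+1) < \tf43 - 2x$, as required.

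I do not anticipate a real obstacle. The one step that might feel subtle is the symmetric observation that both indices of $B_j$ lie to the right of $\ell^*+1$ in the ordering; in particular, one might be tempted to invoke Observations~\ref{upper-bound-left} or~\ref{upper-bound-right}, but those bounds are too weak here (the best they give is roughly $\tf43 + x$, not $\tf43 - 2x$). The correct move is to notice that because the lemma concerns the \emph{initial} bag $B_j$ rather than $\hat{B}_j$, bag-filling can be bypassed entirely.
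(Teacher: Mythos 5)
Your argument correctly reproduces the first half of the paper's proof (the case $\hat{B}_j = B_j$): deriving $v_i(\ell^*+1) < \tfrac{2}{3} - x$ from $v_i(B_{\ell^*+1}) < 1$ and bounding both goods of the initial bag by $v_i(\ell^*+1)$ is exactly what the paper does. However, your closing claim --- that the lemma concerns only the initial bag and so ``bag-filling can be bypassed entirely'' --- is the gap. The bound is consumed downstream in the inequality $v_i(M) < (\tfrac{n}{3}-\ell-\ell^*)(\tfrac{4}{3}-2x) + \cdots$, where $v_i(M)$ is written as the sum of the values of the \emph{final} bags $\hat{B}_1, \ldots, \hat{B}_n$ (when the algorithm runs out of goods, every good sits in some bag). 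So the quantity that must be below $\tfrac{4}{3}-2x$ is $v_i(\hat{B}_j)$, and the paper's own proof of this lemma spends its second half on the case $\hat{B}_j \neq B_j$. This case is not vacuous: $A^2$ contains every bag with index $j > s$, and $B_{s+1}$ is by definition a bag that receives more than one good during bag filling (unless the algorithm halts at step $s+1$). Your bound $v_i(B_j) \le 2v_i(\ell^*+1)$ says nothing about the goods appended to such a bag, so the accounting of $v_i(M)$ breaks.

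The missing piece is precisely where the bag-filling dynamics enter. If $\hat{B}_j \neq B_j$ then $j \ge s+1$, so the algorithm reached step $s+1$ and added a first good $g$ to $B_{s+1}$. Since $B_{s+1}$ needs more than one good, $1 > v_i(B_{s+1} \cup \{g\}) = v_i(s+1) + v_i(2n-s) + v_i(g) \ge 2v_i(2n-\ell^*) + v_i(g) = \tfrac{2}{3} + 2x + v_i(g)$ (both indices $s+1$ and $2n-s$ are at most $2n-\ell^*$ because $s \ge \ell^*$), whence $v_i(g) < \tfrac{1}{3} - 2x$. Any good $h$ added to $B_j$ later satisfies $v_i(h) \le v_i(g)$ by the ordering, and the value of $\hat{B}_j \setminus \{h\}$ for the last such $h$ is below $1$, giving $v_i(\hat{B}_j) < 1 + v_i(g) < \tfrac{4}{3} - 2x$. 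Without this second case the proof is incomplete.
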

\begin{proof}
    We have
    \begin{align*}
        1 &> v_i(B_{\ell^*+1}) \\
        &= v_i(\ell^*+1) + v_i(2n-\ell^*) \\
        &= v_i(\ell^*+1) + \frac{1}{3} + x.
    \end{align*}
    Hence, $v_i(\ell^*+1) < 2/3 - x$. Also, for $B_j \in A^2$, we have
    \begin{align*}
        v_i(B_j) &= v_i(j) + v_i(2n-j+1) \\
        &\leq 2v_i(\ell^*+1) &\tag{$2n-j+1 > j \geq \ell^*+1$ since $B_j \in A^2$} \\
        &< \frac{4}{3} - 2x.
    \end{align*}
    So if $\hat{B}_j = B_j$, the inequality holds. Now assume $\hat{B}_j \neq B_j$. This implies that $j \geq s+1$ and the algorithm did not stop at step $j$ before adding a good to $B_j$. Therefore it did not stop at step $s +1$ before adding a good to $B_{s+1}$ either. Let $g$ be the first good added to $B_{s+1}$. Since $B_{s+1}$ requires more than one good,
    \begin{align*}
        1 &> v_i(B_{s +1} \cup \{g\}) \\
        &= v_i(s+1) + v_i(n-s) + v_i(g) \\
        &\geq 2v_i(2n-\ell^*) + v_i(g) &\tag{$s+1 < n-s \leq 2n-\ell^*$} \\
        &= \frac{2}{3} + 2x + v_i(g).
    \end{align*}
    Therefore, $v_i(g) < 1/3 - 2x$. Now let $h$ be the last good added to bag $B_j$. We have
    \begin{align*}
        v_i(\hat{B}_j) &= v_i(\hat{B}_j \setminus \{h\}) + v_i(h) \\
        &< 1 + v_i(g) &\tag{$v_i(\hat{B}_j \setminus \{h\}) < 1$ and $v_i(h) \leq v_i(g)$} \\
        &< \frac{4}{3} - 2x.
    \end{align*}
\end{proof}

\begin{lemma}\label{save-all}
    For all $B_j \in A^+ \cup A^1$, $v_i(\hat{B}_j) \leq 4/3 + x$.
\end{lemma}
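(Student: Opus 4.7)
\textbf{Proof plan for Lemma \ref{save-all}.} The plan is to handle the two cases $B_j \in A^+$ and $B_j \in A^1$ separately, in each case applying the general bag-value bounds from Observations \ref{upper-bound-left} and \ref{upper-bound-right} with the specific choice $k = \ell^*+1$ (which is a valid index by Observation \ref{less-than-one}, so $\ell^*+1 \le n$). The key fact that will drive both bounds is the hypothesis $v_i(2n-\ell^*) = 1/3 + x$, which plugs in directly as $1 + v_i(2n-\ell^*) = 4/3 + x$.

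For $B_j \in A^+$, so $j \leq \ell^*$, Observation \ref{no-change} gives $\hat{B}_j = B_j$, but we do not even need to invoke it: since $j \leq \ell^* + 1$, Observation \ref{upper-bound-left} with $k = \ell^*+1$ yields
\[
v_i(\hat B_j) \;\le\; 1 + v_i(2n-\ell^*) \;=\; \tfrac{4}{3} + x.
\]

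For $B_j \in A^1$, so $\ell^* + 1 \le j \le s \le n$, Observation \ref{upper-bound-right} with $k = \ell^*+1$ yields
\[
v_i(\hat B_j) \;\le\; \max\bigl(1 + v_i(2n-\ell^*),\; 2v_i(\ell^*+1)\bigr).
\]
The first argument of the max equals $4/3 + x$ as before. For the second, I will reuse the estimate already made at the start of the proof of Lemma \ref{save-2}: from $v_i(B_{\ell^*+1}) < 1$ and $v_i(2n-\ell^*) = 1/3 + x$ we get $v_i(\ell^*+1) < 2/3 - x$, hence $2v_i(\ell^*+1) < 4/3 - 2x \le 4/3 + x$ (using $x \ge 0$, which is part of the section's standing hypothesis). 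Either way, the max is at most $4/3 + x$, completing the case.

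There is no real obstacle here; the lemma is essentially an application of the two observations together with the parametrization $v_i(2n-\ell^*) = 1/3 + x$. The only point to verify carefully is the inequality $2v_i(\ell^*+1) \le 4/3 + x$, which would fail if $x$ could be negative; it holds precisely because the section assumes $x \ge 0$. This is exactly the place where the sign restriction of Section \ref{positive} is used.
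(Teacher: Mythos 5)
Your proof is correct and follows essentially the same approach as the paper: the same case split ($A^+$ versus $A^1$), the same key bounds ($v_i(j)\le 1$, last good worth at most $v_i(2n-\ell^*)$, and $v_i(\ell^*+1) < 2/3 - x$), merely packaged through Observations~\ref{upper-bound-left} and~\ref{upper-bound-right} with $k=\ell^*+1$ rather than argued inline. The one extra step your route forces --- checking $2v_i(\ell^*+1) < 4/3 - 2x \le 4/3 + x$ via $x \ge 0$ --- is handled correctly, and your verification that $\ell^*+1 \le n$ is also valid.
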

\begin{proof}
    First assume $B_j \in A^+$. We have $j \leq \ell^*$. Also,
    \begin{align*}
        v_i(\hat{B}_j) &= v_i(B_j) &\tag{$\hat{B}_j = B_j$} \\
        &= v_i(j) + v_i(2n-j+1) \\
        &\leq 1 + v_i(2n-\ell^*) &\tag{$v_i(j) \leq 1$ and $2n-j+1 > 2n-\ell^*$} \\
        &= \frac{4}{3} +x.
    \end{align*}
    Now assume $B_j \in A^1$. Let $g$ be the good added to bag $B_j$ in bag-filling. We have,
    \begin{align*}
        v_i(\hat{B}_j) &= v_i(B_j) + v_i(g) \\
        &< 1 + v_i(2n-\ell^*) &\tag{$v_i(B_j) < 1$ and $v_i(g) \leq v_i(2n+1) \leq v_i(2n-\ell^*)$}\\
        &= \frac{4}{3} + x.
    \end{align*}
\end{proof}

Let $|A^1|=2n/3+\ell$. Then $|A^2| = n - \ell^* - (2n/3 + \ell) = n/3-(\ell+\ell^*)$. If $\ell + \ell^* \leq 0$, then $|A^2| \geq n/3$ and hence there are at least $n/3$ bags with value less than $4/3 -2x$ (by Lemma \ref{save-2}) and at most $2n/3$ bags with value at most $4/3 + x$ (by Lemma \ref{save-all}). Hence,
\[ v_i(M) < \frac{n}{3}(\frac{4}{3}-2x) + \frac{2n}{3}(\frac{4}{3}+x) = \frac{4n}{3}, \]
which is a contradiction since $v_i(M) = 4n/3$. So assume $\ell + \ell^* > 0$.

Limit the goods in a $1$-out-of-$4n/3$ MMS partition $P^i=(P^i_1, \dots, P^i_{4n/3})$ of agent $i$ to $\{1, \ldots, 8n/3+\ell\}$ and let $Q$ be the set of bags in $P^i$ containing goods $\{1, \ldots, \ell^*\}$. Formally, $Q = \{ P_j^i \cap \{1, \ldots, 8n/3+\ell\}: |P_j^i \cap \{1, \ldots, \ell^*\}| \geq 1\}$. Let $t$ be the number of bags of size $1$ in $Q$.

\begin{figure}[!hbt]
    \centering
    \scalebox{\figScale}{\begin{tikzpicture}
[scale=1,
 good/.style={circle, draw=black, thick, minimum size=30pt},
]

\draw[blue, fill=blue!5, very thick] (-0.4-0.35-0.4,3.5) rectangle (0.4+2.6-0.4,2.1);

\draw[red, fill=red!5, very thick] (-0.4-0.35-0.4,2.07) rectangle (0.4+2.6-0.4,0.75);

\draw[black, very thick] (-0.4-0.25-0.4,0.8) rectangle (0.4+0.25-0.4,3.4);

\node[good]      at (0-0.4,2.75)      {$\scriptstyle{2n}$};
\node[good]      at (0-0.4,1.5)      {$\scriptstyle{1}$};

\node at (0-0.4, 0.5) {$\scriptstyle{B_1}$};

\filldraw[color=black!60, fill=black!5, thick](1-0.4, 2) circle (0.02);
\filldraw[color=black!60, fill=black!5, thick](1.1-0.4, 2) circle (0.02);
\filldraw[color=black!60, fill=black!5, thick](1.2-0.4, 2) circle (0.02);

\draw[black, very thick] (-0.4-0.25 +2.25-0.4,0.8) rectangle (0.4+0.25 +2.25-0.4,3.4);

\node[good]      at (0 +2.25-0.4,2.75)      {.};
\node[good]      at (0 +2.25-0.4,1.5)      {.};

\node at (0 +2.25-0.4, 0.5) {$\scriptstyle{B_{2\ell+t+2\ell^*-2n/3}}$}; 

\filldraw[color=black!60, fill=black!5, thick](1+1.8, 2) circle (0.02);
\filldraw[color=black!60, fill=black!5, thick](1.1+1.8, 2) circle (0.02);
\filldraw[color=black!60, fill=black!5, thick](1.2+1.8, 2) circle (0.02);

\draw[blue, fill=blue!5, very thick] (-0.4-0.25 +3.9,0.2) rectangle (0.4+0.25 +6.1,3.5);

\draw[black, very thick] (-0.4-0.25 +4,0.8) rectangle (0.4+0.25 +4,3.4);

\node[good]      at (0 +4,2.75)      {$\scriptstyle{2n-t}$};
\node[good]      at (0 +4,1.5)      {$\scriptstyle{t+1}$};

\node at (0 +4, 0.5) {$\scriptstyle{B_{t+1}}$};

\filldraw[color=black!60, fill=black!5, thick](1+3.9, 2) circle (0.02);
\filldraw[color=black!60, fill=black!5, thick](1.1+3.9, 2) circle (0.02);
\filldraw[color=black!60, fill=black!5, thick](1.2+3.9, 2) circle (0.02);

\draw[black, very thick] (-0.4-0.25 +6,0.8) rectangle (0.4+0.25 +6,3.4);

\node[good, scale=0.8]      at (0 +6,2.75)      {$\scriptstyle{2n-\ell^*+1}$};
\node[good]      at (0 +6,1.5)      {$\scriptstyle{\ell^*}$};

\node at (0 +6, 0.5) {$\scriptstyle{B_{\ell^*}}$};

\draw[blue, fill=blue!5, very thick] (-0.4-0.25 +7.5,3.55) rectangle (0.4+0.25 +9.7,4.8);

\draw[black, very thick] (-0.4-0.25 +7.5,0.8) rectangle (0.4+0.25 +7.5,3.4);

\node[good,scale=0.9]      at (0 +7.5,2.75)      {$\scriptstyle{2n-\ell^*}$};
\node[good]      at (0 +7.5,1.5)      {$\scriptstyle{\ell^*+1}$};

\node[good]      at (0 +7.5,4.18)      {$\scriptstyle{2n+1}$};

\node at (0 +7.5, 0.5) {$\scriptstyle{B_{\ell^*+1}}$};

\filldraw[color=black!60, fill=black!5, thick](1+7.5, 2) circle (0.02);
\filldraw[color=black!60, fill=black!5, thick](1.1+7.5, 2) circle (0.02);
\filldraw[color=black!60, fill=black!5, thick](1.2+7.5, 2) circle (0.02);

\draw[black, very thick] (-0.4-0.25 +9.7,0.8) rectangle (0.4+0.25 +9.7,3.4);

\node[good, scale=0.6]      at (0 +9.7,2.75) {$2n-s+1$};
\node[good]      at (0 +9.7,1.5)      {$\scriptstyle{s}$};

\node[good, scale=0.95]      at (0 +9.7,4.18)      {$\scriptstyle{\frac{8n}{3} + \ell}$};

\node at (0 +9.7, 0.5) {$\scriptstyle{B_s}$};

\filldraw[color=black!60, fill=black!5, thick](1+9.6, 2) circle (0.02);
\filldraw[color=black!60, fill=black!5, thick](1.1+9.6, 2) circle (0.02);
\filldraw[color=black!60, fill=black!5, thick](1.2+9.6, 2) circle (0.02);

\draw[black, very thick] (-0.4-0.25 +12,0.8) rectangle (0.4+0.25 +12,3.4);

\node[good]      at (0 +12,2.75)      {$\scriptstyle{n+1}$};
\node[good]      at (0 +12,1.5)      {$\scriptstyle{n}$};

\node at (0 +12, 0.5) {$\scriptstyle{B_n}$};

\end{tikzpicture}}
    \captionAndDescr{The first $2\ell + t + 2\ell^* -2n/3$ goods are marked with red, and the goods considered in Lemma \ref{expowerful} are marked with blue.}
    \label{colorful}
\end{figure}

\begin{lemma}\label{expowerful}
    Let $t$ be the number of bags of size $1$ in $Q = \{ P_j^i \cap \{1, \ldots 8n/3+\ell\}: |P_j^i \cap \{1, \ldots \ell^*\}| \geq 1\}$. Then,
    \begin{align*}
        v_i(&\{8n/3 - 2\ell-t-2\ell^*+1, \ldots, 8n/3 + \ell\} \\
        &\cup \{t+1, \ldots, \ell^*\} \\
        &\cup \{2n-\ell^*+1, \ldots, 2n-t\}) \leq 2\ell^* + \ell -t.
    \end{align*}
\end{lemma}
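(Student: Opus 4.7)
The plan is to exploit the $4n/3$-MMS partition $P^i=(P^i_1,\ldots,P^i_{4n/3})$, where each bundle has value exactly $1$ (by $4n/3$-normalization), together with a sharp structural observation about how goods of $[\ell^*]$ are distributed.

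First I would establish a key preliminary: \emph{each bundle of $P^i$ contains at most one good from $[\ell^*]$.} By definition of $\ell^*$, $v_i(B_{\ell^*}) = v_i(\ell^*) + v_i(2n-\ell^*+1) \geq 1$, and by monotonicity $v_i(2n-\ell^*+1)\leq v_i(2n-\ell^*) = 1/3+x$, so $v_i(\ell^*)\geq 2/3-x$. Since $x < 1/6$ (by the corollary of Observation \ref{half-bound}), we have $v_i(g) \geq 2/3-x > 1/2$ for every $g\in[\ell^*]$; hence two such goods cannot sit in the same bundle without exceeding value $1$. This forces $|Q|=\ell^*$, with each $Q$-bundle holding exactly one $[\ell^*]$-good.

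Next I would classify the $4n/3$ bundles into three groups: the $t$ singleton $Q$-bundles (restriction $=\{g\}$ for some $g\in[\ell^*]$), the $\ell^*-t$ non-singleton $Q$-bundles (one $[\ell^*]$-good plus $\geq 1$ more good in $\{\ell^*+1,\ldots,8n/3+\ell\}$), and the $4n/3-\ell^*$ non-$Q$-bundles (all goods in $\{\ell^*+1,\ldots,m\}$). Writing $v_i(S)=v_i(S_2)+v_i(S_1\cup S_3)$ (a disjoint split, since $S_2\subseteq[\ell^*]$ and $S_1\cup S_3\subseteq\{\ell^*+1,\ldots,8n/3+\ell\}$), I would bound each piece bundle-by-bundle. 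Singleton $Q$-bundles contribute nothing to $v_i(S_1\cup S_3)$. Non-singleton $Q$-bundles contribute at most $1-v_i([\ell^*]\text{-good})\leq 1/3+x$ to $v_i(S_1\cup S_3)$. For $v_i(S_2)$, I would use that $v_i(S_2)\leq v_i([\ell^*])-v_i([t])\leq \ell^* - t(2/3-x)$, exploiting the per-good lower bound.

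The main obstacle is bounding the contribution of the non-$Q$-bundles to $v_i(S_1\cup S_3)$: a naive bound of $1$ per bundle is far too loose. To close this gap I would invoke Lemma~\ref{lem:Csum} with $k=\ell^*+1$ (yielding $v_i(\{\ell^*+1,\ldots,8n/3-\ell^*\})\leq 4n/3-\ell^*$), together with the fact that each good $g\in S_1\cup S_3$ satisfies $v_i(g)\leq v_i(2n-\ell^*)=1/3+x$ by monotonicity. This allows us to convert the total mass of $\{\ell^*+1,\ldots,8n/3+\ell\}$ locked into non-$Q$-bundles into a bound on $v_i((S_1\cup S_3)\cap \text{non-}Q\text{-bundles})$; the specific choice of the indices in $S_1$ (pushed as far to the right of $[8n/3+\ell]$ as the parameters $\ell,t,\ell^*$ permit) is what makes the monotonicity step tight.

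Finally, summing the three contributions and simplifying --- where the cancellations of the $x$-terms come from the complementary $+(1/3+x)$ and $-(2/3-x)$ estimates, and the $-t$ in the final bound emerges from the $t$ singleton $Q$-bundles contributing only to $v_i([t])\subseteq v_i([\ell^*])\setminus v_i(S_2)$ --- yields $v_i(S)\leq 2\ell^*+\ell-t$. The tightness of the bound means that each estimate above must be used at essentially full strength; no single step has slack, which is why the argument needs the structural claim $|Q|=\ell^*$ and Lemma~\ref{lem:Csum} simultaneously.
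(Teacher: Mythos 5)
Your preliminary observations are sound: each good in $[\ell^*]$ has value at least $2/3-x>1/2$, so no two of them share a bundle of $P^i$, each non-singleton $Q$-bundle contributes at most $1/3+x$ to the goods outside $[\ell^*]$, and $v_i(\{t+1,\ldots,\ell^*\})\le v_i([\ell^*])-v_i([t])$. But the proof does not close, because the step you yourself identify as "the main obstacle" --- bounding the contribution of the non-$Q$-bundles to $v_i(S_1\cup S_3)$ --- is left as a hope rather than an argument, and the tools you propose for it cannot deliver the required bound. Tracking your own accounting: after charging $\ell^*-t(2/3-x)$ for $S_2$ and $(\ell^*-t)(1/3+x)$ for the non-singleton $Q$-bundles, the non-$Q$-bundles must contribute at most $\tfrac{2}{3}\ell^*+\ell-\ell^* x$. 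In the worst case all $3(\ell+\ell^*)$ goods of $S_1\cup S_3$ sit in non-$Q$-bundles, two per bundle; then the constraint "each bundle has value $1$" is inactive (two goods of value $1/3+x$ sum to $2/3+2x<1$), and the per-good bound $v_i(g)\le 1/3+x$ gives only $(\ell+\ell^*)(1+3x)$, which exceeds the target by roughly $3x(\ell+\ell^*)$. Lemma~\ref{lem:Csum} with $k=\ell^*+1$ bounds the value of the \emph{entire} range $\{\ell^*+1,\ldots,8n/3-\ell^*\}$ by $4n/3-\ell^*$, and since $S_1\cup S_3$ is a small, non-contiguous subset of that range (with $S_3$ sitting in the middle, not at the cheap end), averaging it down loses roughly a factor $3/2$ relative to what is needed. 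Since the lemma is tight, there is no slack to absorb these losses.

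The ingredient your proposal is missing is a \emph{cardinality} argument on the restricted partition $R_j=P^i_j\cap[8n/3+\ell]$, which is the actual engine of the paper's proof. The paper first peels off $\{2n-\ell^*+1,\ldots,2n-t\}$ together with the $\ell^*-t$ goods $\{8n/3-2\ell-t-2\ell^*+1,\ldots,8n/3-2\ell-2t-\ell^*\}$ and bounds these $2(\ell^*-t)$ goods by $\ell^*-t$ using only $v_i(j)<1/2$ (Observation~\ref{half-bound}, Lemma~\ref{k-t}); no bundle structure is needed there. For the remainder, it argues (Lemma~\ref{powerful}) that because the $4n/3$ restricted bundles must jointly contain all $8n/3+\ell$ goods, the $t+\ell$ largest of them contain at least $3(t+\ell)$ goods; hence $\ell^*+\ell$ suitably chosen bundles of total value at most $\ell^*+\ell$ contain $\{t+1,\ldots,\ell^*\}$ plus at least $3\ell+2t+\ell^*$ further goods, whose value dominates that of the $3\ell+2t+\ell^*$ cheapest goods in $[8n/3+\ell]$ (Lemma~\ref{k+l}). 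It is this pigeonhole on \emph{sizes}, not on values, that beats the naive $1/3+x$ per-good estimate, and nothing in your outline plays that role.
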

The goods considered in Lemma \ref{expowerful} are marked with blue in Figure \ref{colorful}.
First, we prove that the goods mentioned in Lemma \ref{expowerful} are distinct. To that end, it suffices to prove that $8n/3 - 2\ell-t-2\ell^*+1 > 2n-t$. It follows from the fact that $2n/3+\ell+\ell^* \leq n$.
Before proving Lemma \ref{expowerful}, let us show how to obtain a contradiction assuming this lemma holds.
Note that since there are $n/3 - \ell - \ell^*$ bags with value less than $4/3-2x$ (namely the bags in $A^2$), it suffices to prove that there exists $3(\ell + \ell^*)$ other bags with total value $4(\ell + \ell^*)$. Since the remaining $2n/3 -2\ell -2\ell^*$ bags are of value at most $4/3+x$ (by Lemma \ref{save-all}), we get
\begin{align}
    v_i(M) < (\frac{n}{3} - \ell - \ell^*) (\frac{4}{3} - 2x) + (\frac{2n}{3} - 2\ell - 2\ell^*)(\frac{4}{3} + x) + 4 (\ell + \ell^*) = \frac{4n}{3} \label{important-ineq}
\end{align}
which is a contradiction since $v_i(M)=4n/3$.

Now consider $B = \{\hat{B}_1, \ldots, \hat{B}_{2\ell + t + 2\ell^* -2n/3}, \hat{B}_{t+1}, \ldots, \hat{B}_{\ell^*} \} \cup \hat{A}^1$ where $\hat{A}^1$ is the set of bags in $A^1$ after bag-filling. $B$ consists of $3(\ell+\ell^*)$ bags. Now we prove that $v_i(\bigcup_{B_j \in B}B_j) \leq 4(\ell+\ell^*)$. We have
\begin{align*}
    v_i(\bigcup_{\hat{B}_j \in B}\hat{B}_j) \leq v_i(&\bigcup_{B_j \in A^1} B_j) \\
    + v_i(\textcolor{red}{\{}&\textcolor{red}{1, \ldots, 2\ell + t + 2\ell^* -2n/3\}})\\
    + v_i(\textcolor{blue}{\{}&\textcolor{blue}{8n/3 - 2\ell-t-2\ell^*+1, \ldots, 8n/3 + \ell\}} \\
     &\textcolor{blue}{\cup \{t+1, \ldots, \ell^*\}} \\
     &\textcolor{blue}{\cup \{2n-\ell^*+1, \ldots, 2n-t\}}) .
\end{align*}
We bound the value of the goods marked with different colors in different inequalities.
\begin{observation}
    For all $B_j \in A^1$, $v_i(B_j) < 1$.
\end{observation}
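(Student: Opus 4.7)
The plan is to read the claim directly off the structure of Algorithm~\ref{algo}. By definition of $A^1$, a bag $B_j \in A^1$ satisfies $\ell^* < j \le s$ and $|\hat B_j| = 3$, so in round $j$ of bag-filling, the \textbf{while} loop must have executed exactly once before the bag was allocated. Recall that the while loop only adds a good to $B_k$ while there is \emph{no} agent $i' \in N$ (the set of currently unassigned agents) with $v_{i'}(B_k) \ge 1$.

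The key observation is that agent $i$ is in $N$ throughout the entire run, since by the standing assumption of this section $i$ never receives a bundle. Therefore, at the moment the first good is added to $B_j$, the loop's negative check rules out \emph{every} agent in $N$ simultaneously, and in particular gives $v_i(B_j) < 1$. Here $B_j$ denotes its configuration at that moment, which — because no good had yet been added in round $j$ — is exactly the initial pair $\{j,\,2n-j+1\}$ from the initialization phase. This matches the notation $B_j$ used in the observation (as opposed to $\hat B_j$), so the inequality follows immediately.

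There is really no obstacle here: the statement is essentially a one-line consequence of unpacking the loop invariant together with $i \in N$. The only thing to be careful about is distinguishing between $B_j$ (the initial two-good configuration from line~2 of the algorithm) and $\hat B_j$ (its value after bag-filling terminates); the observation asserts a bound on the former, which is precisely what the loop-guard analysis delivers. No case analysis on the position of $j$ relative to $\ell^*$ or $s$ is needed beyond the membership $B_j \in A^1$ itself.
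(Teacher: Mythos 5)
Your proof is correct and matches the paper's (implicit) reasoning: the paper states this observation without proof, and the intended justification is exactly your loop-guard argument --- a bag in $A^1$ receives a good during bag-filling only because no remaining agent, in particular agent $i$, values the initial bag $B_j$ at $1$ or more. Your care in distinguishing $B_j$ from $\hat{B}_j$ is also the right reading of the statement.
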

Since $|A^1| = 2n/3 + \ell$,
\begin{align*}
    v_i(\bigcup_{B_j \in A^1} B_j) < 2n/3 +\ell.
\end{align*}
Also, since all goods are of value at most $1$ to agent $i$,

\begin{align*}
    \textcolor{red}{v_i(\{1, \ldots, 2\ell + t + 2\ell^* -2n/3\}) \leq 2\ell + t + 2\ell^* -2n/3}.
\end{align*}
By Lemma \ref{expowerful},
\textcolor{blue}{
\begin{align*}
        v_i(&\{8n/3 - 2\ell-t-2\ell^*+1, \ldots, 8n/3 + \ell\} \\
        &\cup \{t+1, \ldots, \ell^*\} \\
        &\cup \{2n-\ell^*+1, \ldots, 2n-t\}) \leq 2\ell^* + \ell -t.
\end{align*}}

By adding all the inequalities, we get
\begin{align*}
    v_i(\bigcup_{\hat{B}_j \in B}\hat{B}_j) \leq 4(\ell+\ell^*).
\end{align*}
Hence, Inequality \eqref{important-ineq} holds, which is a contradiction. So the case of $v_i(2n-\ell^*) \geq 1/3$ cannot arise.
\contradictTwo*

In the rest of this section, we prove Lemma \ref{expowerful}.

\subsubsection{Proof of Lemma \ref{expowerful}}

To prove Lemma \ref{expowerful}, we partition the goods considered in this lemma into two parts. These parts are colored red and blue in Figure \ref{colorful-1}. We bound the value of red goods in Lemma~\ref{k-t}, i.e.,
\[ \sum_{2n-\ell^* < j\leq 2n-t}v_i(j) + \sum_{8n/3 - 2\ell-t-2\ell^* < j \leq 8n/3 - 2\ell - 2t - \ell^*}v_i(j) < \ell^*-t, \]
and the value of the blue goods in Lemma~\ref{k+l}, i.e.,
\[ \sum_{t < j \leq \ell^*} v_i(j) + \sum_{8n/3 - 2\ell - 2t - \ell^* < j \leq 8n/3 + \ell}v_i(j) \leq \ell^* + \ell. \]
Thereafter, we have
\begin{align*}
    v_i(&\{8n/3 - 2\ell-t-2\ell^*+1, \ldots, 8n/3 + \ell\} \\
    &\cup \{t+1, \ldots, \ell^*\} \\
    &\cup \{2n-\ell^*+1, \ldots, 2n-t\}) \\
    &= \textcolor{red}{\sum_{2n-\ell^* < j\leq 2n-t}v_i(j) + \sum_{8n/3 - 2\ell-t-2\ell^* < j \leq 8n/3 - 2\ell - 2t - \ell^*}v_i(j)} \\
    &+ \textcolor{blue}{\sum_{t < j \leq \ell^*} v_i(j) + \sum_{8n/3 - 2\ell - 2t - \ell^* < j \leq 8n/3 + \ell}v_i(j)} \\
    &< (\ell^*-t) + (\ell^*+\ell) &\tag{Lemma \ref{k-t} and \ref{k+l}}\\
    &= 2\ell^*+\ell-t,
\end{align*}
and Lemma \ref{expowerful} follows.

It suffices to prove Lemmas \ref{k-t} and \ref{k+l}. In the rest of this section, we prove these two lemmas.
\begin{figure}[!hbt]
    \centering
    \scalebox{\figScale}{\begin{tikzpicture}
[scale=1,
 good/.style={circle, draw=black, thick, minimum size=30pt},
]

\draw[blue, fill=blue!5, very thick] (-0.4-0.35,3.5) rectangle (0.4+1.1,2.1);

\draw[black, very thick] (-0.4-0.25,0.8) rectangle (0.4+0.25,3.4);

\node[good]      at (0,2.75)      {$\scriptstyle{2n}$};
\node[good]      at (0,1.5)      {$\scriptstyle{1}$};

\node at (0, 0.5) {$\scriptstyle{B_1}$};

\filldraw[color=black!60, fill=black!5, thick](1, 2) circle (0.02);
\filldraw[color=black!60, fill=black!5, thick](1.1, 2) circle (0.02);
\filldraw[color=black!60, fill=black!5, thick](1.2, 2) circle (0.02);

\draw[red, fill=red!5, very thick] (0.4+1.15,3.5) rectangle (5,2.1);

\draw[black, very thick] (-0.4-0.25 +2.25,0.8) rectangle (0.4+0.25 +2.25,3.4);

\node[good]      at (0 +2.25,2.75)      {};
\node[scale=0.8]      at (0 +2.25,2.85)      {$\scriptstyle{\frac{8n}{3} - 2\ell}$};
\node[scale=0.8]      at (0 +2.25,2.55)      {$\scriptstyle{- 2t - \ell^*}$};

\node[good]      at (0 +2.25,1.5)     {};

\node[scale=0.8] at (0 +2.25,1.6) {$\scriptstyle{2\ell + 2t + \ell^*}$};
\node[scale=0.8] at (0 +2.25,1.3) {$\scriptstyle{-\frac{2n}{3} + 1}$};

\node[scale=0.9] at (0 +2.25, 0.4) {$\scriptstyle{B_{2\ell + 2t + \ell^* -\frac{2n}{3} + 1}}$};

\filldraw[color=black!60, fill=black!5, thick](1+2.2, 2) circle (0.02);
\filldraw[color=black!60, fill=black!5, thick](1.1+2.2, 2) circle (0.02);
\filldraw[color=black!60, fill=black!5, thick](1.2+2.2, 2) circle (0.02);

\draw[black, very thick] (-0.4-0.25 +4.25,0.8) rectangle (0.4+0.25 +4.25,3.4);

\node[good]      at (0 +4.25,2.75)      {};
\node[scale=0.8]      at (0 +4.25,2.83)      {$\scriptstyle{\frac{8n}{3} - 2\ell -t}$};
\node[scale=0.8]      at (0 +4.25,2.53)      {$\scriptstyle{-2\ell^*+1}$};

\node[good]      at (0 +4.25,1.5)      {};
\node[scale=0.8]      at (0 +4.25,1.6)      {$\scriptstyle{2\ell+t+2\ell^*}$};
\node[scale=0.8]      at (0 +4.25,1.3)      {$\scriptstyle{-\frac{2n}{3}}$};

\node[scale=0.9] at (0 +4.25, 0.4) {$\scriptstyle{B_{2\ell+t+2\ell^*-\frac{2n}{3}}}$};

\filldraw[color=black!60, fill=black!5, thick](1+4.2, 2) circle (0.02);
\filldraw[color=black!60, fill=black!5, thick](1.1+4.2, 2) circle (0.02);
\filldraw[color=black!60, fill=black!5, thick](1.2+4.2, 2) circle (0.02);

\draw[red, fill=red!5, very thick] (-0.4-0.25 +6.4,2.13) rectangle (0.4+0.25 +7.6,3.5);

\draw[blue, fill=blue!5, very thick] (-0.4-0.25 +6.4,0.7) rectangle (0.4+0.25 +7.6,2.07);

\draw[black, very thick] (-0.4-0.25 +6.5,0.8) rectangle (0.4+0.25 +6.5,3.4);

\node[good]      at (0 +6.5,2.75)      {$\scriptstyle{2n-t}$};
\node[good]      at (0 +6.5,1.5)      {$\scriptstyle{t+1}$};

\node[scale=0.9] at (0 +6.5, 0.4) {$\scriptstyle{B_{t+1}}$};

\filldraw[color=black!60, fill=black!5, thick](1+6.6, 2) circle (0.02);
\filldraw[color=black!60, fill=black!5, thick](1.1+6.6, 2) circle (0.02);
\filldraw[color=black!60, fill=black!5, thick](1.2+6.6, 2) circle (0.02);

\draw[blue, fill=blue!5, very thick] (-0.4-0.25 +9,3.55) rectangle (0.4+0.25 +11.5,4.8);

\draw[black, very thick] (-0.4-0.25 +9,0.8) rectangle (0.4+0.25 +9,3.4);

\node[good]      at (0 +9,2.75)      {$\scriptstyle{2n-\ell^*}$};
\node[good]      at (0 +9,1.5)      {$\scriptstyle{\ell^*+1}$};

\node[good]      at (0 +9,4.18)      {$\scriptstyle{2n+1}$};

\node at (0 +9, 0.5) {$\scriptstyle{B_{\ell^*+1}}$};

\filldraw[color=black!60, fill=black!5, thick](1+9.1, 2) circle (0.02);
\filldraw[color=black!60, fill=black!5, thick](1.1+9.1, 2) circle (0.02);
\filldraw[color=black!60, fill=black!5, thick](1.2+9.1, 2) circle (0.02);

\draw[black, very thick] (-0.4-0.25 +11.5,0.8) rectangle (0.4+0.25 +11.5,3.4);

\node[good, scale=0.53]      at (0 +11.5,2.75) {$2n-s+1$};
\node[good]      at (0 +11.5,1.5)      {$\scriptstyle{s}$};

\node[good]      at (0 +11.5,4.18)      {$\scriptstyle{\frac{8n}{3} + \ell}$};

\node at (0 +11.5, 0.5) {$\scriptstyle{B_s}$};

\filldraw[color=black!60, fill=black!5, thick](1+11.5, 2) circle (0.02);
\filldraw[color=black!60, fill=black!5, thick](1.1+11.5, 2) circle (0.02);
\filldraw[color=black!60, fill=black!5, thick](1.2+11.5, 2) circle (0.02);

\end{tikzpicture}}
    \captionAndDescr{The goods considered in in Lemma~\ref{k-t} are marked with red and the goods in Lemma~\ref{k+l} are marked with blue.}
    \label{colorful-1}
\end{figure}
Limit the goods in a $1$-out-of-$4n/3$ MMS partition of agent $i$ to $\{1, \ldots, 8n/3+\ell\}$ and let $R$ be the set of the resulting bags. Formally, for all $j \in [4n/3]$, $R_j = P_j^{i} \cap \{1, \ldots, 8n/3+\ell\}$ and $R = \{R_1, \ldots, R_{4n/3}\}$. Without loss of generality, assume $|R_1| \geq |R_2| \geq \ldots \geq |R_{4n/3}|$. Let $t$ be the number of bags of size $1$ in $R$.

\begin{lemma}\label{powerful}
    If there exist $t$ bags of size at most $1$ in $R$, then
    \[ \sum_{1 \leq j \leq t+\ell}|R_j| \geq 3(t+\ell). \]
\end{lemma}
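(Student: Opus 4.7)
\emph{Proof plan.}
I plan to prove this by contradiction via a simple counting argument. The two ingredients are: the sizes $|R_1|\ge\ldots\ge|R_{4n/3}|$ are monotonically non-increasing, and since the bags partition $\{1,\ldots,8n/3+\ell\}$, their total size is $\sum_{j=1}^{4n/3}|R_j|=8n/3+\ell$.

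Suppose toward contradiction that $\sum_{j=1}^{t+\ell}|R_j|<3(t+\ell)$. Averaging the first $t+\ell$ sorted sizes forces $|R_{t+\ell}|<3$, hence $|R_{t+\ell}|\le 2$, and by monotonicity $|R_j|\le 2$ for every $j\ge t+\ell$. I would then split $\sum_{j=1}^{4n/3}|R_j|$ into three consecutive blocks and bound each: the first $t+\ell$ bags contribute strictly less than $3(t+\ell)$ by the contradiction hypothesis; the middle bags $R_{t+\ell+1},\ldots,R_{4n/3-t}$ each have size at most $2$, contributing at most $2(4n/3-2t-\ell)$; and the final $t$ bags each have size at most $1$ by the hypothesis on $t$, contributing at most $t$. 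Summing,
\[
\sum_{j=1}^{4n/3}|R_j|\;<\;3(t+\ell)+2\bigl(\tfrac{4n}{3}-2t-\ell\bigr)+t\;=\;\tfrac{8n}{3}+\ell,
\]
which contradicts $\sum_{j=1}^{4n/3}|R_j|=8n/3+\ell$ and completes the proof.

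The argument is essentially pigeonhole, so there is no deep obstacle. The main thing to verify is that the three index ranges are disjoint and cover $[1,4n/3]$, i.e.\ $t+\ell+1\le 4n/3-t$, or equivalently $2t+\ell<4n/3$. This should follow from the observation that $t\le|Q|\le\ell^*$ (there are only $\ell^*$ goods in $\{1,\ldots,\ell^*\}$, so at most $\ell^*$ bags of $R$ can contain one) combined with the earlier constraint $\ell+\ell^*\le n/3$ coming from $|A^2|\ge 0$, which together give $2t+\ell\le\ell^*+(\ell+\ell^*)\le(n-1)+n/3<4n/3$; I would check this as a preliminary step, and also dispatch the trivial regime $t+\ell\le 0$ separately.
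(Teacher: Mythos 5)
Your proof is correct and is essentially the paper's own argument: the same three-block decomposition of $\sum_{j=1}^{4n/3}|R_j| = 8n/3+\ell$ into the first $t+\ell$ bags, the middle bags of size at most $2$, and the last $t$ bags of size at most $1$, with identical bounds --- the paper just phrases it directly (case on $|R_{t+\ell}|\ge 3$ vs.\ $\le 2$) rather than by contradiction. Your extra care about the index ranges being well-defined (via $t\le\ell^*$ and $\ell+\ell^*\le n/3$) and the degenerate regime $t+\ell\le 0$ addresses details the paper leaves implicit.
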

\begin{proof}
    Since $R_j$'s are sorted in decreasing order of their size,
    \begin{align*}
        \sum_{1 \leq j \leq t+\ell}|R_j| \geq (t+\ell)|R_{t+\ell}|.
    \end{align*}
    Hence, if $|R_{t+\ell}| \geq 3$, then $\sum_{1 \leq j \leq t+\ell}|R_j| \geq 3(t+\ell).$ So assume $|R_{t+\ell}| \leq 2$.
    \begin{align*}
        \frac{8n}{3} + \ell &= \sum_{1 \leq j \leq 4n/3}|R_j| \\
        &= \sum_{1 \leq j \leq t+\ell} |R_j| + \sum_{t+\ell < j \leq 4n/3 - t}|R_j| + \sum_{4n/3-t < j \leq 4n/3}|R_j| \\
        &\leq \sum_{1 \leq j \leq t+\ell} |R_j| +  (\frac{4n}{3} - 2t - \ell)|R_{t+\ell}| + t \\
        &\leq \sum_{1 \leq j \leq t+\ell} |R_j| + 2(\frac{4n}{3} - 2t - \ell) + t
    \end{align*}
    Therefore, \[ \sum_{j \in [t+\ell]} |R_j| \geq 3(\ell+t). \]
\end{proof}
\begin{lemma}\label{bound}
    $\ell+\ell^*+t \leq 4n/3$.
\end{lemma}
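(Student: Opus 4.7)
The plan is to split $\ell+\ell^*+t$ and bound each piece separately. First, from $|A^2| = n/3 - (\ell + \ell^*) \geq 0$, we immediately get $\ell + \ell^* \leq n/3$; hence it suffices to prove $t \leq n$, since then $\ell + \ell^* + t \leq n/3 + n = 4n/3$.

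To bound $t$, I would use that the restricted partition $R$ has $4n/3$ bags totalling $8n/3 + \ell$ goods, and that each bag $P^i_j$ of the original $4n/3$-MMS partition has value exactly $1$ by $4\lceil n/3\rceil$-normalization. Let $b$ and $z$ denote the number of bags of $R$ of size at least $2$ and of size $0$, respectively. The combinatorial identity $b + t + z = 4n/3$ together with the good-count inequality $2b + t \leq 8n/3 + \ell$ yields $z \geq -(t+\ell)/2$, which is only a lower bound on $z$. For an upper bound on $t$, I would pass to the value side: since every bag of size at least $1$ in $R$ contributes at most $1$ to $v_i(\{1,\ldots,8n/3+\ell\})$ (being a subset of a value-$1$ bag of $P^i$), we have $v_i(\{1,\ldots,8n/3+\ell\}) \leq t + b = 4n/3 - z$. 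Combining this with the tail-value bound extracted in the proof of Lemma \ref{save-2}---namely, the good at index $8n/3+\ell+1$ is precisely the first good added to $B_{s+1}$, and its value is strictly less than $1/3-2x$, so by the ordered assumption every good of larger index also has $v_i$-value less than $1/3-2x$---forces each size-$0$ bag of $R$ to consume several tail goods to accumulate value $1$. This pins down $z$ from below in terms of the residual value $4n/3 - v_i(\{1,\ldots,8n/3+\ell\})$ and closes the loop, giving $t \leq n$.

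The main obstacle is the final sharp accounting: the naive bound $t \leq 4n/3$ is far too weak, so one must combine the combinatorial count $b+t+z=4n/3$ with the global value identity $v_i(M)=4n/3$ and the small tail-good value bound to conclude. Careful bookkeeping of value across the three classes of bags of $R$ (empty, singleton-restricted, and size at least $2$) is what drives the bound $t \leq n$ and, together with $\ell+\ell^* \leq n/3$, completes the proof.
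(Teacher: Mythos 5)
Your first step is fine and matches the paper: $|A^2| = n/3 - (\ell+\ell^*) \ge 0$ gives $\ell+\ell^* \le n/3$ (the paper phrases the same fact as $\ell^* + 2n/3 + \ell \le s \le n$). The gap is in the second half. The paper disposes of $t$ in one line: $t \le \ell^* \le n$. With the definition of $t$ used in \cref{expowerful} --- the number of singleton bags among those restricted bundles $P^i_j \cap [8n/3+\ell]$ that meet $\{1, \ldots, \ell^*\}$ --- each bag counted by $t$ contains a distinct good from $\{1,\ldots,\ell^*\}$, so $t \le |Q| \le \ell^*$, and $\ell^* \le n$ simply because $\ell^*$ indexes one of the $n$ bags $B_1, \ldots, B_n$. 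You overlook this and instead try to prove $t \le n$ by a global accounting over all $4n/3$ restricted bags of $R$, which is both harder than necessary and, as written, not a proof.

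Concretely, the accounting you sketch does not close. You correctly observe that the purely combinatorial relations $b+t+z = 4n/3$ and $2b+t \le 8n/3+\ell$ yield nothing, and then pivot to the value side: $v_i(\{1,\ldots,8n/3+\ell\}) \le t+b = 4n/3 - z$ together with the tail bound $v_i(g) < 1/3-2x$ for $g > 8n/3+\ell$. But these inequalities point the wrong way. The first rearranges to $z \le v_i(M \setminus [8n/3+\ell])$, an \emph{upper} bound on $z$; and the observation that each empty restricted bag must recover value $1$ from tail goods worth less than $1/3$ each only lower-bounds the \emph{number of tail goods} (by roughly $4z$), which is useless since $m$ is unconstrained. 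To conclude $t \le n$ you would need a \emph{lower} bound on $b+z$ of at least $n/3$, and no step in your outline produces one; the sentence ``this pins down $z$ from below \ldots{} and closes the loop, giving $t \le n$'' is precisely the missing argument, as you yourself concede in calling the final accounting the main obstacle. As it stands, the proposal establishes only $\ell+\ell^* \le n/3$ and leaves the bound on $t$ unproved; the intended route is the immediate inequality $t \le \ell^* \le n$.
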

\begin{proof}
    We have $\ell^* + 2n/3 + \ell \leq s \leq n$. See Figure \ref{colorful} for intuition. Therefore, $\ell^* + \ell \leq n/3$. Also, $t \leq \ell^* \leq n$. Hence $\ell+\ell^*+t \leq 4n/3$.
\end{proof}
\begin{lemma}\label{k-t}
    \textcolor{red}{
    \[ \sum_{2n-\ell^* < j\leq 2n-t}v_i(j) + \sum_{8n/3 - 2\ell-t-2\ell^* < j \leq 8n/3 - 2\ell - 2t - \ell^*}v_i(j) < \ell^*-t. \]
    }
\end{lemma}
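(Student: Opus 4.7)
The plan is to pair the $\ell^*-t$ goods in the first (right) sum bijectively with the $\ell^*-t$ goods in the second (middle) sum, and to show that every such pair has combined value strictly less than $1$. Summing the $\ell^*-t$ resulting strict inequalities then yields the claimed bound. If $\ell^* = t$ both sums are empty and there is nothing to prove, so I assume $\ell^* > t$ below.

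The pivotal tool is the defining inequality for $\ell^*$. Since $B_{\ell^*+1} = \{\ell^*+1,\, 2n-\ell^*\}$, the choice of $\ell^*$ as the smallest index with $v_i(B_{\ell^*+1}) < 1$ gives
\[
v_i(\ell^*+1) + v_i(2n-\ell^*) = v_i(B_{\ell^*+1}) < 1.
\]
For any good $g$ appearing in the first sum we have $g \ge 2n-\ell^*+1$, so by the ordering of the instance $v_i(g) \le v_i(2n-\ell^*+1) \le v_i(2n-\ell^*)$. For any good $g'$ appearing in the second sum we have $g' \ge 8n/3 - 2\ell - t - 2\ell^* + 1$, and I claim $g' \ge \ell^*+1$. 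This reduces to checking $2\ell + t + 3\ell^* \le 8n/3$, which follows from the standard bookkeeping $\ell + \ell^* \le n/3$ (available because $|A^2| = n/3 - \ell - \ell^* \ge 0$) and $t \le \ell^*$, since $2\ell + t + 3\ell^* \le 4(\ell + \ell^*) \le 4n/3 \le 8n/3$. By the ordering, $v_i(g') \le v_i(\ell^*+1)$.

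Combining the two bounds, every pair $(g, g')$ from the two index ranges satisfies
\[
v_i(g) + v_i(g') \le v_i(2n-\ell^*) + v_i(\ell^*+1) = v_i(B_{\ell^*+1}) < 1.
\]
Fixing any bijection between the two ranges and summing the $\ell^*-t$ pair inequalities yields the LHS of the lemma at most $(\ell^*-t) \cdot v_i(B_{\ell^*+1}) < \ell^*-t$, which is exactly the claim. I do not anticipate a significant obstacle here: the argument uses only the minimality of $\ell^*$ and the two counting bounds $\ell + \ell^* \le n/3$ and $t \le \ell^*$. In particular, the more delicate Lemma~\ref{powerful} is not required for this red-goods bound and should instead be reserved for the blue-goods bound handled in Lemma~\ref{k+l}.
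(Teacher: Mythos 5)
Your proposal is correct, and it is a close cousin of the paper's argument rather than a departure from it. The paper's proof is a one-liner: the union of the two index ranges is a set $B'$ of exactly $2(\ell^*-t)$ goods, every one of which has index at least $2n-\ell^*$ (the second range lies strictly to the right of $2n-t$, by the disjointness inequality $8n/3-2\ell-t-2\ell^*+1 > 2n-t$ established just after Lemma~\ref{expowerful}), so Observation~\ref{half-bound} gives $v_i(g) < 1/2$ for each and hence $v_i(B') < \ell^*-t$. You instead pair the two ranges bijectively and bound each pair by $v_i(\ell^*+1)+v_i(2n-\ell^*) = v_i(B_{\ell^*+1}) < 1$. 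Since Observation~\ref{half-bound} is itself proved from exactly this inequality (via $2v_i(2n-\ell^*) \le v_i(\ell^*+1)+v_i(2n-\ell^*) < 1$), the two arguments rest on the same fact; the paper's version is slightly more economical because it never needs to locate the second range relative to $\ell^*+1$, only relative to $2n-\ell^*$, which it has already done.

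One step of your verification is shaky, though the conclusion survives. You justify $g' \ge \ell^*+1$ via the chain $2\ell+t+3\ell^* \le 4(\ell+\ell^*) \le 4n/3$, but the first inequality is equivalent to $t \le 2\ell+\ell^*$, which can fail because $\ell$ (defined by $|A^1| = 2n/3+\ell$) may be negative; the paper only assumes $\ell+\ell^* > 0$ in this regime. The needed bound $2\ell+t+3\ell^* \le 8n/3$ is still true, e.g.\ from $2(\ell+\ell^*) \le 2n/3$, $t \le \ell^*$, and $\ell^* \le n-1$; or, more directly, note that the second range starts above $2n-t \ge 2n-\ell^* \ge n+1 > \ell^*+1$, which is exactly the disjointness fact the paper records before proving Lemma~\ref{expowerful}. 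With that repair your proof is sound.
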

\begin{proof}
    Let $B' = \{2n-\ell^*+1, \ldots, 2n-t\} \cup \{8n/3 - 2\ell-t-2\ell^*+1, \ldots, 8n/3 - 2\ell - 2t - \ell^*\}$. $|B'| = 2(\ell^*-t)$ and by Observation \ref{half-bound} for all goods $g \in B'$, $v_i(g) < 1/2$. Therefore, $v_i(B') < \ell^*-t$.
\end{proof}
\begin{lemma}\label{k+l}
    \textcolor{blue}{
\[ \sum_{t < j \leq \ell^*} v_i(j) + \sum_{8n/3 - 2\ell - 2t - \ell^* < j \leq 8n/3 + \ell}v_i(j) \leq \ell^* + \ell. \]
    }
\end{lemma}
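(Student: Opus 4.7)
The plan is to bound $v_i(S_1)+v_i(S_2)$ by partitioning the $2(\ell^*+\ell)$ goods of $S_1\cup S_2$ into $\ell^*+\ell$ pairs, each of total value at most~$1$, and summing over the pairs. Set $S_1=\{t+1,\ldots,\ell^*\}$ (size $\ell^*-t$) and $S_2=\{8n/3-2\ell-2t-\ell^*+1,\ldots,8n/3+\ell\}$ (size $2\ell+t+\ell^*$), so the pair count works out: $(\ell^*-t)+(\ell+t)=\ell^*+\ell$.

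The pairing I would use is as follows. Match each of the $\ell^*-t$ goods in $S_1$ to one of the $\ell^*-t$ largest-indexed goods of $S_2$, namely those at positions $\{8n/3+\ell-(\ell^*-t)+1,\ldots,8n/3+\ell\}$; call these the \emph{cross} pairs. Group the remaining $2(\ell+t)$ goods of $S_2$, which sit at positions $\{8n/3-2\ell-2t-\ell^*+1,\ldots,8n/3-\ell^*+t+\ell\}$, arbitrarily into $\ell+t$ \emph{same-side} pairs. For the same-side pairs, observe that $\ell+t\le\ell+\ell^*\le n/3$ (the first bound is trivial since $t\le\ell^*$ as used in Lemma~\ref{bound}, and the second bound is proven inside Lemma~\ref{bound}); hence every position involved exceeds $2n-\ell^*$, so Observation~\ref{half-bound} makes each such good's value strictly below~$1/2$ and each pair sums to strictly below~$1$. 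This accounts for value at most $\ell+t$.

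The cross pairs are the crux. Per-pair bounds alone give $v_i(t+k)+v_i(8n/3+\ell-k+1)\le 1+\tfrac12=\tfrac32$ (using $v_i(t+k)\le 1$ from Proposition~\ref{prop:exact}(\ref{exact:1}) and $v_i(8n/3+\ell-k+1)<\tfrac12$ from Observation~\ref{half-bound}), which overshoots the target by exactly $\tfrac12(\ell^*-t)$. To recover this slack, my plan is to invoke Lemma~\ref{powerful}: the top $t+\ell$ bags of $R$, sorted by size, contain at least $3(t+\ell)$ of the first $8n/3+\ell$ goods, so these bags carry an aggregate ``excess'' of $t+\ell$ goods beyond two per bag, while their total value is at most $t+\ell$ (each $v_i(R_j)\le v_i(P^i_j)=1$). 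Combining this excess with the identity $v_i(P^i_j)=1$ for every bag of the $4n/3$-MMS partition $P^i$, one should obtain---via a Hall-style matching/counting argument---an injection from $S_1$ into $S_2$ such that each matched pair lies inside a single bag $P^i_j$, making the pair's value at most $v_i(P^i_j)=1$.

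The main obstacle is precisely this matching step: translating the purely size-based conclusion of Lemma~\ref{powerful} into a structural statement about which specific positions cohabit bags of $P^i$. I expect this to require a careful case split on whether $|R_{t+\ell}|\ge 3$ or $|R_{t+\ell}|\le 2$ (as in the proof of Lemma~\ref{powerful}) and a bookkeeping argument showing that top-$\ell^*$ goods left unmatched in same bags as $S_2$ goods would force a violation of either the bag-size lower bound from Lemma~\ref{powerful} or the bag-value identity.
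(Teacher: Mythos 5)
Your proposal has two genuine gaps, one arithmetic and one structural. First, the set $S_2=\{8n/3-2\ell-2t-\ell^*+1,\ldots,8n/3+\ell\}$ has $3\ell+2t+\ell^*$ elements, not $2\ell+t+\ell^*$, so $|S_1|+|S_2|=2(\ell^*+\ell)+(\ell+t)$ and no partition of $S_1\cup S_2$ into $\ell^*+\ell$ pairs exists; indeed your own ``same-side'' range $\{8n/3-2\ell-2t-\ell^*+1,\ldots,8n/3+\ell-\ell^*+t\}$ contains $3(\ell+t)$ goods, forcing $\ell+t$ \emph{triples} rather than pairs. Bounding a triple by three applications of Observation~\ref{half-bound} gives only $3/2$ per group, so even the part of your argument you treat as easy already overshoots the target by $(\ell+t)/2$: groups of three goods with total value at most $1$ can only be certified by the MMS-bag structure, not by per-good bounds. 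Second, the step you yourself flag as the crux --- an injection from $S_1$ into $S_2$ such that each matched pair lies inside a single bag $P^i_j$ --- is not what Lemma~\ref{powerful} yields and is false in general: after the swapping argument, the bag containing a good $j\in\{t+1,\ldots,\ell^*\}$ is only guaranteed to contain \emph{some} second good of $[8n/3+\ell]$, with no control over whether that good lies in $S_2$ (and a single bag may contain several goods of $S_1$, which would further break the disjointness of your pairs).

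The paper's proof avoids both problems by never pairing goods. It selects $\ell^*+\ell$ whole bags of the restricted partition $R$ (the $t+\ell$ largest bags $R_1,\ldots,R_{t+\ell}$ together with the bags $T_{t+1},\ldots,T_{\ell^*}$ containing $S_1$, padded with large bags if needed), shows via Lemma~\ref{powerful} and Lemma~\ref{bound} that these bags jointly contain at least $3\ell+2\ell^*+t$ goods, hence $S_1$ plus at least $3\ell+2t+\ell^*=|S_2|$ \emph{other} goods of $[8n/3+\ell]$, and then combines two facts: each selected bag has value at most $1$, and any $|S_2|$ distinct goods of $[8n/3+\ell]$ have total value at least $v_i(S_2)$ because $S_2$ consists of the $|S_2|$ least valuable goods there. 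This domination-by-arbitrary-other-goods step is exactly what should replace your matching requirement if you want to repair the argument.
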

\begin{proof}
    Recall that $\{R_1, \ldots, R_{4n/3}\}$ is the set of bags in the $1$-out-of-$4n/3$ MMS partition of agent $i$ after removing goods $\{8n/3+\ell+1, \ldots, m\}$. Moreover, we know exactly $t$ of these bags have size $1$. If there is a bag $R_j = \{g\}$ for $g>t$, there must be a good $g' \in [t]$ such that $g' \in R_{j'}$ and $|R_{j'}|>1$. Swap the goods $g$ and $g'$ between $R_j$ and $R_{j'}$ as long as such good $g$ exists. Note that $v_i(R_{j'})$ can only decrease and $v_i(R_j)=v_i(g') \leq 1$. Therefore, in the end of this process for all $j \in [4n/3]$, $v_i(R_j) \leq 1$ and we can assume bags containing goods $1, \ldots, t$ are of size $1$ and bags containing goods $t+1, \ldots, \ell^*$ 
    are of a size more than $1$. Recall that $|R_1| \geq \ldots \geq |R_{4n/3}|$. Let $T_{j}$ be the bag that contains good $j$.

    Consider the bags $B=\{R_1, \ldots, R_{t+\ell}\} \cup \{T_{t+1}, \ldots, T_{\ell^*}\}$. If $|B| < \ell^*+\ell$, keep adding a bag with the largest number of goods to $B$ until there are exactly $\ell^*+\ell$ bags in $B$. First we show that $B$ contains at least $3\ell + 2\ell^* + t$ goods. Namely,
    \[ \sum_{S \in B} |S| \geq 3\ell + 2\ell^* + t. \]
    By Lemma \ref{powerful}, $\sum_{1 \leq j \leq t+\ell} |R_j| \geq 3(t+\ell)$. If all the remaining $\ell^*-t$ bags in $B \setminus \{R_1, \ldots, R_{t+\ell}\}$ are of size $2$, then $\sum_{S \in B} |S| \geq 3(t+\ell)+2(\ell^*-t) = 3\ell + 2\ell^* +t$. Otherwise, there is a bag in $B$ of size at most $1$; hence, all bags outside $B$ are also of size at most $1$. So we have
    \begin{align*}
        \frac{8n}{3} + \ell &= \sum_{S \in B} |S| + \sum_{S \notin B} |S| \\
        &\leq \sum_{S \in B} |S| + (\frac{4n}{3} - \ell^* - \ell).
    \end{align*}
    Therefore,
    \begin{align*}
        \sum_{S \in B} |S| &\geq 4n/3 + 2\ell + \ell^* \\
        &\geq 3\ell + 2\ell^* +t. &\tag{Lemma \ref{bound}}
    \end{align*}
    Note that the goods $\{t+1, \ldots, \ell^*\}$ are contained in $B$ and moreover, $B$ contains at least $3\ell + 2\ell^* +t - (\ell^*-t) = 3\ell + 2t + \ell^*$ other goods. Therefore,
    \begin{align*}
        \ell^*+\ell &\geq \sum_{S \in B} v_i(S) \\
        &\geq \sum_{t < j \leq \ell^*} v_i(j) + \sum_{8n/3 - 2\ell - 2t - \ell^* < j \leq 8n/3 + \ell} v_i(j).
    \end{align*}
The last inequality follows because we used the $3\ell + 2t + \ell^*$ lowest valued goods in $[8n/3+\ell]$.
\end{proof}

\subsection{\texorpdfstring{$\mathbf{v_i(2n-\ell^*)<1/3}$}{vi(2n-l*) < 1/3}}
\label{negative}

Let $r^*$ be largest such that $v_i(B_{r^*})<1$. That is, $B_{r*}$ is the rightmost bag in Figure \ref{fig:bags} with a value less than $1$ to agent $i$.
\begin{lemma}\label{r-star-bound}
    If $v_i(2n-r^*+1) \leq 1/3$, then $r^* < 2n/3$.
\end{lemma}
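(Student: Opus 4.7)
The plan is to proceed by contradiction: assume $r^* \geq 2n/3$ while the hypothesis $v_i(2n-r^*+1) \leq 1/3$ holds, and derive a contradiction from the standing assumption that agent $i$ receives no bundle. First I would observe that for every $j > r^*$ the bag $B_j$ initially satisfies $v_i(B_j) \geq 1$, so at round $j$ agent $i$ is eligible; hence the bag-filling loop cannot terminate at a round beyond $r^*$. Consequently, the algorithm terminates at some round $k^* \leq r^*$, and at termination $v_i(\hat{B}_{k^*}) < 1$ (else the bag would have been allocated to $i$).

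The next step is to split
\begin{equation*}
v_i(M) = \sum_{j<k^*} v_i(\hat{B}_j) \;+\; v_i(\hat{B}_{k^*}) \;+\; \sum_{j>k^*} v_i(B_j) \;=\; 4n/3
\end{equation*}
and bound each piece. Applying Observation~\ref{upper-bound-left} with $k = r^*$ together with the hypothesis gives $v_i(\hat{B}_j) \leq 1 + v_i(2n-r^*+1) \leq 4/3$ for every $j \leq r^*$, so $\sum_{j<k^*} v_i(\hat{B}_j) \leq 4(k^*-1)/3$; the same bound $v_i(B_j) \leq 4/3$ holds for the unprocessed bags with $j \in [k^*+1, r^*]$. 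For $j \in [r^*+1, n]$, monotonicity of $v_i$ together with $j \leq n \leq 2n-j+1$ yields $v_i(B_j) \leq 2V$ where $V \defeq v_i(r^*+1)$. Plugging these into the sum equality, along with the strict bound $v_i(\hat{B}_{k^*}) < 1$, I would obtain after rearrangement
\begin{equation*}
V > \tfrac{2}{3} + \tfrac{1}{6(n-r^*)}.
\end{equation*}

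The contradiction then comes from matching this lower bound with an upper bound on $V$ forced by the MMS structure. Since $v_i(r^*) \geq V > 2/3$ combined with $v_i(r^*) + v_i(2n-r^*+1) < 1$ already gives $v_i(2n-r^*+1) < 1/3$ strictly (though not yet contradicting the hypothesis), the natural tool is Lemma~\ref{lem:Csum}: applied with $k = r^*+1$ it yields $v_i([r^*+1,\,8n/3-r^*]) \leq 4n/3-r^*$, so in particular $V \leq 4n/3 - r^*$. The main obstacle is arranging the resulting algebra so that this cap is strictly smaller than $2/3 + 1/(6(n-r^*))$ precisely when $r^* \geq 2n/3$; the critical case is the boundary $r^* = 2n/3$ where both bounds are tightest. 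There it is likely necessary to combine Lemma~\ref{lem:Csum} with the lower bound $\sum_{j=r^*+1}^n v_i(B_j) \geq n-r^*$ from the bags past $r^*$ (which pins down the total value in the middle range $[r^*+1, 2n-r^*]$), and to exploit the integrality of $r^*$ together with the standing assumption that $n$ is a multiple of $3$ to close the gap.
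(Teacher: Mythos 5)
Your setup and the derivation of $V > \tfrac{2}{3} + \tfrac{1}{6(n-r^*)}$ are correct, but there is a genuine gap at exactly the point you flag: no upper bound on $V$ that you propose (nor any available from the facts you have collected) contradicts this lower bound. The cap $V \le 4n/3 - r^*$ from Lemma~\ref{lem:Csum} is of order $n$ and hence vacuous here, and the only structural bound, $V \le v_i(r^*) < 1 - v_i(2n-r^*+1) = \tfrac{2}{3} + x$ where $x \defeq \tfrac{1}{3} - v_i(2n-r^*+1) \in [0,\tfrac13]$, exceeds $\tfrac{2}{3} + \tfrac{1}{6(n-r^*)}$ whenever $x > \tfrac{1}{6(n-r^*)}$ --- a regime fully permitted by the hypothesis. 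Integrality of $r^*$ and $3 \mid n$ will not rescue this. The information was lost at the step where you relaxed $1 + v_i(2n-r^*+1)$ to $\tfrac43$ for the bags $j \le r^*$.

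The paper's proof keeps that term. With $x$ as above, Observation~\ref{upper-bound-left} (with $k=r^*$) gives $v_i(\hat{B}_j) \le \tfrac43 - x$ for each of the $r^*$ bags with $j \le r^*$, and Observation~\ref{upper-bound-right} (with $k=r^*$) gives $v_i(\hat{B}_j) \le \max\bigl(1+v_i(2n-r^*+1),\, 2v_i(r^*)\bigr) < \max\bigl(\tfrac43-x,\, \tfrac43+2x\bigr) = \tfrac43+2x$ for each of the $n-r^*$ bags with $j > r^*$, using $v_i(r^*) < \tfrac23 + x$. Summing, $\tfrac{4n}{3} = v_i(M) < r^*(\tfrac43-x) + (n-r^*)(\tfrac43+2x) = \tfrac{4n}{3} + x(2n-3r^*)$, so $x(2n-3r^*) > 0$, which forces $r^* < 2n/3$. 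The essential mechanism is that the per-bag saving of $x$ on the $r^*$ left bags beats the per-bag excess of $2x$ on the $n-r^*$ right bags precisely when $r^* \ge 2(n-r^*)$; your version discards the saving and therefore cannot close. (Your bound $v_i(B_j) \le 2V$ on the right bags is fine --- it is dominated by $\tfrac43+2x$ --- but it must be paired with the $\tfrac43-x$ bound on the left, not with $\tfrac43$.)
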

\begin{proof}
    Since $1>v_i(B_{r^*})=v_i(r^*)+v_i(2n-r^*+1)$, we have $v_i(r^*)<2/3+x$.
    By Observation \ref{upper-bound-left}, for all $j \leq r^*$, $v_i(\hat{B}_j) \leq 4/3-x$.
    Also, by Observation \ref{upper-bound-right}, for all $j>r^*$, $v_i(\hat{B}_j)<4/3+2x$.
    Hence, we have
    \begin{align*}
        \frac{4n}{3} &= v_i(M) \\
        &= \sum_{j \leq r^*} v_i(\hat{B}_j) + \sum_{j > r^*} v_i(\hat{B}_j) \\
        &< r^*(\frac{4}{3}-x) + (n-r^*)(\frac{4}{3}+2x) \\
        &= \frac{4n}{3} + x(2n-3r^*).
    \end{align*}
    Therefore, $r^* < 2n/3$.
\end{proof}
\begin{lemma}
    $v_i(2n-r^*+1) > 1/3$.
\end{lemma}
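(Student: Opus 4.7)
I will prove this by contradiction: assume $v_i(2n-r^*+1) \leq 1/3$ and derive a contradiction using a summing argument that exploits \emph{both} the section's hypothesis $v_i(2n-\ell^*) < 1/3$ and the assumption on $r^*$. Set $y \defeq 1/3 - v_i(2n-r^*+1) \geq 0$ and $z \defeq 1/3 - v_i(2n-\ell^*) > 0$. Since $v_i(B_{r^*}) = v_i(r^*) + v_i(2n-r^*+1) < 1$, this gives $v_i(r^*) < 2/3 + y$. By Lemma \ref{r-star-bound}, $r^* < 2n/3$, so in particular $2n - 3r^* > 0$.

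Next I would split the $n$ bags into three ranges and upper bound $v_i(\hat{B}_j)$ in each, taking advantage of two different bounds on the tail values. For $j \leq \ell^*$, the analog of Observation \ref{no-change} gives $\hat{B}_j = B_j$, and since $v_i(2n-j+1) \leq v_i(2n-\ell^*+1) \leq v_i(2n-\ell^*)$, one obtains the \emph{sharper} bound $v_i(\hat{B}_j) \leq 4/3 - z$. For $\ell^* < j \leq r^*$, Observation \ref{upper-bound-left} applied with $k = r^*$ yields $v_i(\hat{B}_j) \leq 1 + v_i(2n-r^*+1) = 4/3 - y$. For $j > r^*$, Observation \ref{upper-bound-right} with $k = r^*$ together with $v_i(r^*) < 2/3 + y$ gives $v_i(\hat{B}_j) \leq \max(4/3-y, 2v_i(r^*)) < 4/3 + 2y$, where the inequality is \emph{strict} whenever such a bag exists (and $r^* < n$ since $r^* < 2n/3$).

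Since all $m$ goods are distributed into the $n$ bags by the time the algorithm terminates, $\sum_{j=1}^n v_i(\hat{B}_j) = v_i(M) = 4n/3$. Summing the three bounds yields
\begin{align*}
\frac{4n}{3} \;=\; \sum_{j=1}^n v_i(\hat{B}_j) \;<\; \ell^*\!\left(\tfrac{4}{3} - z\right) + (r^* - \ell^*)\!\left(\tfrac{4}{3} - y\right) + (n - r^*)\!\left(\tfrac{4}{3} + 2y\right),
\end{align*}
which simplifies to $0 < -\ell^* z + (\ell^* + 2n - 3r^*)\, y$, i.e., $\ell^* z < (\ell^* + 2n - 3r^*)\, y$.

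The main obstacle is closing the case $y > 0$: the inequality $\ell^* z < (\ell^*+2n-3r^*)y$ is not automatically false, so I expect to need a sharper bound on either $v_i(\hat{B}_j)$ for $j > r^*$ (e.g., splitting into the subcases $\hat{B}_j = B_j$ vs.\ $\hat{B}_j \neq B_j$ and using a stronger bound on $v_i(r^*)$ from the bag-filling process) or a structural constraint relating $y$ and $z$. When $y = 0$ the argument above already gives $\ell^* z < 0$, which contradicts $\ell^* \geq 0$ and $z > 0$; I would use an analogous sharpening to rule out $y > 0$ and conclude $v_i(2n-r^*+1) > 1/3$.
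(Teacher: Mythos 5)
Your computations are correct as far as they go, but the proof has a genuine gap, which you yourself flag: the case $y>0$ is not closed, and it cannot be closed by the route you set up. The final inequality $\ell^* z < (\ell^* + 2n - 3r^*)\,y$ is genuinely satisfiable: since $\ell^* < r^*$ we have $2n-\ell^* \ge 2n-r^*+1$, hence $z \ge y$, and the only consequence one can squeeze out is $0 < (2n-3r^*)y$, which is perfectly consistent with $r^* < 2n/3$ and $y>0$. The root cause is that your bound on the tail bags $j > r^*$ is per-bag: each such bag is bounded by $\max(4/3-y,\,2v_i(r^*)) < 4/3+2y$, and summing these individual bounds is too lossy. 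No refinement of the three-way split by $\ell^*$ will fix this, because the slack you gain on the first $\ell^*$ bags (the term $\ell^* z$) is dominated by the slack you lose on the last $n-r^*$ bags.

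The missing ingredient is an \emph{aggregate} bound on $\sum_{j>r^*} v_i(\hat{B}_j)$ that exploits the structure of agent $i$'s $d$-MMS partition, not just the ordering of the goods. The paper observes that for $j>r^*$ we have $\hat{B}_j = B_j$ (by maximality of $r^*$), so $\sum_{j>r^*} v_i(\hat{B}_j) = v_i(\{r^*+1,\ldots,2n-r^*\})$, and then splits this set as $\{r^*+1,\ldots,2n/3+r^*\}$ (at most $\tfrac{2n}{3}$ goods, each of value less than $v_i(r^*) < 2/3+x$) together with $\{2n/3+r^*+1,\ldots,2n-r^*\} = \bigcup_{j=2n/3+r^*+1}^{4n/3} C_j$, whose total value is at most $2n/3 - r^*$ by Lemma~\ref{lem:Csum}. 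This gives $\sum_{j>r^*} v_i(\hat{B}_j) < \tfrac{10n}{9} - r^* + \tfrac{2nx}{3}$, which is strictly stronger than $(n-r^*)(\tfrac{4}{3}+2x)$ whenever $r^* < 2n/3$, and combined with $\sum_{j\le r^*} v_i(\hat{B}_j) \le r^*(\tfrac{4}{3}-x)$ it yields $\tfrac{2n}{9} < r^*(\tfrac13 - x) + \tfrac{2n}{3}x \le \tfrac{2n}{9}$, a contradiction for every $x \ge 0$. Lemma~\ref{lem:Csum} (or some equivalent consequence of all MMS bundles having value exactly $1$) is exactly the ``structural constraint'' you anticipated needing; without it the argument does not go through.
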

\begin{proof}
    Towards contradiction, assume $v_i(2n-r^*+1) = 1/3-x$ for $x \geq 0$. By Lemma \ref{r-star-bound}, $r^* < 2n/3$.
    \begin{claim}\label{claim-r-star}
        $\sum_{j > r^*} v_i(\hat{B}_j) < \frac{10n}{9}-r^* + \frac{2nx}{3}$.
    \end{claim}
    \begin{claimproof}
    Note that by the definition of $r^*$, for all $j > r^*$, $\hat{B}_j = B_j$.
    By Lemma \ref{lem:Csum}, $v_i(\{2n/3+r^*+1, \ldots, 2n-r^*\}) \leq 2n/3-r^*$. Also since $v_i(r^*) < 2/3+x$, $v_i(\{r^*+1, \ldots, 2n/3+r^*\}) \leq \frac{2n}{3}(\frac{2}{3}+x)$. In total, we get
    \begin{align*}
        \sum_{j > r^*} v_i(\hat{B}_j) &= \sum_{j > r^*} v_i(B_j) \\
        &= v_i(\{r^*+1, \ldots, 2n/3+r^*\}) + v_i(\{2n/3 +r^*+1, \ldots, 2n-r^*\}) \\
        &< \frac{2n}{3}(\frac{2}{3}+x) + \frac{2n}{3}-r^* \\
        &= \frac{10n}{9}-r^* + \frac{2nx}{3}.
    \end{align*}
    Therefore, Claim \ref{claim-r-star} holds.
    \end{claimproof}
    We have
    \begin{align*}
        \frac{4n}{3} &= v_i(M) \\
        &= \sum_{j \leq r^*} v_i(\hat{B}_j) + \sum_{j > r^*} v_i(\hat{B}_j) \\
        &< r^*(\frac{4}{3} -x) + \frac{10n}{9}-r^* + \frac{2nx}{3} &\tag{Observation \ref{upper-bound-left} and Claim \ref{claim-r-star}}\\
        &= r^*(\frac{1}{3}-x) + \frac{10n}{9} + \frac{2nx}{3}.
    \end{align*}
    Thus,
    \begin{align*}
        \frac{2n}{9} &< r^*(\frac{1}{3}-x) + \frac{2n}{3}(x) \\
        &\leq \frac{2n}{3} \cdot \frac{1}{3}, \tag{$r^* \leq 2n/3$ by Lemma \ref{r-star-bound}}
    \end{align*}
    which is a contradiction. Hence, $v_i(2n-r^*+1) > 1/3$.
\end{proof}

Recall that $\ell^*$ be the smallest such that $v_i(B_{\ell^*+1}) < 1$, i.e., $B_{\ell^*+1}$ is the leftmost bag in Figure \ref{fig:bags} with value less than $1$ to agent $i$. Let $\ell$ be largest such that $v_i(B_\ell)<1$ and $v_i(2n- \ell +1) \leq 1/3$. Since $v_i(B_{\ell^*+1})<1$ and $v_i(2n- \ell^*) < 1/3$, such an index exists and $\ell \ge \ell^*+1$. Also, let $r$ be smallest such that $v_i(B_{r+1})<1$ and $v_i(2n-r) \geq 1/3$. Again, since $v_i(B_{r^*})<1$ and $v_i(2n- r^*+1) > 1/3$, such an index exists. We set $x := 1/3 - v_i(2n- \ell +1)$ and $y := v_i(2n-r) - 1/3$. See Figure \ref{r-l-fig}.
\begin{observation}\label{lessThan13}
    $x < 1/3$.
\end{observation}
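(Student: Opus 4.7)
The plan is to argue by contradiction. Suppose $v_i(2n - \ell + 1) = 0$, equivalently $x = 1/3$. Since the instance is ordered, $v_i(g) = 0$ for every $g \geq 2n - \ell + 1$, so all positively-valued goods for agent $i$ lie in $[2n - \ell]$ and together must sum to $v_i(M) = 4n/3$ by $d$-normalization (Proposition \ref{prop:trivial}). Moreover, $\ell^* + 1$ satisfies the defining conditions of $\ell$: we have $v_i(B_{\ell^* + 1}) < 1$ by the minimality of $\ell^*$, and $v_i(2n - \ell^*) < 1/3$ by the case assumption of this section. Hence $\ell \geq \ell^* + 1$, i.e., $\ell^* \leq \ell - 1$.

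Next, for each $j \in [\ell^*]$ the ``large'' good $2n - j + 1$ satisfies $2n - j + 1 \geq 2n - \ell + 1$ and hence $v_i(2n - j + 1) = 0$, giving $v_i(B_j) = v_i(j)$. Combined with $v_i(B_j) \geq 1$ (by minimality of $\ell^*$) and $v_i(j) \leq v_i(1) \leq 1$ (Proposition \ref{prop:exact}), this forces $v_i(j) = 1$ for every $j \in [\ell^*]$. The same zero-valuation fact applied to $B_{\ell^* + 1}$ gives $v_i(2n - \ell^*) = 0$, so $v_i(\ell^* + 1) = v_i(B_{\ell^* + 1}) < 1$, and by ordering $v_i(g) < 1$ for every $g \geq \ell^* + 1$.

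The contradiction will come from a counting argument on a $d$-MMS partition $P^i$ of agent $i$. Since each of the $d = 4n/3$ bundles has value at least $1$ and they sum to $4n/3$, every bundle has value exactly $1$. No bundle can contain two goods of value $1$ (its value would then exceed $1$), so the $\ell^*$ value-$1$ goods of $[\ell^*]$ occupy $\ell^*$ distinct bundles, each filled to value $1$ by a single such good, and no further positively-valued good can join them. The remaining $4n/3 - \ell^*$ bundles must therefore reach value $1$ using only goods from $\{\ell^* + 1, \ldots, 2n - \ell\}$, each of value strictly less than $1$, so each such bundle needs at least two of them. Counting positively-valued goods yields $2n - \ell - \ell^* \geq 2(4n/3 - \ell^*)$, which simplifies to $\ell^* \geq \ell + 2n/3$. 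Combined with $\ell^* \leq \ell - 1$, this forces $-1 \geq 2n/3$, a contradiction.

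The main thing to be careful about is the bookkeeping of index inequalities (in particular $\ell^* \leq \ell - 1$ and that $2n - j + 1$ lies in the zero-valuation region for every $j \leq \ell^*$), and the clean justification of ``at most one value-$1$ good per bundle'' from the exact-$1$-per-bundle structure. Once these are in place, the final inequality is just pigeonhole.
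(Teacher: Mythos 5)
Your proof is correct and rests on the same core idea as the paper's: once $v_i(2n-\ell+1)=0$, at most $2n-\ell < 2\cdot\frac{4n}{3}$ goods have positive value, and a pigeonhole count on the normalized $\frac{4n}{3}$-MMS partition forces too many bundles to be covered by a single value-$1$ good. The paper extracts the contradiction slightly more directly (it concludes that at least $\ell$ goods have value $1$, contradicting $v_i(B_{\ell^*+1})<1$), whereas you first pin down the value-$1$ goods from the bag structure and then contradict the partition's counting capacity, but the mechanism is the same.
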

\begin{proof}
    Towards a contradiction, assume $x=1/3$. Therefore, $v_i(2n-\ell+1)=0$. Let $k < 2n-\ell+1$ be the number of goods with a value larger than $0$ to agent $i$. Consider $(P^i_1 \cap [k], \ldots, P^i_{4n/3} \cap [k])$. There are at least $\ell$ many indices $j$ such that, $|P^i_j \cap [k]|=1$. Since $\mathcal{I}$ is $4n/3$-normalized, $v_i(1) = \ldots = v_i(\ell)=1$ which is a contradiction with $v_i(B_{\ell^*+1})<1$.
\end{proof}

\begin{observation}\label{y-bound}
    $y< 1/6$.
\end{observation}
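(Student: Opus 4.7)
The plan is to mirror the argument used in Observation~\ref{half-bound}, exploiting that the bag $B_{r+1}$ has value strictly less than $1$ together with the ordered property of the instance.

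First I would observe that the definition of $r$ forces $r+1 \le n$ (so that $B_{r+1}$ exists as one of the $n$ initial bags), hence $r \le n-1$. From this it follows that $r+1 \le 2n-r$, and since the instance is ordered (so $v_i(1) \ge v_i(2) \ge \cdots \ge v_i(m)$), we have $v_i(r+1) \ge v_i(2n-r)$.

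Next I would combine these facts with the defining inequality $v_i(B_{r+1}) < 1$. Unrolling the bag,
\begin{equation*}
1 \;>\; v_i(B_{r+1}) \;=\; v_i(r+1) + v_i(2n-r) \;\ge\; 2\,v_i(2n-r),
\end{equation*}
so $v_i(2n-r) < 1/2$. Subtracting $1/3$ from both sides gives $y = v_i(2n-r) - 1/3 < 1/6$, as desired.

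There is essentially no obstacle here; this is a direct consequence of the definition of $r$ and the ordered structure, and the argument is just the counterpart of Observation~\ref{half-bound} adapted from $\ell^*$ to $r$. The only subtlety worth flagging is ensuring $r \le n-1$ so that the indexing of the bag $B_{r+1}$ is valid and the comparison $r+1 \le 2n-r$ goes through.
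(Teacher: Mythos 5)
Your proof is correct and is essentially the paper's own argument: both bound $v_i(2n-r)$ by half the value of the bag containing it, which is less than $1$. (The paper's proof writes $v_i(B_r)/2$, which appears to be a typo for $v_i(B_{r+1})/2$; your version, using $B_{r+1}=\{r+1,2n-r\}$ and the check that $r\le n-1$, is the correct reading.)
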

\begin{proof}
    We have $1/3 + y = v_i(2n-r) \leq v_i(B_r)/2 < 1/2$. Thus, $y<1/6$.
\end{proof}
\begin{figure}
    \centering
    \scalebox{\figScale}{\begin{tikzpicture}
[scale=1,
 good/.style={circle, draw=black, thick, minimum size=30pt},
]

\draw[blue, fill=blue!5, very thick] (-0.4-0.35,3.5) rectangle (0.26+4,0.3);
\node[scale=0.7] at (1.8, 0) {\textcolor{blue}{$v_i(\hat{B}_j) \leq 4/3-x$}};
\node[scale=0.7] at (1.8, -0.5) {\textcolor{blue}{by Corollary \ref{cor-upper-left}}};

\draw[blue, fill=blue!5, very thick] (-0.4-0.35+8.5,3.5) rectangle (0.26+4+8.5,0.3);
\node[scale=0.7] at (1.8+8.5, 0) {\textcolor{blue}{$v_i(\hat{B}_j) \leq \max(4/3-x, 4/3-2y)$}};
\node[scale=0.7] at (1.8+8.5, -0.5) {\textcolor{blue}{by Corollary \ref{cor-upper-right}}};

\draw[red, fill=red!5, very thick] (-0.4-0.35+5.1,3.5) rectangle (0.26+4+3.4,0.3);
\node[scale=0.7] at (1.8+4.3, 0) {\textcolor{red}{$1 \leq v_i(\hat{B}_j) < 1+x+y$}};
\node[scale=0.7] at (1.8+4.3, -0.5) {\textcolor{red}{by Observation \ref{middle-block}}};

\draw[black, very thick] (-0.4-0.25,0.8) rectangle (0.4+0.25,3.4);

\node[good]      at (0,2.75)      {$\scriptstyle{2n}$};
\node[good]      at (0,1.5)      {$\scriptstyle{1}$};

\node at (0, 0.5) {$\scriptstyle{B_1}$};

\filldraw[color=black!60, fill=black!5, thick](1.6, 2) circle (0.02);
\filldraw[color=black!60, fill=black!5, thick](1.7, 2) circle (0.02);
\filldraw[color=black!60, fill=black!5, thick](1.8, 2) circle (0.02);

\draw[black, very thick] (-1.4-0.25 +4.5,0.8) rectangle (-0.6+0.25 +4.5,3.4);

\node[good, scale=0.8]      at (0 +3.5,2.75)      {$\scriptstyle{2n-\ell+1}$};
\node[good]      at (0 +3.5,1.5)      {$\scriptstyle{\ell}$};

\node at (0 +3.5, 0.5) {$\scriptstyle{B_\ell}$};

\filldraw[color=black!60, fill=black!5, thick](1+4.9, 2) circle (0.02);
\filldraw[color=black!60, fill=black!5, thick](1.1+4.9, 2) circle (0.02);
\filldraw[color=black!60, fill=black!5, thick](1.2+4.9, 2) circle (0.02);

\draw[black, very thick] (-3.9-0.25 +12,0.8) rectangle (-3.1+0.25 +12,3.4);

\node[good]      at (-0.5 +9,2.75)      {$\scriptstyle{2n-r}$};
\node[good]      at (-0.5 +9,1.5)      {$\scriptstyle{r+1}$};

\node at (-0.5 +9, 0.5) {$\scriptstyle{B_{r+1}}$};

\filldraw[color=black!60, fill=black!5, thick](5+5.1, 2) circle (0.02);
\filldraw[color=black!60, fill=black!5, thick](5.1+5.1, 2) circle (0.02);
\filldraw[color=black!60, fill=black!5, thick](5.2+5.1, 2) circle (0.02);

\draw[black, very thick] (-0.4-0.25 +12,0.8) rectangle (0.4+0.25 +12,3.4);

\node[good]      at (0 +12,2.75)      {$\scriptstyle{n+1}$};
\node[good]      at (0 +12,1.5)      {$\scriptstyle{n}$};

\node at (0 +12, 0.5) {$\scriptstyle{B_n}$};

\end{tikzpicture}}
    \captionAndDescr{$v_i(2n-\ell+1)=1/3-x$ and $v_i(2n-r)=1/3+y$.}
    \label{r-l-fig}
\end{figure}
\begin{corollary}[of Observation \ref{upper-bound-left}]\label{cor-upper-left}
    For all $j \leq \ell$, $v_i(\hat{B}_j) \leq 4/3-x$.
\end{corollary}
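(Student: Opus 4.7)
The statement is an immediate specialization of Observation~\ref{upper-bound-left}, so the plan is very short. I would simply instantiate that observation with the parameter $k = \ell$ in the role of the ``right endpoint'' variable, and then substitute the definition of $x$.

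Concretely, Observation~\ref{upper-bound-left} gives that for any indices $j \le k \le n$ one has $v_i(\hat{B}_j) \le 1 + v_i(2n - k + 1)$. Taking $k = \ell$, this reads $v_i(\hat{B}_j) \le 1 + v_i(2n-\ell+1)$ for every $j \le \ell$. By the definition of $\ell$ in this subsection we have $v_i(2n-\ell+1) = \tfrac{1}{3} - x$, so the right-hand side equals $1 + \tfrac{1}{3} - x = \tfrac{4}{3} - x$. That is exactly the desired bound, and the proof is complete.

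There is no real obstacle here: the only thing to check is that $\ell \le n$ so that Observation~\ref{upper-bound-left} is applicable, and this holds because $\ell$ is defined as an index of one of the initial bags $B_1, \ldots, B_n$. In fact, by construction $\ell \ge \ell^* + 1$ and $\ell \le n$, so the hypothesis $j \le k \le n$ of Observation~\ref{upper-bound-left} is satisfied for every $j \le \ell$. Hence the corollary follows in one line from the observation.
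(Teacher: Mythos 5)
Your proof is correct and matches the paper's intended derivation exactly: the paper states this as an immediate corollary of Observation~\ref{upper-bound-left} with $k=\ell$, using $v_i(2n-\ell+1)=1/3-x$ by the definition of $x$. Nothing further is needed.
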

\begin{corollary}[of Observation \ref{upper-bound-right}]\label{cor-upper-right}
    For all $j>r$, $v_i(\hat{B}_j) \leq \max(4/3-x, 4/3-2y)$.
\end{corollary}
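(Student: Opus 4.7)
The plan is to re-derive the two-case argument from Observation \ref{upper-bound-right} rather than invoke it as a black box, since no single choice of the parameter $k$ there will be tight enough: the additive ``$1+v_i(2n-k+1)$'' term is controlled by $\ell$ (hence by $x$), while the ``$2v_i(k)$'' term is controlled by $r$ (hence by $y$). So I fix any $j > r$ and split on whether $\hat{B}_j$ equals the initial bag $B_j$.

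In the case $\hat{B}_j \neq B_j$, some good $g$ was appended to $B_j$ during bag-filling, so $g \geq 2n+1$ (since bag-filling begins only after the initial $2n$ goods are placed). Using the decreasing ordering of $v_i$ and $\ell \geq 0$, this gives $v_i(g) \leq v_i(2n+1) \leq v_i(2n-\ell+1) = 1/3 - x$. The bag-filling invariant forces $v_i(\hat{B}_j \setminus \{g\}) < 1$, since otherwise $g$ would not have been added. Adding these inequalities yields $v_i(\hat{B}_j) < 4/3 - x$.

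In the case $\hat{B}_j = B_j$, I use $r+1 \leq j \leq n$. Since $2n - j + 1 \geq n + 1 > r+1$ and $j \geq r+1$, monotonicity of $v_i$ gives $v_i(j), v_i(2n-j+1) \leq v_i(r+1)$, so $v_i(B_j) \leq 2 v_i(r+1)$. From $v_i(B_{r+1}) = v_i(r+1) + v_i(2n-r) < 1$ and $v_i(2n-r) = 1/3 + y$, I get $v_i(r+1) < 2/3 - y$, and therefore $v_i(\hat{B}_j) < 4/3 - 2y$. Taking the maximum over the two cases delivers the claimed bound. The only fiddly point is tracking which index ($\ell$ vs.\ $r+1$) to feed into the monotonicity step in each case; once that is done, the algebra is immediate.
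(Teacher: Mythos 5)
Your proof is correct and follows essentially the same route as the paper, which states this bound as a direct corollary of Observation~\ref{upper-bound-right} with no separate argument: the intended derivation is precisely the two-case split in that observation's proof, instantiated with the appropriate index in each branch. Your remark that a single choice of $k$ does not suffice --- one needs the ``$1+v_i(2n-k+1)$'' branch with $k=\ell$ (giving $4/3-x$ via $g\ge 2n+1$) and the ``$2v_i(k)$'' branch with $k=r+1$ (giving $4/3-2y$) --- is well taken and correctly fills in what the paper leaves implicit.
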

\begin{observation}\label{middle-block}
    For all $\ell < j \leq r$, $1 \leq v_i(\hat{B}_j) < 1+x+y$.
\end{observation}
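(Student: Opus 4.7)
The plan is to first establish that $\hat{B}_j = B_j$ for every $\ell < j \le r$, which reduces the statement to bounding $v_i(B_j) = v_i(j) + v_i(2n-j+1)$ from both sides.

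\textbf{Lower bound and structure of $\hat{B}_j$.} I would first show that $v_i(B_j) \ge 1$ for every $\ell < j \le r$, by contradiction. Suppose $v_i(B_j) < 1$. Since $\ell$ is the \emph{largest} index with $v_i(B_\ell) < 1$ and $v_i(2n - \ell + 1) \le 1/3$, any $j > \ell$ with $v_i(B_j) < 1$ must satisfy $v_i(2n - j + 1) > 1/3$. On the other hand, since $r$ is the \emph{smallest} index with $v_i(B_{r+1}) < 1$ and $v_i(2n - r) \ge 1/3$, applying this to $k = j - 1 < r$ gives $v_i(B_j) \ge 1$ or $v_i(2n - j + 1) < 1/3$. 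Combining both yields the contradiction $v_i(2n - j + 1) > 1/3$ and $v_i(2n - j + 1) < 1/3$. Hence $v_i(B_j) \ge 1$. Now, when the algorithm processes bag $j$, agent $i$ is still in $N$ (she never receives a bundle by assumption), so the \textbf{while} loop exits immediately and $\hat{B}_j = B_j$. This gives $v_i(\hat{B}_j) = v_i(B_j) \ge 1$.

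\textbf{Upper bound.} Since $\hat{B}_j = B_j$, it suffices to bound $v_i(j) + v_i(2n - j + 1)$. From $v_i(B_\ell) = v_i(\ell) + v_i(2n - \ell + 1) < 1$ and $v_i(2n - \ell + 1) = 1/3 - x$, we get $v_i(\ell) < 2/3 + x$. The ordering of the goods then gives, for $j > \ell$,
\[
v_i(j) \;\le\; v_i(\ell + 1) \;\le\; v_i(\ell) \;<\; \tfrac{2}{3} + x.
\]
Similarly, for $j \le r$ we have $2n - j + 1 \ge 2n - r + 1 > 2n - r$, so
\[
v_i(2n - j + 1) \;\le\; v_i(2n - r + 1) \;\le\; v_i(2n - r) \;=\; \tfrac{1}{3} + y.
\]
Adding the two strict/weak inequalities yields $v_i(\hat{B}_j) = v_i(B_j) < (\tfrac{2}{3} + x) + (\tfrac{1}{3} + y) = 1 + x + y$.

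\textbf{Main obstacle.} The only nontrivial step is the lower bound: one must carefully combine the extremal definitions of $\ell$ and $r$ (largest vs.\ smallest index satisfying a conjunction of conditions), together with the observation that agent $i$ remains eligible throughout bag-filling, in order to conclude that no goods are ever added to $B_j$ for $\ell < j \le r$. The upper bound then follows routinely from the ordered structure of the instance and the defining equalities of $x$ and $y$.
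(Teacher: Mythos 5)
Your proof is correct and follows essentially the same route as the paper: establish $v_i(B_j)\ge 1$ (hence $\hat{B}_j=B_j$) from the extremal definitions of $\ell$ and $r$, then bound $v_i(j)+v_i(2n-j+1)$ by $v_i(\ell)+v_i(2n-r) < (\tfrac23+x)+(\tfrac13+y)$ using the ordering. The only difference is that the paper dismisses the lower bound as immediate ``by definition of $\ell$ and $r$,'' whereas you spell out the case analysis explicitly, which is a faithful expansion of the same argument.
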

\begin{proof}
    Note that by definition of $\ell$ and $r$, for all $\ell < j \leq r$, $v_i(B_j) \geq 1$. Therefore, $\hat{B}_j=B_j$. Also,
    \begin{align*}
        v_i(B_j) &= v_i(j) + v_i(2n-j+1) \\
        &\leq v_i(\ell) + v_i(2n-r) \tag{$\ell < j$ and $2n-r < 2n-j+1$}\\
        &< (\frac{2}{3}+x) + (\frac{1}{3}+y) \tag{$v_i(B_\ell)<1$ and $v_i(B_{r+1})<1$}\\
        &= 1+x+y.
    \end{align*}
\end{proof}
\begin{lemma}
    $r-\ell > 2n/3$.
\end{lemma}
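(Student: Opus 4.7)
The plan is to sum the per-bundle upper bounds from Corollary~\ref{cor-upper-left}, Observation~\ref{middle-block}, and Corollary~\ref{cor-upper-right}, and compare the total to $v_i(M) = 4n/3$. Writing $b \defeq r - \ell$, one obtains
$$\tfrac{4n}{3} \;=\; v_i(M) \;\le\; \ell\bigl(\tfrac{4}{3}-x\bigr) + b(1+x+y) + (n-r)\max\bigl(\tfrac{4}{3}-x,\;\tfrac{4}{3}-2y\bigr).$$
Suppose, for contradiction, that $b \le 2n/3$. I split on which term attains the $\max$.

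When $x \le 2y$, the max equals $\tfrac{4}{3}-x$, and using $\ell + b + (n-r) = n$ to collect the $\tfrac{4}{3}$-terms, the inequality simplifies to $nx \le b(2x+y-\tfrac{1}{3})$. If $2x+y\ge\tfrac{1}{3}$, plugging $b \le 2n/3$ gives $x + 2y \ge \tfrac{2}{3}$, and combined with $x \le 2y$ this forces $y \ge \tfrac{1}{6}$, contradicting Observation~\ref{y-bound}. Otherwise $2x + y < \tfrac{1}{3}$, which forces $b = 0$ and $x = 0$, handled below.

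When $x > 2y$, the max equals $\tfrac{4}{3}-2y$, and the analogous rearrangement yields $\ell x + 2(n-r)y \le b(x+y-\tfrac{1}{3})$, equivalent to $(b-\ell)x + (b-2(n-r))y \ge b/3$. Bounding the left side by $x < \tfrac{1}{3}$ (Observation~\ref{lessThan13}) and $y < \tfrac{1}{6}$ (Observation~\ref{y-bound}), with a sub-case analysis on the signs of $b-\ell$ and $b-2(n-r)$ and using $\ell \ge 1$ (from $\ell \ge \ell^* + 1$), one checks that the left side is strictly less than $b/3$ whenever $b > 0$, a contradiction. In the key sub-case $b \ge \ell$ and $b \ge 2(n-r)$, the term-by-term bound is $(5b-2n)/6 \le b/3$, tight only at $b = 2n/3$, where $\ell \le n - b = n/3 < b$ still guarantees strictness from $x < \tfrac{1}{3}$.

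The main obstacle is the edge case $b = 0$, where Observation~\ref{middle-block} contributes no strictness to the sum. In the $x > 2y$ case, $b = 0$ forces $\ell x = 0$; since $x > 2y \ge 0$ gives $x > 0$, we must have $\ell = 0$, contradicting $\ell \ge 1$. In the $x \le 2y$ case, $b = 0$ forces $x = 0$, and I recover strictness by re-examining $v_i(\hat B_j)$ for $j > r = \ell$: either $\hat B_j = B_j$, in which case both goods of $B_j$ are bounded by $v_i(r+1) < \tfrac{2}{3}-y$ (since $v_i(B_{r+1}) < 1$ and $v_i(2n-r) = \tfrac{1}{3}+y$), giving $v_i(B_j) < \tfrac{4}{3}$ strictly; or $\hat B_j \supsetneq B_j$, in which case the last good added has index $\ge 2n+1$ and hence value at most $v_i(2n-\ell+1) = \tfrac{1}{3}$, again giving $v_i(\hat B_j) < \tfrac{4}{3}$. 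Since $r \le r^* - 1 \le n-1$ the right block is nonempty, so these strict bounds summed with the loose left-block bounds give $v_i(M) < 4n/3$, contradicting $v_i(M) = 4n/3$.
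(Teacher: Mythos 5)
Your proof is correct, and it starts from the same place as the paper's: split the bags into the three blocks $j \le \ell$, $\ell < j \le r$, $j > r$, apply Corollary~\ref{cor-upper-left}, Observation~\ref{middle-block}, and Corollary~\ref{cor-upper-right} respectively, and compare the summed upper bound with $v_i(M) = 4n/3$. Where you diverge is in the endgame. The paper first proves, as a preliminary step, that $x + y > 1/3$ (otherwise every bag is worth at most $4/3$ to agent $i$, with at least one worth strictly less, contradicting $v_i(M)=4n/3$), and then rewrites the summed bound as $(r-\ell)(x+y-\tfrac{1}{3}) \ge (n-r+\ell)\min(x,2y)$; combining this with the single inequality $x+y-\tfrac{1}{3} < \min(x,2y)/2$ (which follows from $x<\tfrac{1}{3}$ and $y<\tfrac{1}{6}$) yields $r-\ell > 2(n-r+\ell)$, i.e., $r-\ell > 2n/3$, in one line. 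You instead assume $b = r-\ell \le 2n/3$ and case-split on which term attains the $\max$, which works but costs you several sub-cases and forces you to patch the degenerate case $b=0$, $x=0$ with a separate strictness argument on the right block --- that patch plays exactly the role of the paper's preliminary claim $x+y>1/3$ (it is where the "at least one bag is strictly below $4/3$" information gets used). Both routes are valid; replacing your case split on the $\max$ by the identity $\max(\tfrac{4}{3}-x,\tfrac{4}{3}-2y)=\tfrac{4}{3}-\min(x,2y)$ is what collapses your analysis into the paper's. Your individual steps check out, including the observation that $\ell \ge \ell^*+1 \ge 1$ and $r \le r^*-1 \le n-1$, which you correctly rely on for the edge cases.
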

\begin{proof}
    If $x+y \leq 1/3$, then by Corollaries \ref{cor-upper-left} and \ref{cor-upper-right} and Observation \ref{middle-block}, for all $t \in [n]$ we have $v_i(\hat{B}_t) \leq 4/3$ and for at least one bag this value is less than $1$ by Observation \ref{less-than-one}. Therefore, $v_i(M) < 4n/3$, which is a contradiction. Thus, $x+y>1/3$.
    We have
    \begin{align*}
        \frac{4n}{3} &= v_i(M) \\
        &= \sum_{j \leq \ell} v_i(\hat{B}_j) + \sum_{\ell < j \leq r} v_i(\hat{B}_j) + \sum_{j > r} v_i(\hat{B}_j) \\
        &\leq \ell(\frac{4}{3}-x) + (r-\ell)(1+x+y) + (n-r) \max(\frac{4}{3}-x, \frac{4}{3}-2y) \\
        &\leq (r-\ell)(1+x+y) + (n-r+\ell) \max(\frac{4}{3}-x, \frac{4}{3}-2y) \tag{Corollaries \ref{cor-upper-left} and \ref{cor-upper-right} and Observation \ref{middle-block}}\\
        &= \frac{4n}{3} + (r-\ell)(x+y-\frac{1}{3}) - (n-r+\ell)\min(x,2y).
    \end{align*}
    Therefore, $(r-\ell)(x+y-1/3) \geq (n-r+\ell)\min(x,2y)$. By Observation \ref{lessThan13}, $x<1/3$ and thus, we have $x+y-1/3 < y$. Also, since $y < 1/6$ (by Observation \ref{y-bound}), we have $x+y-1/3 < x-1/6 < x/2$. Thus, $x+y-1/3 < \min(x,2y)/2$. Hence, $r-\ell > 2(n-r+\ell)$ and therefore, $r-\ell > 2n/3$.
\end{proof}
Let $r - \ell = 2n/3 + s$. Recall that $P^{i} = (P^{i}_1, \ldots, P^{i}_{4n/3})$ is an $(4n/3)$-MMS partition of $M$ for agent $i$. Since $i$ is fixed, we use $P = (P_1, \ldots, P_{4n/3})$ instead for ease of notation. For all $j \in [4n/3]$, let $g_j$ be good with the smallest index (and hence the largest value) in $P_j$. Without loss of generality, assume $g_1 < g_2 < \ldots < g_{4n/3}$. Observe that $\{1,\dots,r\} \subseteq \cup_{k\in [r]} P_k$. Let $S'$ be the set of goods in $\{r+1, \ldots, 2n-\ell\}$ that appear in the first $r$ bags in $P$. Formally, $S' = \{g \in \{r+1, \ldots, 2n-\ell\} \mid g \in \cup_{j \in [r]} P_j\}$. Let $s':=\min(|S'|,s)$.
\begin{figure}[!hbt]
    \centering
    \scalebox{\figScale}{\begin{tikzpicture}
[scale=1,
 good/.style={circle, draw=black, thick, minimum size=30pt},
]

\draw[black, very thick] (-0.4-0.25,0.8) rectangle (0.4+0.25,3.4);

\node[good]      at (0,2.75)      {$\scriptstyle{2n}$};
\node[good]      at (0,1.5)      {$\scriptstyle{1}$};

\node at (0, 0.5) {$\scriptstyle{B_1}$};

\filldraw[color=black!60, fill=black!5, thick](1, 2) circle (0.02);
\filldraw[color=black!60, fill=black!5, thick](1.1, 2) circle (0.02);
\filldraw[color=black!60, fill=black!5, thick](1.2, 2) circle (0.02);

\draw[blue, fill=blue!5, very thick] (0.4+1.12,2.1) rectangle (4.98,3.33);

\draw[black, very thick] (-0.4-0.25 +2.25,0.8) rectangle (0.4+0.25 +2.25,3.4);

\node[good]      at (0 +2.25,2.75)      {$\scriptstyle{2n-\ell}$};

\node[good]      at (0 +2.25,1.5)     {$\scriptstyle{\ell+1}$};

\node at (0 +2.25, 0.5) {$\scriptstyle{B_{\ell+1}}$};

\filldraw[color=black!60, fill=black!5, thick](1+2.2, 2) circle (0.02);
\filldraw[color=black!60, fill=black!5, thick](1.1+2.2, 2) circle (0.02);
\filldraw[color=black!60, fill=black!5, thick](1.2+2.2, 2) circle (0.02);

\draw[black, very thick] (-0.4-0.25 +4.25,0.8) rectangle (0.4+0.25 +4.25,3.4);

\node[scale=0.75]  at (0 +4.25,2.85)      {$\scriptstyle{2n-\ell-3s}$};
\node[scale=0.75]  at (0 +4.25,2.63)      {$\scriptstyle{+2s'+1}$};
\node[good]      at (0 +4.25,2.75)      {};
\node[good]      at (0 +4.25,1.5)      {};
\node[scale=0.75]  at (0 +4.25,1.5)      {$\scriptstyle{\ell+3s-2s'}$};

\node at (0 +4.25, 0.5) {$\scriptstyle{B_{\ell+3s-2s'}}$};

\filldraw[color=black!60, fill=black!5, thick](1+4.2, 2) circle (0.02);
\filldraw[color=black!60, fill=black!5, thick](1.1+4.2, 2) circle (0.02);
\filldraw[color=black!60, fill=black!5, thick](1.2+4.2, 2) circle (0.02);

\draw[blue, fill=blue!5, very thick] (-0.4-0.25 +6.42,0.9) rectangle (0.4+0.25 +9.08,2.13);

\draw[black, very thick] (-0.4-0.25 +6.5,0.8) rectangle (0.4+0.25 +6.5,3.4);

\node[good]      at (0 +6.5,2.75)      {$\scriptstyle{}$};
\node[scale=0.8] at (0 +6.5,2.75)      {$\scriptstyle{2n-r+s'}$};

\node[good]      at (0 +6.5,1.5)      {};
\node[scale=0.9] at (0 +6.5,1.5)      {$\scriptstyle{r-s'+1}$};

\node at (0 +6.5, 0.5) {$\scriptstyle{B_{r-s'+1}}$};

\filldraw[color=black!60, fill=black!5, thick](1+6.6, 2) circle (0.02);
\filldraw[color=black!60, fill=black!5, thick](1.1+6.6, 2) circle (0.02);
\filldraw[color=black!60, fill=black!5, thick](1.2+6.6, 2) circle (0.02);

\draw[black, very thick] (-0.4-0.25 +9,0.8) rectangle (0.4+0.25 +9,3.4);

\node[good, scale=0.75]      at (0 +9,2.75)      {$\scriptstyle{2n-r+1}$};
\node[good]      at (0 +9,1.5)      {$\scriptstyle{r}$};

\node at (0 +9, 0.5) {$\scriptstyle{B_{r}}$};

\filldraw[color=black!60, fill=black!5, thick](1+9.1, 2) circle (0.02);
\filldraw[color=black!60, fill=black!5, thick](1.1+9.1, 2) circle (0.02);
\filldraw[color=black!60, fill=black!5, thick](1.2+9.1, 2) circle (0.02);

\draw[black, very thick] (-0.4-0.25 +11.5,0.8) rectangle (0.4+0.25 +11.5,3.4);

\node[good]      at (0 +11.5,2.75) {$\scriptstyle{n+1}$};
\node[good]      at (0 +11.5,1.5)      {$\scriptstyle{n}$};

\node at (0 +11.5, 0.5) {$\scriptstyle{B_n}$};

\end{tikzpicture}}
    \captionAndDescr{The goods considered in Lemma \ref{difficult-bound} are marked with blue.}
    \label{colorful-11}
\end{figure}
\begin{restatable}{lemma}{difficultbound} \label{difficult-bound}
    $v_i(\{r-s'+1, \ldots, r\} \cup \{2n-\ell-3s+2s'+1, \ldots, 2n-\ell\}) \leq s$.
\end{restatable}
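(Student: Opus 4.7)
The plan is to establish $v_i(G) \leq s$ for $G \defeq \{r-s'+1,\ldots,r\} \cup \{2n-\ell-3s+2s'+1,\ldots,2n-\ell\}$ by exhibiting a subcollection of $s$ bundles of the $(4n/3)$-MMS partition $P = (P_1,\ldots,P_{4n/3})$ whose union contains $G$. Since the instance is $(4n/3)$-normalized, each $P_j$ has value exactly $1$, so the union of any $s$ bundles has value at most $s$, which immediately gives $v_i(G) \leq s$.

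To locate $G$ inside $P$, I would use the sort $g_1 < g_2 < \ldots < g_{4n/3}$ with $g_j \defeq \min P_j$, which gives $g_j \geq j$. Hence every good indexed at most $r$ lies in one of $P_1, \ldots, P_r$; in particular $\{r-s'+1,\ldots,r\} \subseteq \bigcup_{j \leq r} P_j$. Moreover, by the definition of $S'$, the first $r$ bundles contain exactly $\{1,\ldots,r\} \cup S'$ together with possibly some goods from $\{2n-\ell+1,\ldots,m\}$; and the remaining middle-range goods outside $S'$ sit in bundles $P_j$ with $j > r$, which by $g_j > r$ contain no good indexed $\leq r$.

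I would then, in the spirit of the swap argument used in Lemma~\ref{k+l}, perform value-preserving swaps on $P$ to concentrate $G$ into $s$ specific bundles. In the case $s' = s$ (when $|S'| \geq s$), the goal is to pack $\{r-s'+1,\ldots,r\}$ together with $s$ of the middle-range target goods in $G \cap S'$ into $s$ bundles from among the first $r$ (averaging two target goods per bundle). In the case $s' = |S'| < s$, I would use $s'$ bundles from the first $r$ to cover $\{r-s'+1,\ldots,r\} \cup (G \cap S')$ and $s - s'$ further bundles from $\{P_{r+1},\ldots,P_{4n/3}\}$ to cover the remaining middle-range target goods, exploiting a pigeonhole on bundle sizes together with the fact that these later bundles contain only goods from $\{r+1,\ldots,m\}$.

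The hardest step is making the swap argument precise so that every swap preserves $v_i(P_j) \geq 1$ for all bundles, and then carefully verifying that exactly $s$ bundles suffice in both subcases. The case $s' < s$ is especially delicate, because target goods then spread beyond the first $r$ bundles, and the covering argument requires a careful analysis of bundle sizes and the placement of $S'$ inside $\{r+1,\ldots,2n-\ell\}$.
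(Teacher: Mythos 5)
Your overall strategy --- exhibit $s$ bundles of the MMS partition whose union \emph{contains} $G$ and conclude $v_i(G)\le s$ from normalization --- is not the paper's approach, and the containment step it rests on is a genuine gap. The $s'$ goods $\{r-s'+1,\ldots,r\}$ can each be the minimum element of a distinct bundle among $P_1,\ldots,P_r$, and the $3s-2s'$ goods $\{2n-\ell-3s+2s'+1,\ldots,2n-\ell\}$ can likewise be scattered over as many further bundles, so $G$ may genuinely require up to $3s-s'$ bundles to cover. No ``value-preserving swap'' can repair this: the swaps in Lemma~\ref{k+l} act on the \emph{truncated} bags $R_j$ and only need to preserve the upper bound $v_i(R_j)\le 1$, whereas here you would be moving goods between full MMS bundles of value exactly $1$, and concentrating several large goods into one selected bundle necessarily pushes other bundles below value $1$ or pushes the total value of your $s$ selected bundles above $s$. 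A second, related slip: in the case $s'=s$ you propose to pack ``$s$ of the middle-range target goods in $G\cap S'$,'' but the target goods $\{2n-\ell-s+1,\ldots,2n-\ell\}$ need not belong to $S'$ at all ($G\cap S'$ can be empty), so there is nothing to pack.

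The paper avoids containment entirely and uses a counting-plus-domination argument. It selects $s'$ bundles among $P_1,\ldots,P_r$ whose intersections with $\{1,\ldots,r\}\cup S'$ jointly contain at least $2s'$ goods, at least $s'$ of them from $\{1,\ldots,r\}$ (Claim~\ref{claim-1}, proved by induction). Because $\{r-s'+1,\ldots,r\}$ are the \emph{least valuable} goods of $[r]$ and $\{2n-\ell-s'+1,\ldots,2n-\ell\}$ are the least valuable goods of index at most $2n-\ell$, the $s'$ most valuable and $s'$ least valuable goods inside those intersections dominate the two target blocks in value even though the target goods themselves may lie in entirely different bundles; normalization of the selected bundles then gives the bound $s'$ (Claims~\ref{claim-2} and~\ref{claim-22}). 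When $s'<s$, a separate pigeonhole on the sizes of $P_{r+1},\ldots,P_{4n/3}$ restricted to $\{r+1,\ldots,2n-\ell\}\setminus S'$ produces $s-s'$ bundles containing at least $3(s-s')$ such goods, and the same domination yields Claim~\ref{claim-3}; adding the two bounds proves the lemma. This ``dominate the least-valuable suffix rather than cover it'' idea is the essential ingredient missing from your plan.
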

The goods considered in Lemma \ref{difficult-bound} are marked with blue in Figure \ref{colorful-11}.
Before proving Lemma \ref{difficult-bound}, let us assume it holds and reach a contradiction. Since $v_i(\ell) < 1-v_i(2n-\ell+1)= 2/3+x$, we have
\begin{align}
    v_i(\{\ell+1, \ldots, r-s'\}) &< (\frac{2n}{3}+s-s')(\frac{2}{3}+x). \label{easy-1}
\end{align}
Also, since $v_i(2n-r+1) =1/3+y$,
\begin{align}
    v_i(\{2n-r+1, \ldots, 2n-\ell-3s+2s'\}) &\leq (\frac{2n}{3}-2s+2s')(\frac{1}{3}+y). \label{easy-2}
\end{align}
Therefore,
\begin{align*}
    \sum_{\ell < j \leq r} v_i(\hat{B}_j) &= \sum_{\ell < j \leq r} v_i(B_j) \\
    &= v_i(\{\ell+1, \ldots, r\} \cup \{2n-r+1, \ldots, 2n-\ell\}) \\
    &= v_i(\{\ell+1, \ldots, r-s'\}) \\
    &\indent + v_i(\{r-s'+1, \ldots, r\} \cup \{2n-\ell-3s+2s'+1, \ldots, 2n-\ell\}) \\
    &\indent + v_i(\{2n-r+1, \ldots, 2n-\ell-3s+2s'\}) \\
    &< (\frac{2n}{3}+s-s')(\frac{2}{3}+x) + s + (\frac{2n}{3}-2s+2s')(\frac{1}{3}+y) \tag{Inequalities \eqref{easy-1} and \eqref{easy-2} and Lemma \ref{difficult-bound}}\\
    &= \frac{2n}{3}(1+x+y)+(s-s')(x-2y)+s.
\end{align*}
Thus,
\begin{align*}
    \frac{4n}{3} &= v_i(M) \\
    &= \sum_{j \leq \ell} v_i(\hat{B}_j) + \sum_{\ell < j \leq r} v_i(\hat{B}_j) + \sum_{j>r} v_i(\hat{B}_j) \\
    &< (\ell+n-r)\max(\frac{4}{3}-x, \frac{4}{3}-2y) + \frac{2n}{3}(1+x+y)+(s-s')(x-2y)+s \tag{Corollaries \ref{cor-upper-left} and \ref{cor-upper-right}} \\
    &= (\frac{n}{3}-s)\max(\frac{4}{3}-x, \frac{4}{3}-2y) + \frac{2n}{3}(1+x+y)+(s-s')(x-2y)+s.
\end{align*}
If $x \leq 2y$, then by replacing $\max(4/3-x, 4/3-2y)$ with $4/3-x$ in the above inequality, we get
\begin{align*}
    \frac{4n}{3} &< (\frac{n}{3}-s)(\frac{4}{3}-x) + \frac{2n}{3}(1+x+y)+(s-s')(x-2y)+s \\
    &\leq \frac{n}{3}(\frac{4}{3}-x) + \frac{2n}{3}(1+x+y)+(s-s')(x-2y) \tag{$4/3-x \geq 1$}\\
    &\leq \frac{n}{3}(\frac{10}{3}+x+2y) \tag{$(s-s')(x-2y) \leq 0$} \\
    &< \frac{4n}{3}, \tag{$x \leq 1/3$ and $y<1/6$}
\end{align*}
which is a contradiction. If $2y < x$, by replacing $\max(4/3-x, 4/3-2y)$ with $4/3-2y$, we get
\begin{align*}
    \frac{4n}{3} &< (\frac{n}{3}-s)(\frac{4}{3}-2y) + \frac{2n}{3}(1+x+y)+(s-s')(x-2y)+s \\
    &= \frac{n}{3}(\frac{10}{3}+2x) - s(\frac{1}{3}-x) - s'(x-2y) \\
    &\leq \frac{n}{3}(\frac{10}{3}+2x) &\tag{$x \leq 1/3$ and $x>2y$} \\
    &\leq \frac{4n}{3}, \tag{$x \leq 1/3$}
\end{align*}
which is again a contradiction. Therefore, it is not possible that $v_i(2n-\ell^*) < 1/3$. Thus, Theorem \ref{contradict-1} follows.
\contradictOne*

It only remains to prove Lemma \ref{difficult-bound}. The main idea is as follows. Recall that $s'=\min(|S'|,s)$. We consider two cases for $s'$. If $s' = s$, then in order to prove Lemma \ref{difficult-bound}, we must prove
\[ v_i(\{r-s'+1, \ldots, r\} \cup \{2n-\ell - s'+1, \ldots, 2n-\ell\}) \leq s', \]
which is what we do in Claim \ref{claim-2}.
In case $s' = |S'|$, we prove
\[ v_i(\{r-s'+1, \ldots, r\}) + v_i(S') \leq s' \]
in Claim \ref{claim-22} and
\[ v_i (\{2n-\ell-3s+2s'+1, \ldots, 2n-\ell\}) - v_i(S') \leq s-s' \]
in Claim \ref{claim-3}. Adding the two sides of the inequalities implies Lemma \ref{difficult-bound}. We prove this lemma in Section \ref{proof-sec-1}.

\subsubsection{Proof of Lemma \ref{difficult-bound}}\label{proof-sec-1}
\difficultbound*
    Note that $\{1, \ldots, r\} \cup S' \subseteq P_1 \cup \ldots \cup P_r$.
    For $j \in [r]$, let $Q_j = P_j \cap (\{1, \ldots, r\} \cup S')$. We begin with proving the following claim. 
    \begin{claim}\label{claim-1}
        There are $s'$ many sets like $Q_{j_1}, \ldots, Q_{j_{s'}}$ such that $|\cup_{k \in [s']} Q_{j_k}| \geq 2s'$ and $|\cup_{k \in [s']} Q_{j_k} \cap \{1, \ldots, r\}| \geq s'$.
    \end{claim}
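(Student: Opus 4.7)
\medskip

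My plan is to recast Claim 1 as a constrained combinatorial selection problem and solve it using the sortedness of the $g_j$'s together with averaging. Writing $a_j \defeq |Q_j \cap \{1,\ldots,r\}|$ and $b_j \defeq |Q_j \cap S'|$, the two requirements become $\sum_k (a_{j_k} + b_{j_k}) \ge 2s'$ and $\sum_k a_{j_k} \ge s'$, since the $P_j$'s (and hence the $Q_j$'s for $j\le r$) are disjoint. The raw totals I may use are $\sum_{j\le r} a_j = r$ (since $\{1,\ldots,r\} \subseteq P_1 \cup \ldots \cup P_r$) and $\sum_{j\le r} b_j = |S'| \ge s'$ (by the definition of $S'$ and $s'$).

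My first step would be to leverage the sortedness $g_1 < g_2 < \ldots < g_r$ to partition $[r]$ into three contiguous blocks: class A where $g_j \le r$, class B where $r < g_j \le 2n-\ell$, and class C where $g_j > 2n-\ell$. The utility of this partition is that for $j$ in class C we have $Q_j = \emptyset$, for $j$ in class B we have $g_j \in S'$ so $b_j \ge 1$ and $a_j = 0$, and for $j$ in class A we have $a_j \ge 1$. Thus all of the ``mass'' in $\sum a_j$ sits in class A and the ``mass'' in $\sum b_j$ sits in classes A and B.

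My second step would be to construct $J$ by combining two greedy selections. First, since $s'\le r$, the top $s'$ values of $a_j$ have sum at least $s'$: they contain at most $s'$ zeros, and the nonzero $a_j$'s (which live in class A) sum to $r\ge s'$. This secures $\sum_J a_j \ge s'$. To then secure $\sum_J (a_j+b_j) \ge 2s'$, I would either pick the top $s'$ indices ranked by $|Q_j|$ (which works cleanly whenever there are $\ge s'$ indices with $|Q_j|\ge 2$), or, when there are fewer than $s'$ such indices, use the identity $\sum_j |Q_j| = r+|S'| \ge r+s'$ to locate a few indices carrying a concentrated ``excess'' of mass and swap them into $J$. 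A case split on whether $|A|\ge s'$ or $|A|<s'$ organizes the argument: in the former I take $s'$ indices from $A$ preferring the largest $b_j$, and in the latter I take all of $A$ (which already gives $\sum a_j = r\ge s'$) and supplement with class B indices (each contributing $b_j\ge 1$).

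The main obstacle I anticipate is proving the total-size bound $\sum_k(a_{j_k}+b_{j_k})\ge 2s'$ in the borderline regime where a large number of $Q_j$'s are singletons. I would resolve it by noting that whenever class A has average $a_j$ close to $1$, the class has size close to $r$, so $\sum_A b_j$ absorbs most of $|S'|$ and the selected class-A indices therefore bring in enough $S'$-elements as well; and when class A is much smaller than $r$, a few class-A indices carry $a_j \ge 2$ (since $\sum_A a_j = r$), and including these along with class B indices forces $\sum_J |Q_j|\ge 2s'$. I would conclude by verifying the two inequalities in each case and remarking that the chosen indices $j_1,\ldots,j_{s'}$ yield a family satisfying both conclusions of Claim 1.
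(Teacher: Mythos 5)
Your reformulation of the claim as finding $s'$ indices with $\sum_k a_{j_k} \ge s'$ and $\sum_k (a_{j_k}+b_{j_k}) \ge 2s'$, given $\sum_{j\le r} a_j = r$ and $\sum_{j\le r} b_j = |S'| \ge s'$, is correct, and so is the three-class decomposition. The gap is in the selection rules, which do not secure both inequalities simultaneously. In your case $|A| \ge s'$ you take $s'$ indices from class $A$ (preferring large $b_j$, or large $|Q_j|$); this can fail the size bound. Concretely, take $r = 11$, $s' = |S'| = 3$, ten class-$A$ sets with $(a_j, b_j)$ equal to $(2,0)$ once and $(1,0)$ nine times, and one class-$B$ set with $(a_j,b_j) = (0,3)$. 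This satisfies every fact you invoke ($\sum a_j = r$, $\sum b_j = |S'|$, $a_j \ge 1$ on $A$, $b_j \ge 1$ on $B$), yet any three class-$A$ sets have union of size at most $4 < 6 = 2s'$; the claim still holds, but only by taking the class-$B$ set together with the size-two class-$A$ set and a singleton. Your stated resolution of the ``many singletons'' regime — that $|A|$ close to $r$ forces $\sum_{j\in A} b_j$ to absorb most of $|S'|$ — is exactly what this configuration refutes: a single bag with minimum element larger than $r$ can contain all of $S'$. Your other case ($|A| < s'$: take all of $A$ and pad with class $B$) gives only $\sum_J (a_j+b_j) \ge r + (s'-|A|)$, which reaches $2s'$ only if $r \ge s' + |A|$; this does hold here (one can derive $r > 2s \ge 2s'$ from $r = 2n/3 + s + \ell$, $\ell \ge 1$, $r \le n-1$), but it is not among the raw totals you list and without it that selection also fails (e.g.\ $r=5$, $s'=4$, $a=(3,1,1)$ on $A$, $b=(2,2)$ on $B$).

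The underlying difficulty is that the two requirements pull in opposite directions — class-$B$ sets contribute only to the size count, class-$A$ singletons contribute only one each to the $[r]$ count — so no single ranking, and no case split on $|A|$ alone, identifies the right family. The paper's proof builds the family one index at a time by induction, maintaining both running counts $|\cup_k Q_{j_k}|$ and $|\cup_k Q_{j_k} \cap [r]|$ and choosing the next set adaptively according to which count is lagging (a set with two elements of $[r]$, a set meeting $S'$, or any set of size at least two, depending on the case), using $\sum_j |Q_j| \ge r + s'$ and $|\cup_j Q_j \cap [r]| = r$ to show a suitable set always remains. If you wish to keep the averaging framework, you would need to add a swapping argument showing that a family optimal for one count can be modified to satisfy the other without losing the first — which is essentially the paper's induction in disguise.
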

    \begin{proof}
        If $s'=0$, the claim trivially holds. Thus, assume $s' \geq 1$. By induction, we prove that for any $t \leq s'$, there are $t$ many sets like $Q_{j_1}, \ldots, Q_{j_t}$ such that $|\cup_{k \in [t]} Q_{j_k}| \geq 2t$ and $|\cup_{k \in [t]} Q_{j_k} \cap \{1, \ldots, r\}| \geq t$.
        \paragraph{\boldmath Induction basis: $t=1$.} If there exists $Q_k$ such that $|Q_k \cap \{1, \ldots, r\}| \geq 2$, let $j_1 = k$. Otherwise, for all $k \in [r]$, we have $|Q_k \cap \{1, \ldots, r\}| = 1$. Since $s' \geq 1$, there must be an index $k$ such that $|Q_k \cap S'| \geq 1$. Let $j_1=k$.

        \paragraph{\boldmath Induction assumption:} There are $t$ many sets like $Q_{j_1}, \ldots, Q_{j_t}$ such that $|\cup_{k \in [t]} Q_{j_k}| \geq 2t$ and $|\cup_{k \in [t]} Q_{j_k} \cap \{1, \ldots, r\}| \geq t$.

        Now for $t+1 \leq s'$, we prove that there are $t+1$ many sets like $Q_{j_1}, \ldots, Q_{j_{t+1}}$ such that $|\cup_{k \in [t+1]} Q_{j_k}| \geq 2t+2$ and $|\cup_{k \in [t+1]} Q_{j_k} \cap \{1, \ldots, r\}| \geq t+1$.
        \paragraph{\boldmath Case 1: $|\cup_{k \in [t]} Q_{j_k}| \geq 2t+2$:} If $|\cup_{k \in [t]} Q_{j_k} \cap \{1, \ldots, r\}| \geq t+1$, set $j_{t+1}=k$ for an arbitrary $k \in [r] \setminus \{j_1, \ldots, j_t\}$. Otherwise, set $j_{t+1}=k$ for an index $k \in [r] \setminus \{j_1, \ldots, j_t\}$ such that $|Q_{k} \cap \{1, \dots, r\}| \ge 1$.
        \paragraph{\boldmath Case 2: $|\cup_{k \in [t]} Q_{j_k}| = 2t+1$:} If there exists $k \in [r] \setminus \{j_1, \ldots, j_t\}$, such that $|Q_k \cap [r]| \geq 1$, set $j_{t+1}=k$. Otherwise, set $j_{t+1}=k$ for any $k \in [r] \setminus \{j_1, \ldots, j_t\}$ such that $|Q_k| \geq 1$. Since $|\cup_{j \in [r]} Q_j| \geq r+s' > 2t+1$, such $k$ exists.
        \paragraph{\boldmath Case 3. $|\cup_{k \in [t]} Q_{j_k}| = 2t$ and $|\cup_{k \in [t]} Q_{j_k} \cap \{1, \ldots, r\}| \geq t+1$:} $|\cup_{k \in [r] \setminus \{j_1, \ldots, j_t\}} Q_{j_k}| \geq r+s'-2t > r-t$. Therefore, by pigeonhole principle, there exists an index $k \in [r] \setminus \{j_1, \ldots, j_t\}$ such that $|Q_k| \geq 2$. Set $j_{t+1}=k$.
        \paragraph{\boldmath Case 4. $|\cup_{k \in [t]} Q_{j_k}| = 2t$ and $|\cup_{k \in [t]} Q_{j_k} \cap \{1, \ldots, r\}| = t$:} If there exists $k \in [r] \setminus \{j_1, \ldots, j_t\}$, such that $|Q_k \cap [r]| \geq 2$, set $j_{t+1}=k$. Otherwise, for all $k \in [r] \setminus \{j_1, \ldots, j_t\}$, $|Q_k \cap [r]|=1$ since $|\cup_{k \in [t]} Q_{j_k} \cap \{1, \ldots, r\}| = t$ and $|\cup_{k \in [r]} Q_{j_k} \cap \{1, \ldots, r\}| = r$. Set $j_{t+1}=k$ for any $k \in [r] \setminus \{j_1, \ldots, j_t\}$, such that $|Q_k \cap S'| \geq 1$. Since $|\cup_{j \in [r]} Q_j \cap S'| \geq s' > t$, such $k$ exists.
    \end{proof}
Now we prove Claim \ref{claim-2}.
\begin{claim}\label{claim-2}
    $v_i(\{r-s'+1, \ldots, r\} \cup \{2n-\ell - s'+1, \ldots, 2n-\ell\}) \leq s'$.
\end{claim}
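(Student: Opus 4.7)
The plan is to combine \cref{claim-1} with the normalization of the instance. \cref{claim-1} provides indices $j_1, \ldots, j_{s'}$ such that the sets $Q_{j_1}, \ldots, Q_{j_{s'}}$ are pairwise disjoint (being restrictions of distinct bundles of the $(4n/3)$-MMS partition $P$), their union $U \defeq \bigcup_{k=1}^{s'} Q_{j_k}$ has size at least $2s'$, and $|U \cap \{1, \ldots, r\}| \geq s'$. Since $Q_{j_k} \subseteq P_{j_k}$ and $v_i(P_{j_k}) = 1$ by \cref{prop:trivial}, this immediately gives
\[ v_i(U) \;=\; \sum_{k=1}^{s'} v_i(Q_{j_k}) \;\leq\; \sum_{k=1}^{s'} v_i(P_{j_k}) \;=\; s'. \]

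It then suffices to show $v_i(\{r-s'+1, \ldots, r\} \cup \{2n-\ell-s'+1, \ldots, 2n-\ell\}) \leq v_i(U)$. To do this, I would split $U$ into an arbitrary $s'$-subset $A' \subseteq U \cap \{1, \ldots, r\}$ (which exists by \cref{claim-1}) and the remainder $C \defeq U \setminus A'$. Since $|U| \geq 2s'$, we have $|C| \geq s'$. Moreover, $A' \subseteq \{1, \ldots, r\}$ and, since $U \subseteq \{1, \ldots, r\} \cup S' \subseteq \{1, \ldots, 2n-\ell\}$, also $C \subseteq \{1, \ldots, 2n-\ell\}$.

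The remaining ingredient is a standard consequence of the instance being ordered: if $\{c_1 < \ldots < c_h\}$ is any $h$-element subset of $\{1, \ldots, N\}$, then $c_k \leq N-h+k$, so $v_i(c_k) \geq v_i(N-h+k)$, and summing over $k$ yields $v_i(\{c_1, \ldots, c_h\}) \geq v_i(\{N-h+1, \ldots, N\})$. Applying this with $(N, h) = (r, s')$ to $A'$ gives $v_i(A') \geq v_i(\{r-s'+1, \ldots, r\})$. Applying it with $(N, h) = (2n-\ell, s')$ to any $s'$-subset of $C$, together with nonnegativity of values, gives $v_i(C) \geq v_i(\{2n-\ell-s'+1, \ldots, 2n-\ell\})$. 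Adding these two bounds and using $v_i(U) = v_i(A') + v_i(C)$ (disjointness) finishes the proof.

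The only point that needs attention is ensuring both $|C| \geq s'$ (needed to apply the ordered-subset bound on $C$) and $C \subseteq \{1, \ldots, 2n-\ell\}$ (so that the second interval in the claim is the relevant one). Both follow directly from the quantitative content of \cref{claim-1} together with the inclusion $S' \subseteq \{r+1, \ldots, 2n-\ell\}$, so no additional combinatorial work is required beyond what \cref{claim-1} has already provided; the rest is bookkeeping with the ordered property and normalization.
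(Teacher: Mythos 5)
Your proposal is correct and follows essentially the same route as the paper: invoke \cref{claim-1}, use normalization to bound the total value of the union of the $Q_{j_k}$ by $s'$, and then split that union into an $s'$-subset of $U \cap \{1,\ldots,r\}$ dominating $\{r-s'+1,\ldots,r\}$ and a remaining part of size at least $s'$ inside $\{1,\ldots,2n-\ell\}$ dominating $\{2n-\ell-s'+1,\ldots,2n-\ell\}$ via the ordered property. The paper makes the same split by taking the $s'$ most valuable and $s'$ least valuable goods of the union, which is only a cosmetic difference.
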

\begin{proof}
    Let $Q^1$ be the set of $s'$ most valuable goods in $\cup_{k \in [s']} Q_{j_k}$ and let $Q^2$ be the set of $s'$ least valuable goods in $\cup_{k \in [s']} Q_{j_k}$. Since $|\cup_{k \in [s']} Q_{j_k}| \geq 2s'$, $Q^1 \cap Q^2 = \emptyset$. Also, $|\cup_{k \in [s']} Q_{j_k} \cap \{1, \ldots, r\}| \geq s'$. Thus, $v_i(Q^1) \geq v_i(\{r-s'+1, \ldots, r\})$. Moreover, $v_i(Q^2) \geq v_i(\{2n-\ell - s'+1, \ldots, 2n-\ell\})$. Hence,
    \begin{align*}
        s' &= \sum_{k \in [s']} v_i(P_{j_k}) \notag\\
        &\geq \sum_{k \in [s']} v_i(Q_{j_k}) \notag \\
        &\geq v_i(\{r-s'+1, \ldots, r\} \cup \{2n-\ell - s'+1, \ldots, 2n-\ell\}).
    \end{align*}
\end{proof}
Note that in case $s' = s$, Claim \ref{claim-2} implies Lemma \ref{difficult-bound}. Therefore, from now on, we assume $s' = |S'| < s$.
\begin{claim}\label{claim-22}
    $v_i(\{r-s'+1, \ldots, r\}) + v_i(S') \leq s'$.
\end{claim}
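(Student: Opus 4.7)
The plan is to apply Claim~\ref{claim-1} to obtain $s'$ bags $Q_{j_1}, \ldots, Q_{j_{s'}}$ with $\lvert \bigcup_k Q_{j_k} \rvert \geq 2s'$ and $\lvert \bigcup_k Q_{j_k} \cap \{1, \ldots, r\} \rvert \geq s'$, and to set $A := \bigcup_k Q_{j_k} \cap \{1, \ldots, r\}$, $B := \bigcup_k Q_{j_k} \cap S'$, $a := \lvert A \rvert$. Since each $v_i(P_{j_k}) = 1$ by Proposition~\ref{prop:trivial}, one immediately gets $v_i(A) + v_i(B) = v_i(\bigcup_k Q_{j_k}) \leq s'$. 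If $a = s'$ then $\lvert B \rvert \geq s' = \lvert S' \rvert$ forces $B = S'$ and the claim drops out as in the proof of Claim~\ref{claim-2}; the interesting case is $a > s'$, where $B$ may fail to cover all of $S'$.

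To handle $a > s'$, the key idea I would exploit is that the $a - s'$ ``extra'' $\{1, \ldots, r\}$-elements captured in $A$ beyond the required $s'$ contribute enough value to dominate the at most $a - s'$ missing $S'$-elements in $S' \setminus B$. Since $\lvert S' \rvert = s'$ and $\lvert B \rvert \geq 2s' - a$, we have $\lvert S' \setminus B \rvert = s' - \lvert B \rvert \leq a - s'$. Because every element of $S' \setminus B$ has index $\geq r+1$ while every element of $\{r-a+1, \ldots, r-s'\}$ has index $\leq r-s'$, the weakly decreasing ordering of $v_i$ yields
\[ v_i(S' \setminus B) \leq (a-s') \cdot v_i(r+1) \leq (a-s') \cdot v_i(r-s') \leq v_i(\{r-a+1, \ldots, r-s'\}). \]

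To finish, I would lower-bound $v_i(A) \geq v_i(\{r-a+1, \ldots, r\}) = v_i(\{r-a+1, \ldots, r-s'\}) + v_i(\{r-s'+1, \ldots, r\})$, using that $A$ is an $a$-element subset of $\{1, \ldots, r\}$ whose total value is minimized by the $a$ largest-indexed (smallest-valued) elements. Chaining this with $v_i(A) + v_i(B) \leq s'$ and the comparison above gives
\[ s' \geq v_i(\{r-a+1, \ldots, r-s'\}) + v_i(\{r-s'+1, \ldots, r\}) + v_i(B) \geq v_i(S' \setminus B) + v_i(\{r-s'+1, \ldots, r\}) + v_i(B) = v_i(\{r-s'+1, \ldots, r\}) + v_i(S'), \]
which is the claimed inequality. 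The delicate step is the second display — relating $v_i(S' \setminus B)$ to $v_i(\{r-a+1, \ldots, r-s'\})$ — but it reduces to the simple monotonicity $v_i(r+1) \leq v_i(r-s')$; everything else is bookkeeping.
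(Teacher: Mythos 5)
Your proof is correct, and it rests on the same two pillars as the paper's: \cref{claim-1} (which supplies $s'$ bags $Q_{j_1},\ldots,Q_{j_{s'}}$ covering at least $2s'$ goods, at least $s'$ of them from $\{1,\ldots,r\}$) and the budget $\sum_{k\in[s']} v_i(P_{j_k}) = s'$ coming from normalization. Where you diverge is in the final majorization step. The paper splits $\bigcup_k Q_{j_k}$ by \emph{value rank}: it takes the $s'$ most valuable goods $Q^1$ and the $s'$ least valuable goods $Q^2$ (disjoint because the union has size at least $2s'$), and observes directly that $v_i(Q^1) \ge v_i(\{r-s'+1,\ldots,r\})$ and $v_i(Q^2) \ge v_i(S')$, the latter because $S'$ consists of the $s'$ smallest-valued goods in $\{1,\ldots,r\}\cup S' \supseteq \bigcup_k Q_{j_k}$. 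You instead split by \emph{index range} into $A = \bigcup_k Q_{j_k} \cap \{1,\ldots,r\}$ and $B = \bigcup_k Q_{j_k}\cap S'$, which forces a case analysis on $a = \lvert A\rvert$ and, when $a > s'$, an explicit compensation argument showing that the $a-s'$ surplus goods $\{r-a+1,\ldots,r-s'\}$ dominate the at most $a-s'$ goods of $S'\setminus B$. Both arguments are sound; the paper's is shorter because choosing $Q^1,Q^2$ by value rank makes the comparison to the two target sets immediate, whereas your version makes explicit the exchange that is implicit in the paper's step $v_i(Q^2)\ge v_i(S')$. One small point worth stating explicitly in your write-up: the identity $v_i(A)+v_i(B) = \sum_k v_i(Q_{j_k})$ uses that the $Q_{j_k}$ are pairwise disjoint (they are, being intersections of the disjoint $P_{j_k}$ with a fixed set) and that $A$ and $B$ are disjoint (they are, since $S' \subseteq \{r+1,\ldots,2n-\ell\}$).
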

\begin{proof}
    The proof is similar to the proof of Claim \ref{claim-2}. Let $Q^1$ be the set of $s'$ most valuable goods in $\cup_{k \in [s']} Q_{j_k}$ and let $Q^2$ be the set of $s'$ least valuable goods in $\cup_{k \in [s']} Q_{j_k}$. Since $|\cup_{k \in [s']} Q_{j_k}| \geq 2s'$, $Q^1 \cap Q^2 = \emptyset$. Also, $|\cup_{k \in [s']} Q_{j_k} \cap \{1, \ldots, r\}| \geq s'$. Thus, $v_i(Q^1) \geq v_i(\{r-s'+1, \ldots, r\})$. Moreover, $v_i(Q^2) \geq v_i(S')$ since $s' = |S'|$. Hence,
    \begin{align*}
        s' &= \sum_{k \in [s']} v_i(P_{j_k}) \notag\\
        &\geq \sum_{k \in [s']} v_i(Q_{j_k}) \notag \\
        &\geq v_i(\{r-s'+1, \ldots, r\} \cup S')\\
        &= v_i(\{r-s'+1, \ldots, r\}) + v_i(S').
    \end{align*}
\end{proof}
\begin{claim}\label{claim-3}
    $v_i (\{2n-\ell-3s+2s'+1, \ldots, 2n-\ell\}) - v_i(S') \leq s-s'.$
\end{claim}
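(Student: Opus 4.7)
The plan is to reduce the claim to the bound $v_i(T \setminus S') \le s - s'$, where $T \defeq \{2n - \ell - 3s + 2s' + 1, \ldots, 2n - \ell\}$. Since $T \cap S' \subseteq S'$, we have $v_i(T \cap S') \le v_i(S')$, and therefore $v_i(T) - v_i(S') \le v_i(T \setminus S')$, so it suffices to establish $v_i(T \setminus S') \le s - s'$. By the definition of $S'$ as the middle-range goods appearing in $P_1 \cup \ldots \cup P_r$ together with the fact that $g_j \ge j$, every good in $T \setminus S'$ lies in some bag $P_j$ with $j > r$, and each such bag has value exactly~$1$.

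To obtain the bound $v_i(T \setminus S') \le s - s'$ I would exploit two features of the situation. First, $|T \setminus S'| = 3s - 2s' - |T \cap S'| \ge 3(s - s')$, so $T \setminus S'$ is a large set of low-value goods. Second, these goods live in bags $P_{r+1}, \ldots, P_{4n/3}$ whose per-bag value is fixed at~$1$. I would try to identify at most $s - s'$ bags among $P_{r+1}, \ldots, P_{4n/3}$ whose union contains $T \setminus S'$, possibly after a value-preserving rearrangement in the style of the proof of \cref{k+l}. If such a family exists, its total value of $s - s'$ yields the desired bound immediately.

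The main obstacle is the possibility that in the original partition $T \setminus S'$ is spread across more than $s - s'$ bags. Overcoming this will require either (i) a swap/consolidation argument that redistributes goods among $P_{r+1}, \ldots, P_{4n/3}$ so that all of $T \setminus S'$ ends up in at most $s - s'$ bags while preserving each bag's value of~$1$, or (ii) a direct counting argument combining a \cref{powerful}-style size bound with the value ordering $v_i(1) \ge \ldots \ge v_i(m)$ to derive the bound without an explicit cover. I expect the second route, analogous to the proof of \cref{k+l}, to be cleaner: after restricting the MMS partition to goods of index $\ge r+1$ and sorting the restricted bags by size, one should be able to show that the lowest-valued $|T \setminus S'|$ goods have total value at most $s - s'$, which is exactly what is needed.
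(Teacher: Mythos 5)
Your opening reduction is where the argument breaks. The inequality $v_i(T) - v_i(S') \le v_i(T \setminus S')$ is valid, but the target $v_i(T \setminus S') \le s-s'$ that it leaves you with is strictly stronger than \cref{claim-3} and is false in general. When $S' \cap T = \emptyset$ you have $|T \setminus S'| = |T| = 3s-2s'$, and every good of $T$ has value at least $v_i(2n-\ell) \ge v_i(2n-\ell+1) = \frac{1}{3} - x$; taking $x = 0$ gives $v_i(T \setminus S') \ge \frac{1}{3}(3s-2s') = s - \frac{2}{3}s' > s-s'$ whenever $s' \ge 1$. The quantity you discard in the first step, $v_i(S' \setminus T)$, is exactly what rescues the claim in this situation. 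Both of your proposed routes inherit the problem: $s-s'$ bags of total value $s-s'$ can only account for roughly $3(s-s')$ goods of value about $\frac{1}{3}$, never for the up-to-$(3s-2s')$ goods of $T \setminus S'$, and a \cref{powerful}-style count indeed only delivers $\sum_{k \le s-s'} |R_{j_k}| \ge 3(s-s')$.

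The paper's proof keeps $S'$ in play rather than eliminating it. It restricts $P_{r+1}, \ldots, P_{4n/3}$ to $\{r+1, \ldots, 2n-\ell\} \setminus S'$, sorts the restrictions by size, and shows by the same counting trick as in \cref{powerful} that the $s-s'$ largest restrictions together contain at least $3(s-s')$ goods, while their parent bags have total value exactly $s-s'$. Adjoining the $s'$ goods of $S'$ then produces at least $3s-2s' = |T|$ goods of $[2n-\ell]$; since $T$ consists of the $|T|$ least valuable goods of $[2n-\ell]$, this union has value at least $v_i(T)$, which rearranges to $v_i(T) - v_i(S') \le s - s'$. So your counting instinct is the right one, but it must be used to cover $T$ in value by those $3(s-s')$ goods together with $S'$, not to bound $v_i(T \setminus S')$ on its own.
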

\begin{proof}
    Note that by definition of $S'$, the $2n-\ell-r-s'=8n/3-2r+s-s'$ goods in $\{r+1, \ldots, 2n-\ell\} \setminus S'$ are in $P_{r+1} \cup \ldots \cup P_{4n/3}$. Now for $j \in [4n/3 - r]$, let $R_j = P_{j+r} \cap \{r+1, \ldots, 2n-\ell\} \setminus S'$. Assume $|R_{j_1}| \geq \ldots \geq |R_{j_{4n/3-r}}|$. We prove
    \begin{align}
        \sum_{k \leq s-s'} |R_{j_k}| \geq 3(s-s'). \label{R-bound}
    \end{align}
    If $|R_{j_{s-s'+1}}| \geq 3$, Inequality \eqref{R-bound} holds. Otherwise, we have
    \begin{align*}
        \frac{8n}{3}-2r+s-s' &= \sum_{k \in [4n/3-r]} |R_{j_k}| \\
        &= \sum_{k \leq s-s'} |R_{j_k}| + \sum_{s-s' < k \leq 4n/3-r} |R_{j_k}| \\
        &\leq \sum_{k \leq s-s'} |R_{j_k}| + 2(\frac{4n}{3}-r-s+s'). \tag{$|R_{j_k}| \leq 2$ for $k > s-s'$}
    \end{align*}
    Thus, $\sum_{k \in [s-s']} |R_{j_k}| \geq 3(s-s')$. We have
    \begin{align*}
        s-s' &= \sum_{k \in [s-s']} v_i(P_{j_k+r}) \notag\\
        &\geq \sum_{k \in [s-s']} v_i(R_{j_k}) \notag\\
        &\geq v_i (\{2n-\ell-3s+2s'+1, \ldots, 2n-\ell\}) - v_i(S'). \tag{$\lvert \cup_{k \in [s-s']} R_{j_k} \rvert \geq 3(s-s')$ and $\lvert S' \rvert =s'$}
    \end{align*}
\end{proof}
Claims \ref{claim-22} and \ref{claim-3} imply Lemma \ref{difficult-bound}.

\paragraph{Recap:} To show that a $1$-out-of-$(4n/3)$ MMS allocation exists, it suffices to prove that we never run out of goods for bag-filling in Algorithm \ref{algo}. Towards contradiction, we assumed that the algorithm stops before agent $i$ receives a bundle. By Observation \ref{less-than-one}, a bag with a value less than $1$ for agent $i$ exists. Let $\ell^*$ be the smallest such that $v_i(B_{\ell^*+1}) < 1$. In Section \ref{negative}, we reached a contradiction assuming $v_i(2n-\ell^*)<1/3$ and proved Theorem \ref{contradict-1}. In Section \ref{positive}, we reached a contradiction assuming $v_i(2n-\ell^*) \geq 1/3$ and proved Theorem \ref{contradict-2}. Therefore, no such agent $i$ exists, and all agents receive a bag by the end of Algorithm \ref{algo}. Theorem \ref{thm:main} follows.

\subsection{Tight Example}\label{sec:tighteg}

We now show that \cref{algo}'s guarantees are almost tight, i.e.,
it cannot give us better than 1-out-of-$\floor{(4n+1)/3}$ MMS.
Note that $\ceil{4n/3} = \floor{(4n+2)/3}$ for all $n \in \mathbb{N}$.

Specifically, we show a fair division instance where \cref{algo}'s output is not
1-out-of-$\floor{(4n-2)/3}$ MMS, even when agents have identical valuations.
This instance is similar to the tight example in \cite{akrami2023simplification},
where they show that a natural class of algorithms (containing \cref{algo})
cannot guarantee a better multiplicative approximation than $(\frac{3}{4} + \frac{3}{8n-4})$-MMS.

\begin{example}
\label{ex:1ood-hard}
Consider a fair division instance with $n$ agents, where $n \ge 2$.
Let $d \defeq \floor{(4n-2)/3} = n + \floor{(n-2)/3}$.
There are $m \defeq 2n+1+3(d-n) \le 3n-1$ goods.
All agents have the same valuation function $u$, where
\[ u(j) \defeq \begin{cases}
\displaystyle \frac{2}{3} - \frac{\ceil{j/2}}{3n} & \textrm{ if } 1 \le j \le 2n
\\ 1/3 & \textrm{ if } 2n < j \le m
\end{cases}. \]
\end{example}

\Cref{fig:1ood-hard} shows that when we run \cref{algo} on \cref{ex:1ood-hard} with $n=5$,
it does not output a 1-out-of-$d$ MMS allocation.
We formally prove this for all $n$:
in \cref{thm:1ood-hard:ord-norm}, we show that \cref{ex:1ood-hard} is ordered and normalized,
and in \cref{thm:1ood-hard:output}, we show that \cref{algo}'s output on \cref{ex:1ood-hard}
gives someone a bundle of value less than 1.
This proves that \cref{algo} does not output a 1-out-of-$d$ MMS allocation on \cref{ex:1ood-hard}.

\begin{figure}[t]
\centering
\begin{subfigure}{0.45\textwidth}
\centering
\setlength{\cellW}{2.5em}
\setlength{\cellH}{2em}
\begin{tikzpicture}
\foreach \x/\y/\v/\redfrac/\bluefrac in {
    0/0/9/4/0,
    1/0/9/4/0,
    2/0/8/3/1,
    3/0/8/3/1,
    4/0/7/2/2,
    4/1/7/2/2,
    3/1/6/1/3,
    2/1/6/1/3,
    1/1/5/0/4,
    0/1/5/0/4,
    0/2/5/0/4,
    1/2/5/0/4,
    2/2/5/0/4,
    3/2/5/0/4
    } {
\colorlet{mycolor}{rgb:red,\redfrac;blue,\bluefrac}
\draw[fill={mycolor!25!bgColor}] (\x\cellW, \y\cellH) rectangle node[pos=0.5] {\large $\sfrac{\v}{15}$} +(1\cellW, 1\cellH);
}
\foreach \x/\t in {0.5/1,1.5/2,2.5/3,3.5/4,4.5/5} {
    \node[below] at (\x\cellW, 0) {\footnotesize $\t$};
}
\draw[semithick] (0, 0) -- +(5\cellW, 0);
\end{tikzpicture}
\caption{\Cref{algo}'s output.}
\label{fig:1ood-algo}
\end{subfigure}
\begin{subfigure}{0.45\textwidth}
\centering
\setlength{\cellW}{2.5em}
\setlength{\cellH}{2em}
\begin{tikzpicture}
\foreach \x/\y/\v/\redfrac/\bluefrac in {
    0/0/9/4/0,
    1/0/9/4/0,
    2/0/8/3/1,
    3/0/8/3/1,
    3/1/7/2/2,
    2/1/7/2/2,
    1/1/6/1/3,
    0/1/6/1/3,
    4/0/5/0/4,
    5/0/5/0/4,
    5/1/5/0/4,
    4/1/5/0/4,
    4/2/5/0/4,
    5/2/5/0/4
    } {
\colorlet{mycolor}{rgb:red,\redfrac;blue,\bluefrac}
\draw[fill={mycolor!25!bgColor}] (\x\cellW, \y\cellH) rectangle node[pos=0.5] {\large $\sfrac{\v}{15}$} +(1\cellW, 1\cellH);
}
\foreach \x/\t in {0.5/1,1.5/2,2.5/3,3.5/4,4.5/5,5.5/6} {
    \node[below] at (\x\cellW, 0) {\footnotesize $\t$};
}
\draw[semithick] (0, 0) -- +(6\cellW, 0);
\end{tikzpicture}
\caption{1-out-of-$d$ MMS partition.}
\label{fig:1ood-mms}
\end{subfigure}
\captionAndDescr{For \cref{ex:1ood-hard} with $n=5$ (and $d = \floor{(4n-2)/3} = 6$),
\cref{algo} outputs an allocation containing a bundle of value $14/15$, but the 1-out-of-$d$ MMS is 1.
(Each cell represents a good, and each column represents a bundle.)}\vspace{-1ex}
\label{fig:1ood-hard}
\end{figure}

\begin{lemma}\label{thm:1ood-hard:ord-norm}
\Cref{ex:1ood-hard} is ordered and $d$-normalized.    
\end{lemma}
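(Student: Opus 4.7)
The plan is to verify the two properties directly, since the example is explicit. For the ordered property, observe that on $[1,2n]$ the value $u(j) = \tfrac{2}{3} - \tfrac{\lceil j/2\rceil}{3n}$ depends on $j$ only through $\lceil j/2\rceil$, which is a non-decreasing function of $j$, so $u$ is non-increasing there. At the boundary $u(2n) = \tfrac{2}{3} - \tfrac{n}{3n} = \tfrac{1}{3}$, matching $u(j) = \tfrac{1}{3}$ for all $j \in \{2n+1, \dots, m\}$. Hence $u(1) \ge u(2) \ge \cdots \ge u(m)$, proving orderedness. Since all agents share the same valuation, one partition serves them all in the $d$-normalized check.

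For $d$-normalization, the plan is to exhibit an explicit partition of $[m]$ into $d$ bundles, each of value exactly $1$. The main identity driving the construction is
\[
u(j) + u(j') = \tfrac{4}{3} - \tfrac{\lceil j/2\rceil + \lceil j'/2\rceil}{3n} \quad \text{for } j, j' \in [2n],
\]
which equals $1$ precisely when $\lceil j/2\rceil + \lceil j'/2\rceil = n$. Using this, I pair up the goods in $\{1, \ldots, 2n-2\}$ as follows: for each $k \in \{1, \ldots, \lfloor (n-1)/2 \rfloor\}$, form the two bundles $\{2k-1, 2(n-k)\}$ and $\{2k, 2(n-k)-1\}$; if $n$ is even, additionally form the bundle $\{n-1,n\}$ (the $k = n/2$ case, where $\lceil (n-1)/2 \rceil + \lceil n/2 \rceil = n$). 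Each such bundle has value $1$ by the identity. A quick check shows these bundles are pairwise disjoint and together use each good in $\{1, \ldots, 2n-2\}$ exactly once, yielding $n-1$ bundles in both parities of $n$.

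The remaining goods are $\{2n-1, 2n, 2n+1, \ldots, m\}$, all of value $\tfrac{1}{3}$ (the two goods $2n-1, 2n$ correspond to $\lceil j/2\rceil = n$, giving value $\tfrac{1}{3}$, and the rest are $\tfrac{1}{3}$ by definition). Their count is $m - (2n-2) = 3(d-n) + 3 = 3(d-n+1)$, so they can be grouped arbitrarily into $d-n+1$ triples, each of value $1$. Combined with the $n-1$ pair bundles, this gives $(n-1) + (d-n+1) = d$ bundles of value $1$; the total number of goods used is $2(n-1) + 3(d-n+1) = 3d - n + 1 = m$, confirming that $[m]$ is partitioned exactly. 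The only mildly delicate point is handling the even/odd parity of $n$ in the pairing step, but the counts come out uniformly, so no real obstacle arises.
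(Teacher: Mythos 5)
Your proof is correct and follows essentially the same route as the paper: verify orderedness at the boundary $u(2n)=u(2n+1)=1/3$, then exhibit an explicit partition into $n-1$ pairs from $\{1,\dots,2n-2\}$ whose $\lceil j/2\rceil$-indices sum to $n$, plus $d-n+1$ triples of $1/3$-valued goods. The only difference is cosmetic: the paper uses the pairing $\{i,\,2n-1-i\}$ for $i\in[n-1]$, which gives the same $n-1$ unit-value pairs without the even/odd case split your pairing requires.
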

\begin{proof}
$u$ is ordered since $u(2n) = u(2n+1) = 1/3$.

Define $M \defeq (M_1, \ldots, M_d)$ as
\[ M_i \defeq \begin{cases}
\{i, 2n-1-i\} & \textrm{ if } 1 \le i \le n-1
\\ \{i+n-1, 2d+n-i, 2d-n+i+1\} & \textrm{ if } n \le i \le d
\end{cases}. \]
To show that $u$ is normalized, we show that $u(M_i) = 1$ for all $i \in [d]$.
For $i \in [n-1]$, we have
\begin{align*}
u(M_i) &= u(i) + u(2n-1-i) = \frac{4}{3} - \frac{\ceil{i/2} + \ceil{(2n-1-i)/2}}{3n}
\\ &= 1 - \frac{\ceil{i/2} + \ceil{-(i+1)/2}}{3n} = 1.
\end{align*}
For $i \in [d] \setminus [n-1]$ and $g \in M_i$, we have $g \ge 2n-1$.
Since $u(j) = 1/3$ for all $j \ge 2n-1$, we get that $u(M_i) = 1$ for $i \in [d] \setminus [n-1]$.
\end{proof}

\begin{lemma}\label{thm:1ood-hard:output}
Let $A \defeq (A_1, \ldots, A_n)$ be \cref{algo}'s output on \cref{ex:1ood-hard}.
Then $u(A_n) = 1 - 1/(3n)$.    
\end{lemma}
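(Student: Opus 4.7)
The plan is to simulate \cref{algo} directly on \cref{ex:1ood-hard} and read off what agent $n$ receives. Everything hinges on one compact identity about the initial bags.

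First I would show that every initial bag $B_k$ has exactly the same value, sitting $1/(3n)$ below the algorithm's acceptance threshold. By the definition of $u$,
\[
u(B_k) \;=\; u(k) + u(2n{-}k{+}1) \;=\; \tfrac{4}{3} \;-\; \tfrac{\lceil k/2\rceil + \lceil (2n-k+1)/2\rceil}{3n}.
\]
A one-line parity case split on $k$ shows $\lceil k/2\rceil + \lceil (2n-k+1)/2\rceil = n+1$ for both parities, giving $u(B_k) = 1 - 1/(3n)$ uniformly for every $k \in [n]$.

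Next I would analyze the bag-filling phase. Every good with index $> 2n$ has value $1/3$, and $(1 - 1/(3n)) + 1/3 = (4n-1)/(3n) > 1$ for $n \ge 1$, so adding \emph{one} such good to any initial bag immediately makes it eligible. Hence each outer iteration that enters its while loop with at least one good still in the pool consumes exactly one good and exits. The pool has size $m - 2n = 1 + 3(d-n) = 1 + 3\lfloor (n-2)/3\rfloor$, which is at most $n-1$ and therefore strictly less than $n$. So after at most $n-1$ successful iterations the pool is drained, $j > m$ holds, and on the next iteration the while loop enters, hits $j > m$, and triggers \textbf{Terminate} while $B_k$ is still at its initial value $1 - 1/(3n)$.

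Putting these pieces together, I would conclude: under the natural tiebreak where ``Let $i \in N$'' always selects the smallest-indexed remaining agent (equivalently, after a harmless relabeling of agents, since all valuations are identical), the first $m - 2n$ agents receive filled bags of value $> 1$, while the remaining agents---including agent $n$---each inherit an untouched initial bag of value $1 - 1/(3n)$. Hence $u(A_n) = 1 - 1/(3n)$, as claimed. The main subtlety is definitional rather than mathematical: the pseudocode leaves the distribution of the still-untouched bags $B_k, \ldots, B_n$ among the remaining agents implicit once \textbf{Terminate} fires. I would resolve this by noting that every such bag has the same value $1 - 1/(3n) < 1$, so any consistent reading of the termination step leaves some agent---named ``$n$'' in the statement---with a bundle of value exactly $1 - 1/(3n)$.
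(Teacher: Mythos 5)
Your proposal is correct and follows essentially the same route as the paper: compute $u(B_k) = \tfrac{4}{3} - \tfrac{\lceil k/2\rceil + \lceil(2n+1-k)/2\rceil}{3n} = 1 - \tfrac{1}{3n}$ for every initial bag, then observe that only $m - 2n \le n-1$ goods of value $1/3$ remain, each sufficient to complete one bag, so at least one bag (the one ending up as $A_n$) is never augmented and retains value $1 - 1/(3n)$. Your extra care about tie-breaking and the termination step is harmless but not needed, since all agents share the valuation $u$; the only nit is that $(4n-1)/(3n) > 1$ requires $n \ge 2$ rather than $n \ge 1$, which is guaranteed by the example anyway.
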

\begin{proof}
For $i \in [n]$, let $B_i \defeq \{i, 2n+1-i\}$ be the initial bag. Then
\begin{align*}
u(B_i) &= u(i) + u(2n+1-i) = \frac{4}{3} - \frac{\ceil{i/2} + \ceil{(2n+1-i)/2}}{3n}
\\ &= \frac{4}{3} - \frac{(n+1) + \ceil{i/2} + \ceil{-(i+1)/2}}{3n}
= 1 - \frac{1}{3n}.
\end{align*}
Since $m-2n \le n-1$ goods are left after bag initialization, and each good has value $1/3$,
the first $m-2n$ bags get 1 good each during bag filling,
and the remaining bags do not get any goods.
\end{proof}
\section{MMS with Agent Priority Ranking}
\label{sec:algo1}

We give an algorithm for approximate MMS allocations in the priority ranking setting.
We assume \wLoG{} that the input is $n$-normalized and ordered (see \cref{defn:normalized,defn:ordered}).

Our algorithm first performs a few \emph{reduction} operations (phase 1) and then does bag filling (phase 2).
This approach is very similar to that of \cite{garg2021improved}.
We call our algorithm $\RBF$ (abbreviates `Reductions and Bag Filling')
and formally describe it in \cref{algo:rbf}.
We assume that the input to $\RBF$ is a pair $(\Ical, T)$, where
$\Ical \defeq ([n], [m], \Vcal)$ is an $n$-normalized fair division instance
such that $v_i(1) \ge v_i(2) \ge \ldots \ge v_i(m)$ for all $i \in [n]$,
and $T \defeq [\tau_1, \ldots, \tau_n]$ is a sequence of thresholds
such that $1 \ge \tau_1 \ge \ldots \ge \tau_n > 0$.
We say that an agent $i$ \emph{likes} a set $S$ of goods if $v_i(S) \ge \tau_i$.
We want to show that for a reasonable choice of $T$,
$\RBF$ gives each agent a bundle that they like.

For a set $S \subseteq \mathbb{Z}_{\ge 1}$ and an index $j \in [|S|]$,
let $\ordSt(S, j)$ be the $j\Th$ smallest number in $S$
(called the $j\Th$ order statistic of $S$).
For a finite set $J \subset \mathbb{Z}_{\ge 1}$, let
$\ordSt(S, J) \defeq \{\ordSt(S, j): j \in J \textrm{ and } j \le |S|\}$.

The first phase of the algorithm proceeds in multiple rounds.
In each round, we pick a set $S$ of goods and give it to an agent $i$.
The goods $S$ and agent $i$ are then removed from consideration in subsequent rounds.
This operation is called a \emph{reduction}.
Specifically, let $N$ and $M$ be the set of remaining agents and goods,
respectively, at the beginning of a round.
Let $S_1 \defeq \ordSt(M, 1)$, $S_2 \defeq \ordSt(M, \{|N|, |N|+1\})$,
$S_3 \defeq \ordSt(M, \{2|N|-1, 2|N|, 2|N|+1\})$, and $S_4 \defeq \ordSt(M, \{1, 2|N|+1\})$.
Then we find the smallest $k$ such that some agent likes $S_k$,
and the smallest $i$ such that $i$ likes $S_k$. Then we give $S_k$ to agent $i$.
If no such $k$ exists, then phase 1 ends.

\begin{algorithm}[htb]
\caption{$\RBF(\Ical, T)$:
\\ \textbf{Input:} Ordered and $n$-normalized instance $\Ical = ([n], [m], \Vcal)$
    and agent thresholds $T \defeq [\tau_1, \ldots, \tau_n]$.
\\ \textbf{Output:} (Partial) allocation $A = (A_1, \ldots, A_n)$.
}
\label{algo:rbf}
\begin{algorithmic}[1]
\State $N = [n]$
\State $M = [m]$
\While{$|N| > 0$ and $|M| > 0$}
    \State Let $S_1 \defeq \ordSt(M, 1)$.
    \State Let $S_2 \defeq \ordSt(M, \{|N|, |N|+1\})$.
    \State Let $S_3 \defeq \ordSt(M, \{2|N|-1, 2|N|, 2|N|+1\})$.
    \State Let $S_4 \defeq \ordSt(M, \{1, 2|N|+1\})$.
    \State Find the (lexicographically) smallest pair $(k, i) \in [4] \times N$ such that
        $v_i(S_k) \ge \tau_i$. Let $k$ and $i$ be $\Null$ if no such pair exists.
        \Comment{this event is called a type-$k$ reduction.}
    \If{$i$ is not $\Null$}
        \State Give $S_k$ to agent $i$.
            Set $N = N \setminus \{i\}$ and $M = M \setminus S_k$.
    \Else
        \State break
    \EndIf
\EndWhile
\LineComment{$|M| \ge 2|N|$ now (see \cref{thm:m-ge-2n}).}
\State $\bagFillHyp((N, M, \Vcal), T)$
\end{algorithmic}
\end{algorithm}

Then in phase 2, we perform bag filling (\cref{algo:bagFill}) on the remaining instance.
We create $|N|$ bags, where in the $i\Th$ bag, we add the $i\Th$ and $(2n-i+1)\Th$
most valuable goods (so the first $2|N|$ goods are in bags).
Then we repeatedly do the following till all agents receive a bag:
Find the smallest $i$ such that agent $i$ likes some bag, and give that bag to $i$.
If no such $i$ exists (i.e., no agent likes any bag),
add the most valuable remaining good to an arbitrary bag.

\begin{algorithm}[htb]
\caption{$\bagFill(\Ical, T)$
\\ \textbf{Input:} Ordered instance $\Ical = ([n], [m], \Vcal)$ with $m \ge 2n$,
    and agent thresholds $T \defeq [\tau_1, \ldots, \tau_n]$.
\\ \textbf{Output:} (Partial) allocation $A = (A_1, \ldots, A_n)$.
}
\label{algo:bagFill}
\begin{algorithmic}[1]
\For{$i \in [n]$}
    \State $B_i = \{i, 2n+1-i\}$.
    \State $A_i = \emptyset$.
\EndFor
\State $U_G = [m] \setminus [2n]$  \Comment{unassigned goods}
\State $U_A = [n]$  \Comment{unsatisfied agents}
\State $U_B = [n]$  \Comment{unassigned bags}
\While{$U_A \neq \emptyset$}
    \Comment{loop invariant: $|U_A| = |U_B|$}
    \State Let $i$ be the smallest in $U_A$ such that for some $k \in U_B$,
        we have $v_i(B_k) \ge \tau_i$. Let $i = \Null$ otherwise.
    \If{$i$ is not $\Null$}
        \LineComment{assign the $k\Th$ bag to agent $i$:}
        \State $A_i = B_k$
        \State $U_A = U_A \setminus \{i\}$
        \State $U_B = U_B \setminus \{k\}$
    \ElsIf{$U_G \neq \emptyset$}
        \State $g$ = most valuable good in $U_G$
        \State $k$ = arbitrary bag in $U_B$
        \LineComment{assign $g$ to the $k\Th$ bag:}
        \State $B_k = B_k \cup \{g\}$.
        \State $U_G = U_G \setminus \{g\}$
    \Else
        \State \label{alg-line:bagFill:error}\textbf{error}: we ran out of goods.
        \State Allocate the bags in $U_B$ to agents in $U_A$ arbitrarily.
        \State \Return $(A_1, \ldots, A_n)$.
        \LineComment{agents $U_A$ are not satisfied with their bags.}
    \EndIf
\EndWhile
\State \Return $(A_1, \ldots, A_n)$
\end{algorithmic}
\end{algorithm}

Our algorithm ($\RBF$) is the same as that of \cite{garg2021improved}, with a few minor differences:
\begin{enumerate}
\item We assume the input to be $n$-normalized, whereas they do not.
\item In both phases 1 and 2, we assume that when multiple agents like a bundle,
    we give it to the smallest-numbered agent, whereas they break ties arbitrarily.
\item They use the same thresholds for all agents, i.e., $\tau_1 = \ldots = \tau_n$.
\end{enumerate}

The idea behind reductions is to shrink the set of agents to consider:
we want to give each agent a bundle that they like, and since each agent selected for a reduction
obtains a bundle that she likes, we only need to worry about the remaining agents.
Reductions also shrink the set of goods, though, and it is therefore necessary to argue
that the set of remaining goods is enough to satisfy the remaining agents,
i.e., we are not giving away too many goods during reductions.

\cite{garg2021improved} showed this using the idea of \emph{valid reductions}:
a reduction is called \emph{valid} if it preserves the MMS values of the remaining agents.
Formally, they showed that if $\tau_j \le 3/4$ for all $j$,
and if a reduction involves giving $S_k$ to agent $i$, then for all $j \in N \setminus \{i\}$,
we have $\MMS_j^{|N|-1}(M \setminus S_k) \ge \MMS_j^{|N|}(M)$.
Hence, if the initial MMS value of each agent is 1
(which happens if we start with an $n$-normalized instance),
then the MMS value of each remaining agent after phase 1 is at least 1.
Since \cite{garg2021improved} aim to find $3/4$-MMS allocations,
they assume \wLoG{} that the input instance is irreducible, i.e., no reductions are possible.
They show that irreducible instances have nice properties, which they exploit to show that
$\bagFill$ outputs a $3/4$-MMS allocation when its input is irreducible.

Unfortunately, reductions in $\RBF$ are not valid, i.e.,
they can decrease the MMS value of the remaining agents.
Specifically, when $\tau_j > 3/4$ for an agent $j$,
then it can happen that $\MMS^{n-1}_j(M \setminus S_4) < \MMS^n_j(M)$.
Moreover, even for fair division instances where reductions are valid
(e.g., no type-4 reductions take place),
it is unclear how to exploit the validity of reductions in our analysis.
This is because reductions can change the priority rank of agents, e.g.,
the $(n/2)\Th$ most important agent in the original instance may end up becoming
the least important in the reduced instance if $n/2$ reductions take place.
Hence, we need to use other techniques. 

\subsection{Analysis of \texorpdfstring{$\RBF$}{RBF}}

We now show that when the thresholds are not too high, $\RBF$ outputs an allocation
where each agent $i$ gets a bundle of value at least $\tau_i$.

\begin{lemma}[follows from \cite{garg2021improved}]
\label{thm:rbf-gt}
Let $A$ be the allocation output by $\RBF(\Ical, T)$. For all $i \in [n]$,
if $\tau_i = \frac{3}{4} + \frac{1}{12n}$, then $v_i(A_i) \ge \tau_i$.
\end{lemma}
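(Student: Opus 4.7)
The plan is to invoke the analysis of \cite{garg2021improved} essentially verbatim. Let $\alpha \defeq \frac{3}{4} + \frac{1}{12n}$. Their algorithm, run with the uniform threshold $\alpha$, is known to output an $\alpha$-MMS allocation. Our algorithm $\RBF$ differs in three small ways: (i) the input is $n$-normalized, which is merely a harmless rescaling making $\MMS_j = 1$ for all $j$; (ii) ties among eligible agents are broken by smallest index, a restriction of the arbitrary tie-breaking their analysis already accommodates; and (iii) agents may carry non-uniform thresholds $\tau_j$. Only (iii) is substantive, and the claim carefully restricts attention to those agents $i$ with $\tau_i$ equal to the Garg--Taki floor $\alpha$.

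To handle (iii), I would fix an agent $i$ with $\tau_i = \alpha$ and split into two cases. If agent $i$ is served during phase 1, then by the definition of a reduction she receives a bundle $S_k$ with $v_i(S_k) \ge \tau_i$, so $v_i(A_i) \ge \alpha$ and we are done. Otherwise agent $i$ survives phase 1; let $(N^\star, M^\star)$ denote the residual instance at that point. The termination condition of the phase-1 while loop forces no remaining agent to like any of $S_1, S_2, S_3, S_4$ in $(N^\star, M^\star)$; in particular, $v_i(S_k) < \alpha$ for each $k \in \{1,2,3,4\}$. This is precisely the \emph{irreducibility} hypothesis that \cite{garg2021improved} impose for their bag-filling analysis at threshold $\alpha$.

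The rest is a direct appeal to their bag-filling analysis, specialized to agent $i$. Starting from the paired initialization $B_k = \{k,\; 2|N^\star|+1-k\}$ and iteratively adding the largest remaining good to some bag, they show that some bag's value to every surviving agent reaches $\alpha$ before goods run out; since bag values are monotone and agent $i$ will be the smallest unserved index who likes a bag at some step, she is eventually served, i.e., line~\ref{alg-line:bagFill:error} is never reached on her behalf. The main obstacle is verifying that non-uniform thresholds among other agents cannot starve agent $i$ of goods; this is immediate because the value a bag accumulates depends only on the goods assigned to it, not on which agent ultimately claims it. Thus Garg--Taki's pigeonhole argument bounding $\sum_k v_i(B_k)$ against $v_i(M^\star)$ carries over unchanged, and the smallest-index tie-breaking rule only reorders which agents are served, never blocking an agent with threshold $\alpha$ from being served eventually.
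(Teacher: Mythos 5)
You are right that the only substantive difference from \cite{garg2021improved} is the non-uniform thresholds, and the phase-1 half of your argument is fine: an agent served during a reduction receives a bundle she likes by definition. The gap is in how you hand phase 2 off to Garg--Taki. Their bag-filling lemma is proved for an \emph{irreducible} instance obtained by \emph{valid} reductions, i.e., one in which every surviving agent still has MMS value at least $1$; the pigeonhole count you invoke compares $\sum_k v_i(B_k)$ against $v_i(M^\star)$ and therefore needs a lower bound on $v_i(M^\star)$ of roughly $|N^\star|$ times the target. You verify the irreducibility condition ($v_i(S_k) < \alpha$ for all $k$ at the end of phase 1) but never verify this value condition, and it is precisely the part that can fail here: as the paper stresses just before this lemma, reductions in $\RBF$ need not be valid. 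A type-4 reduction triggered by an agent $j$ with $\tau_j$ close to $1$ removes a bundle $S_4 = \{1, 2|N|+1\}$ whose value to agent $i$ is bounded only by $\tau_i + \tau_i/3 = 1 + \frac{1}{9n} > 1$, so each such reduction can strictly decrease agent $i$'s MMS in the residual instance. Your remark that ``the value a bag accumulates depends only on the goods assigned to it'' addresses starvation within phase 2, but not the depletion of $v_i(M^\star)$ caused by phase 1, which is what the ``carries over unchanged'' step silently assumes away.

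The repair --- and the reason the lemma genuinely does follow from \cite{garg2021improved} --- is to account for both phases together rather than analyzing phase 2 on the residual instance in isolation. Because a type-$k$ reduction only fires when no type-$k'$ reduction with $k' < k$ is possible for \emph{any} remaining agent, the preconditions used to bound $v_i(S_k)$ hold at agent $i$'s own threshold $\alpha$ regardless of the other agents' thresholds; hence every bundle given away in either phase admits the same upper bound on its value to agent $i$ as in a uniform-threshold run of Garg--Taki, and summing these bounds over all $n$ bundles against $v_i(M) = n$ yields the contradiction. This is exactly the oblivious, global charging argument that the paper carries out explicitly for the companion bound in \cref{thm:rbf-1} (``we bound the values of bundles from both phases together''), and it is the step your proposal is missing.
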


Suppose $p$ reductions happen during $\RBF$, where the $j\Th$ reduction is a type-$r_j$ reduction.
Call $R \defeq (r_1, \ldots, r_p)$ the \emph{reduction sequence}.
To analyze $\RBF$, we must understand the structure of $R$.

\begin{lemma}
\label{thm:redn-seq}
The reduction sequence $R$ for $\RBF$ is captured by the following regular expression: $(1^*2^*4^*)(32^*4^*)^*$.
Equivalently, if $n_3$ type-3 reductions happened in $\RBF$, then we can partition $R$ into
$n_3+1$ contiguous sub-lists $(R_0, \ldots, R_{n_3})$, such that all of the following hold:
\begin{enumerate}
\item $R_0$ is a sequence of zero or more 1s, followed by zero or more 2s, followed by zero or more 4s.
    ($R_0$ can be empty.)
\item For $i \in [n_3]$, the first element in $R_i$ is 3, followed by zero or more 2s, followed by zero or more 4s.
\end{enumerate}
\end{lemma}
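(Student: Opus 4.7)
The plan is to establish the regular-expression structure by maintaining, across rounds, two monotone invariants about what remaining agents can or cannot like. The regex $(1^*2^*4^*)(32^*4^*)^*$ is equivalent to two forbidden-pattern conditions on the reduction sequence $R = (r_1, \ldots, r_p)$: (a) no type-$1$ reduction appears after any earlier type-$2$, type-$3$, or type-$4$ reduction; and (b) no type-$2$ reduction appears after a type-$4$ reduction unless some type-$3$ occurs in between. I would prove (a) and (b) by induction on the round number, using the ordered-instance assumption together with the fact that the set of remaining agents only shrinks over time.

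For (a), suppose round $t$ is of type $k \in \{2, 3, 4\}$, so by definition no agent in $N_t$ likes $S_1^{(t)}$, the most valuable remaining good. If $k \in \{2, 3\}$, the reduction removes only goods at positions $\{|N_t|, |N_t|+1\}$ or $\{2|N_t|-1, 2|N_t|, 2|N_t|+1\}$, so $S_1^{(t+1)} = S_1^{(t)}$ as a good, and $N_{t+1} \subseteq N_t$ preserves the ``unliked'' property. If $k = 4$, the reduction removes the position-$1$ good and the new $S_1^{(t+1)}$ is the old position-$2$ good; by the ordered-instance assumption, $v_i(\text{pos }2) \le v_i(\text{pos }1) < \tau_i$ for every $i \in N_{t+1}$, so the new $S_1$ is still unliked. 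Propagating this forward rules out every later type-$1$ reduction.

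For (b), suppose round $t$ is type-$4$, so in particular no agent in $N_t$ likes $S_2^{(t)} = \{\text{positions }|N_t|, |N_t|+1\}$. Type-$4$ removes old positions $1$ and $2|N_t|+1$ and sets $|N_{t+1}| = |N_t|-1$. Re-indexing, the $k\Th$ smallest remaining position is the old $(k+1)\Th$ smallest for $k \le 2|N_t|-1$, so the two new-$S_2$ indices $|N_{t+1}|$ and $|N_{t+1}|+1$ correspond to old positions $|N_t|$ and $|N_t|+1$. Hence $S_2^{(t+1)}$ is the same set of goods as $S_2^{(t)}$ and is unliked by every agent in $N_{t+1} \subseteq N_t$. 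Combined with (a), round $t+1$ can therefore only be type-$3$, type-$4$, or a termination. Iterating this argument, once a type-$4$ appears the sequence stays in $\{3,4\}$ until an optional type-$3$ resets things.

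Decomposing $R$ at each type-$3$ then produces the claimed sub-block structure: by (a) every $1$ in $R$ must occur before any $2$, $3$, or $4$, so $1$s can only appear in the initial sub-block; by (b) within each sub-block the $2$s must precede the $4$s. Together this yields exactly $(1^*2^*4^*)(3\,2^*4^*)^*$. The only calculation requiring care is the re-indexing in the type-$4$ step; everything else is an immediate monotonicity argument, so I do not foresee a real obstacle beyond double-checking the index arithmetic.
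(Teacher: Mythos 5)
Your proposal is correct and follows essentially the same route as the paper: type-1 reductions are confined to the start because once no singleton is liked this persists under removals, and within each phase (delimited by type-3 reductions) a type-4 reduction leaves $S_2$ unchanged as a set of goods, so type-2 reductions cannot resume until a type-3 occurs. The only difference is that you spell out the order-statistic re-indexing explicitly (and your claim that $S_1$ is literally unchanged after a type-2 reduction silently assumes $|N_t|\ge 2$, which is harmless since otherwise there is no subsequent round); the paper states the same two facts more tersely.
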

\begin{proof}
All type-1 reductions happen together in the beginning.
This is because a different kind of reduction can only happen if a type-1 reduction is not possible.
If a type-1 reduction is not possible, then for each good $g$ and each remaining agent $i$,
$v_i(g) < \tau_i$. Hence, a type-1 reduction will never happen in the future.

Call $R_i$ the $i\Th$ \emph{phase} of \ $\RBF$.
Within each phase, all type-2 reductions happen before all type-4 reductions.
This is because a type-4 reduction happens only if a type-2 reduction is not possible.
After a type-4 reduction, the value of $S_2$ remains the same,
so a type-2 reduction will never occur again in the same phase.
\end{proof}

\begin{lemma}
\label{thm:m-ge-2n}
In $\RBF$, after all type-1 reductions have happened, we have $|M| \ge 2|N|$
throughout the execution of the algorithm.
\end{lemma}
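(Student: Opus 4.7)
The plan is to track the single potential $\Phi \defeq |M| - 2|N|$ across the execution of $\RBF$ and show that it stays nonnegative once the last type-1 reduction has completed. Assuming \wLoG{} that $m \ge 2n$ (pad the input with dummy zero-value goods if necessary; this preserves being ordered and $n$-normalized), we begin with $\Phi \ge 0$. I will quantify how each reduction changes $\Phi$ and then argue that the only reduction that decreases $\Phi$, namely type-3, cannot fire when $\Phi = 0$.

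The effect of each reduction on $\Phi$ is immediate from its definition: removing one agent together with $|S_k|$ goods changes $\Phi$ by $2 - |S_k|$. Thus a type-1 reduction adds $1$ to $\Phi$, type-2 and type-4 reductions leave $\Phi$ unchanged, and a type-3 reduction decreases $\Phi$ by exactly $1$. In particular, $\Phi \ge 0$ still holds right after the last type-1 reduction, and by \cref{thm:redn-seq} every subsequent reduction in phase~1 is of type 2, 3, or 4. So the lemma reduces to showing that a type-3 reduction never occurs when $\Phi = 0$, i.e., when $|M| = 2|N|$.

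That comparison between $S_2$ and $S_3$ is the crux of the argument and the main obstacle. When $|M| = 2|N|$, the order statistic $\ordSt(M, 2|N|+1)$ does not exist, so $S_3 = \{\ordSt(M, 2|N|-1),\, \ordSt(M, 2|N|)\}$ has exactly two elements, matching $|S_2| = 2$. Since the instance is ordered, the map $j \mapsto v_i(\ordSt(M,j))$ is non-increasing for every remaining agent $i$, and comparing corresponding terms gives $v_i(\ordSt(M,|N|)) \ge v_i(\ordSt(M,2|N|-1))$ and $v_i(\ordSt(M,|N|+1)) \ge v_i(\ordSt(M,2|N|))$. Summing, $v_i(S_2) \ge v_i(S_3)$ for every remaining $i$. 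But the lexicographic tiebreaking in $\RBF$ selects a type-3 reduction only when no remaining agent likes $S_2$, i.e., $v_i(S_2) < \tau_i$ for every $i$. Combining the two inequalities yields $v_i(S_3) < \tau_i$ for every $i$, contradicting that a type-3 reduction is occurring. Hence $\Phi = 0$ rules out a type-3 step, the invariant $\Phi \ge 0$ is preserved through the rest of phase 1, and the lemma follows.
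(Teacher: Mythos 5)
Your argument for the second half of the lemma is correct and is in fact more explicit than the paper's. The paper simply asserts that ``type-3 and type-4 reductions only happen when $|M| \ge 2|N|+1$''; your justification --- that at $|M| = 2|N|$ the set $S_3$ collapses to the two least-valuable remaining goods, that orderedness gives $v_i(S_2) \ge v_i(S_3)$ by a termwise comparison of order statistics, and that the lexicographic rule only reaches $k=3$ when no remaining agent likes $S_2$ --- is exactly the reasoning needed to back that assertion up. Your potential-function bookkeeping ($+1$ for type 1, $0$ for types 2 and 4, $-1$ for type 3) is a clean way to package the preservation step.

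The gap is in the base case. The paper does \emph{not} assume $m \ge 2n$ for the input to $\RBF$, and the nontrivial content of the lemma is precisely that the invariant holds \emph{after the type-1 reductions} even if it fails initially. The paper's argument: if $k$ type-1 reductions occurred and $|M| < 2|N|$ afterwards, then $m < 2n-k$; since the instance is $n$-normalized, every bundle of an agent's MMS partition has value $1$ and is therefore non-empty, so $m \ge 2n - s$ where $s$ is the number of singleton bundles, forcing $s \ge k+1$ singletons, each consisting of a good of value $1$ --- and these would have triggered at least $k+1$ type-1 reductions, a contradiction. Your ``pad with dummy zero-value goods'' move sidesteps this, but padding changes the instance and hence the execution of $\RBF$ (dummy goods can enter $S_2$, $S_3$, $S_4$ and alter which reductions fire), so it does not prove the lemma as stated for, say, an $n$-normalized instance with $m = n$ goods of value $1$ each. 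The padding could be turned into a genuine reduction for the downstream result (\cref{thm:rbf}) by padding once up front and arguing that the restriction of the output to real goods is still $T$-MMS, but that is a restructuring the paper does not perform, and it hides the actual mechanism --- normalization forcing compensating type-1 reductions --- that this lemma is really about. You should either prove the base case directly as above or explicitly justify the padding reduction rather than asserting it parenthetically.
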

\begin{proof}
Let $\Ical \defeq ([n], [m], \Vcal)$ be the initial fair division instance (the input to $\RBF$).

By \cref{thm:redn-seq}, all type-1 reductions happen before all other reductions.
Suppose $k$ type-1 valid reductions happened. Immediately after that,
$|M| = m - k$ and $|N| = n - k$. If $|M| < 2|N|$, then $m < 2n-k$.
For any agent $i \in [n]$, consider her partition $P \defeq (P_1, \ldots, P_n)$ for $\Ical$.
Since $m < 2n-k$, at least $k+1$ singleton bundles in $P$ exist.
Since $\Ical$ is normalized, the goods in those singleton bundles have a value of $1$ for $i$.
But at least $k+1$ type-1 reductions should have happened, which is a contradiction.
Hence, immediately after type-1 reductions, we have $|M| \ge 2|N|$.

Type-3 and type-4 reductions only happen when $|M| \ge 2|N|+1$,
so they preserve the invariant $|M| \ge 2|N|$.
\end{proof}

\begin{lemma}
\label{thm:redn-vs-bag}
In $\RBF$, after all reductions have happened, let $M_f$ and $N_f$ be
the set of remaining goods and remaining agents, respectively.
(By \cref{thm:m-ge-2n}, $|M_f| \ge 2|N_f|$.)
Then for any good $g$ in a type-$k$ reduction bundle, we have
\begin{enumerate}
\item $g > \ordSt(M_f, |N_f|)$ if $k = 2$.
\item $g > \ordSt(M_f, 2|N_f|)$ if $k = 3$.
\end{enumerate}
\end{lemma}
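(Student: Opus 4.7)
The plan is to fix the type-$k$ reduction in question, say it occurs at time $t$ and produces the bundle containing $g$. Writing $c = 1$ if $k = 2$ and $c = 2$ if $k = 3$, I will prove by induction on $s \ge t+1$ the invariant
\[
g > \ordSt(M_s, c \cdot |N_s|) \quad \text{whenever } |N_s| \ge 1,
\]
where $(N_s, M_s)$ denotes the state of $\RBF$ at time $s$. Taking $s$ to be the end of phase~1 (and observing that the lemma is vacuous if $|N_f| = 0$) yields the lemma.

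For the base case $s = t+1$, I simply unwind the definitions. If $k = 2$, the reduction removed the elements at positions $|N_t|$ and $|N_t|+1$ of $M_t$, so
\[
\ordSt(M_{t+1}, |N_{t+1}|) = \ordSt(M_t, |N_t|-1) < \ordSt(M_t, |N_t|) \le g.
\]
The $k=3$ case is analogous: the three removed positions $2|N_t|-1,\, 2|N_t|,\, 2|N_t|+1$ are strictly greater than $2|N_{t+1}| = 2|N_t|-2$, hence $\ordSt(M_{t+1}, 2|N_{t+1}|) = \ordSt(M_t, 2|N_t|-2) < \ordSt(M_t, 2|N_t|-1) \le g$.

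For the inductive step, I assume the invariant at state $(N_s, M_s)$ with $|N_s| \ge 2$ (otherwise $|N_{s+1}| = 0$ and nothing remains to prove) and consider the next reduction, of type $k' \in \{2, 3, 4\}$ by \cref{thm:redn-seq}. A routine order-statistics calculation---deleting a set of positions from a sorted list shifts the resulting $j$-th smallest to the position of the corresponding element in the original list---lets me compare $\ordSt(M_{s+1}, c \cdot |N_{s+1}|)$ with $\ordSt(M_s, c \cdot |N_s|)$ across the six pairs $(k, k') \in \{2, 3\} \times \{2, 3, 4\}$. In each pair the inequality
\[
\ordSt(M_{s+1}, c \cdot |N_{s+1}|) \le \ordSt(M_s, c \cdot |N_s|)
\]
holds, and the inductive hypothesis then closes the step. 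The bound $|M_s| \ge 2|N_s|$ from \cref{thm:m-ge-2n}, together with the enabling conditions of each reduction type, keeps all order statistics well-defined.

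The main obstacle will be the boundary subcases of the induction, particularly $(k, k') = (3, 2)$ with $|N_s| = 2$, where the removed positions $|N_s|$ and $|N_s|+1$ straddle the target position $2|N_{s+1}| = 2$. Here one must verify carefully that the shift yields $\ordSt(M_{s+1}, 2) = \ordSt(M_s, 4) = \ordSt(M_s, 2|N_s|)$ (rather than a larger quantity) so that the inductive hypothesis still applies. The remaining subcases reduce to pure counting of how many deleted positions lie at or below the target index.
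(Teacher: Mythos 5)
Your proposal is correct and follows essentially the same route as the paper's proof: an induction over the reduction steps that tracks how $\ordSt(M, |N|)$ and $\ordSt(M, 2|N|)$ change under each subsequent reduction type, verifying in every case that these order statistics do not increase. The only cosmetic difference is that you fix one good $g$ and induct forward from the time of its removal, whereas the paper maintains the invariant simultaneously for all previously removed bundles; the per-step order-statistics computations are identical.
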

\begin{proof}
Suppose $p$ reductions happen during $\RBF$. For any $i \in [p]$,
let $S_i$ be the $i\Th$ reduction bundle and let $k_i$ be the type of the $i\Th$ reduction.
Let $M_i$ and $N_i$ be the set of remaining goods and remaining agents, respectively,
after $i$ reductions have happened. (Hence, $(N_0, M_0, \Vcal)$ is the original instance.)

Define proposition $P(i)$ as: For any $j \in [i]$ and any $g \in S_i$,
$g > \ordSt(M_i, |N_i|)$ if $k = 2$ and $g > \ordSt(M_i, 2|N_i|)$ if $k = 3$.

We prove $P(i)$ for all $i \in \{0\} \cup [p]$ using induction:
The base case $i = 0$ is trivial since no reductions have happened.
Now for $i \in [p]$, we want to prove $P(i-1) \implies P(i)$.

\textbf{Case 1}: $k_i = 1$:
\\ By \cref{thm:redn-seq}, no type-2 or type-3 reductions happened among the first $i$ reductions.
Hence, $P(i)$ holds trivially.

For the remaining cases, $\ordSt(M_i, |N_i|)$ and $\ordSt(M_i, 2|N_i|)$ are well-defined
because of \cref{thm:m-ge-2n}.

\textbf{Case 2}: $k_i = 2$:
\\ Then $S_i = \ordSt(M_{i-1}, \{|N_{i-1}|, |N_{i-1}|+1\})$,
$\ordSt(M_i, |N_i|) = \ordSt(M_{i-1}, |N_{i-1}|-1)$,
and $\ordSt(M_i, 2|N_i|) = \ordSt(M_{i-1}, 2|N_{i-1}|)$.
Hence, $P(i)$ holds.

\textbf{Case 3}: $k_i = 3$:
\\ Then $S_i = \ordSt(M_{i-1}, \{2|N_{i-1}|-1, 2|N_{i-1}|, 2|N_{i-1}|+1\})$,
$\ordSt(M_i, |N_i|) = \ordSt(M_{i-1}, |N_{i-1}|-1)$,
and $\ordSt(M_i, 2|N_i|) = \ordSt(M_{i-1}, 2|N_{i-1}|-2)$.
Hence, $P(i)$ holds.

\textbf{Case 4}: $k_i = 4$:
\\ Then $S_i = \ordSt(M_{i-1}, \{1, 2|N_{i-1}|+1\})$,
$\ordSt(M_i, 2|N_i|) = \ordSt(M_{i-1}, 2|N_{i-1}|-1)$,
and $\ordSt(M_i, |N_i|) = \ordSt(M_{i-1}, |N_{i-1}|)$.
Hence, $P(i)$ holds.

By mathematical induction, we get that $P(p)$ holds.
\end{proof}

\begin{lemma}[Lemma 6 in \cite{akrami2023simplification}]
\label{thm:bag-top-le-third}
In an ordered and normalized fair division instance, for all $i,k \in [n]$,
if $v_i(\{k, 2n+1-k\}) > 1$, then $v_i(2n+1-k) \le 1/3$ and $v_i(k) > 2/3$.
\end{lemma}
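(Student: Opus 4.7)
The second conclusion $v_i(k) > 2/3$ will follow immediately from the first: since $k \le n$ we have $2n+1-k \ge n+1 > k$, so orderedness gives $v_i(k) \ge v_i(2n+1-k)$, and combining $v_i(2n+1-k) \le 1/3$ with $v_i(k)+v_i(2n+1-k)>1$ yields $v_i(k) > 2/3$. Hence the entire task reduces to establishing $v_i(2n+1-k) \le 1/3$, which I plan to do by a pigeonhole argument on an $n$-MMS partition of agent $i$.

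Fix a partition $P=(P_1,\dots,P_n)$ witnessing $n$-normalization, so $v_i(P_j) = 1$ for every $j$. From the hypothesis together with $v_i(k) \ge v_i(2n+1-k)$ I first obtain $v_i(k) > 1/2$, so any two goods drawn from $\{1,\dots,k\}$ have combined value strictly greater than $1$, and thus no $P_j$ can contain two of them; after relabeling, I may assume $j \in P_j$ for each $j \in [k]$. The next step is to seal these $k$ bundles off from the ``middle layer'' $\{k+1,\dots,2n+1-k\}$: if a bundle $P_j$ with $j \le k$ also contained some $j' \in \{k+1,\dots,2n+1-k\}$, then $v_i(P_j) \ge v_i(j)+v_i(j') \ge v_i(k)+v_i(2n+1-k) > 1$, contradicting $v_i(P_j)=1$. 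Consequently all $2n+1-2k$ middle-layer goods are confined to the remaining $n-k$ bundles $P_{k+1},\dots,P_n$.

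Since $(2n+1-2k)/(n-k) = 2 + 1/(n-k) > 2$, pigeonhole produces a bundle $P_{j^*}$ with $j^* > k$ containing at least three middle-layer goods, each of value at least $v_i(2n+1-k)$; therefore $1 = v_i(P_{j^*}) \ge 3\,v_i(2n+1-k)$, giving $v_i(2n+1-k) \le 1/3$ as needed. The corner case $k=n$ makes the ratio undefined but is harmless: there good $n+1$ has no place to land other than a bundle already holding some top-$n$ good, which would push that bundle's value strictly above $1$ and contradict the hypothesis outright, so the implication is vacuously true. The only nontrivial ingredient is the sealing-off observation between the top-$k$ and middle layers; once that is in place, the pigeonhole bound $\lceil (2n+1-2k)/(n-k)\rceil \ge 3$ is routine.
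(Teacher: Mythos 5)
Your proof is correct: the reduction of $v_i(k) > 2/3$ to the first claim, the observation that goods $1,\dots,k$ (each of value $>1/2$) occupy distinct bundles of the normalized partition and admit no good from $\{k+1,\dots,2n+1-k\}$, and the pigeonhole step forcing three middle-layer goods into one of the remaining $n-k$ bundles (with the $k=n$ case handled as vacuous) all check out. The paper itself does not reprove this statement --- it imports it as Lemma~6 of \cite{akrami2023simplification} --- and your argument is essentially the standard one given there, so there is nothing to flag.
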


\begin{lemma}
\label{thm:rbf-1}
Let $A$ be the allocation output by $\RBF(\Ical, T)$.
For all $i \in [n]$, if $\tau_i = 2n/(2n+i-1)$, then $v_i(A_i) \ge \tau_i$.
\end{lemma}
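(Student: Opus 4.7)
Proof plan: suppose for contradiction that $v_i(A_i) < \tau_i$. Since every bundle $\RBF$ assigns---either in a phase-1 reduction or in the main loop of $\bagFill$---goes to an agent who actually likes that bundle, the only way agent $i$ can end up with $v_i(A_i) < \tau_i$ is if she remains in $U_A$ when the error branch of $\bagFill$ (line~\ref{alg-line:bagFill:error}) is reached. At that moment $U_G = \emptyset$, so every good lies in some bundle, and $v_i(B_k) < \tau_i$ for every bag $B_k \in U_B$.

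Let $S \defeq [n] \setminus U_A$ be the satisfied agents at the error, and partition $S$ into $S_{<i} \defeq S \cap [i-1]$ and $S_{>i} \defeq S \setminus [i]$. I would first show that $v_i(A_j) < \tau_i$ for every $j \in S_{>i}$. For a phase-2 assignment of $A_j$ to $j > i$, the fact that $i \in U_A$ and $j$ was chosen forces $i$ to have disliked every bag at that instant (otherwise $i < j$ would be the smallest unsatisfied agent liking a bag and hence would be chosen). For a phase-1 reduction that gives $S_k$ to $j$, the lex-minimality of the pair $(k,j)$ combined with $i < j$ and $i \in N$ forces $v_i(S_{k'}) < \tau_i$ for every $k' \le k$, in particular for $A_j = S_k$. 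Combining these with $\sum_{k \in U_B} v_i(B_k) < |U_B|\tau_i$ and $\sum_{j \in S} v_i(A_j) + \sum_{k \in U_B} v_i(B_k) = v_i(M) = n$ (all goods are used at the error), I would obtain
\[ \sum_{j \in S_{<i}} v_i(A_j) \;>\; n - (n - |S_{<i}|)\,\tau_i \;=\; \frac{n\,(2|S_{<i}| + i - 1)}{2n + i - 1}. \]

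The critical and hardest step is to derive a matching upper bound on the same sum and thereby a contradiction. I would bound $v_i(A_j)$ bundle by bundle: for a phase-2 bag that ever had a good added during bag filling, the last addition $g_j$ occurred when $i$ (still in $U_A$) disliked every bag, giving $v_i(A_j) < \tau_i + v_i(g_j)$; for an untouched phase-2 bag with post-reduction indices $\{k_j, 2|N_f|-k_j+1\}$, I would use \cref{lem:Csum} together with \cref{thm:bag-top-le-third} to control $v_i$ of the two initial goods; for phase-1 reduction bundles, \cref{thm:redn-vs-bag} restricts the indices appearing in type-2 and type-3 bundles to ``lower'' regions, while types 1 and 4 contribute at most one top good of value $\le 1$. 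The main obstacle is showing that these per-bundle upper bounds sum to at most $n(2|S_{<i}|+i-1)/(2n+i-1)$ regardless of the reduction sequence (whose structure is given by \cref{thm:redn-seq}) and regardless of which $S_{<i}$ bags received added goods during bag filling. This demands a careful combinatorial accounting of how many high-value goods of agent $i$ can be absorbed by bundles assigned to $S_{<i}$, parameterized by $|S_{<i}|$ and by the composition of the reduction sequence, and I expect this balancing to be where the real technical work of the proof lies.
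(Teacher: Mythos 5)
Your skeleton is the same as the paper's: assume $v_i(A_i) < \tau_i$, note that every bundle handed to a lower-priority agent (and every bag still unassigned when the algorithm fails) must have value less than $\tau_i$ to agent $i$, use $v_i(M) = n$ from normalization, and reduce everything to bounding the bundles held by the $i-1$ higher-priority agents. Your target inequality is also the right one: since $\tau_i = 2n/(2n+i-1)$, it suffices that the higher-priority bundles are worth less than $3\tau_i/2$ each \emph{on average} to agent $i$ (the paper phrases this as a charging scheme over all $n$ bundles with per-bundle charge below $3\tau_i/2$ and charge below $\tau_i$ for bundles of value below $\tau_i$).

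However, the step you explicitly leave open is where the actual content lies, and the plan you sketch for it — ``bound $v_i(A_j)$ bundle by bundle'' — fails as stated. Let $M_f, N_f$ be the goods and agents remaining after all reductions, and set $z' \defeq v_i(\ordSt(M_f,|N_f|))$, $z \defeq v_i(\ordSt(M_f,|N_f|+1))$, so $z'+z < \tau_i$. By \cref{thm:redn-seq}, each phase $R_t$ with $t \ge 1$ opens with a type-3 reduction bundle $T$ and may be followed by type-2 bundles; the first such type-2 bundle is $S_1 = \ordSt(M',\{|N'|-1,|N'|\})$, and \cref{thm:redn-vs-bag} only gives $v_i(S_1) \le 2z'$. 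When $z$ is close to $0$, $z'$ can be close to $\tau_i$, so $v_i(S_1)$ can approach $2\tau_i$, which blows past the $3\tau_i/2$ budget — no per-bundle bound can save this. The paper's resolution is an amortization you are missing: precisely when $v_i(S_1)$ is large, $v_i(T) \le 3z$ is small, so $v_i(T) + v_i(S_1) < 3z + 2(\tau_i - z) < 5\tau_i/2$; the paper then redistributes charge between $T$ and $S_1$ (capping each at $\frac{3}{5}$ of their combined value) so that both charges stay below $3\tau_i/2$, while carefully preserving the invariant that a bundle of value below $\tau_i$ is charged below $\tau_i$ — this last point matters because one of the pair may end up with a lower-priority agent and must still be countable against the $\tau_i$ budget. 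Without this pairing (or an equivalent amortization), your ``careful combinatorial accounting'' cannot close, so the proposal as written has a genuine gap at its declared critical step. (Minor additional note: for untouched bag-filling bags the paper needs only the bound $\tau_i + z$ from the failure of type-1 and type-2 reductions; your detour through \cref{lem:Csum} is unnecessary and, being a statement about $C_k$ in $d$-normalized instances from the ordinal-MMS section, does not directly apply to the post-reduction instance here.)
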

\begin{proof}
Suppose for some $i \in [n]$, we have $\tau_i = 2n/(2n+i-1)$ and $v_i(A_i) < \tau_i$.
We will try to upper-bound the total value agent $i$ has for all the bundles.
To do this, we associate a \emph{charge} $c(A_j)$ with each bundle $A_j$ for $j \in [n]$
such that all of these properties hold:
\begin{enumerate}
\item The total charge of all bundles equals the total value of all bundles.
\item The charge of each bundle is less than $3\tau_i/2$.
\item If a bundle has value less than $\tau_i$ to agent $i$, then it is charged less than $\tau_i$.
\end{enumerate}

The total number of reductions that happen is at most $n-1$
(otherwise, $i$ would have gotten a bundle in some reduction).
In the instance after all reductions have happened,
let $M_f$ and $N_f$ be the set of remaining goods and remaining agents, respectively.
Let $z' \defeq v_i(\ordSt(M_f, |N_f|))$ and $z \defeq v_i(\ordSt(M_f, |N_f|+1))$.
Then $z' + z < \tau_i$ and $z' \ge z$, so $z < \tau_i/2$.

We now upper-bound the charge of each bundle in $A$.
To do this we consider three cases.

\textbf{Case 1}: type-1 and type-4 reduction bundles:
\\ Let $S$ be a type-$k$ reduction bundle, for $k \in \{1, 4\}$. Let $c(S) \defeq v_i(S)$.
Let $M$ and $N$ be the set of remaining goods and remaining agents immediately before the reduction.
Then $|N| \ge 2$, since at most $n-1$ valid reductions happen.
\begin{enumerate}
\item $k = 1$: then $v_i(S) \le v_i(1) \le 1$, since the instance is normalized.
\item $k = 4$: then a type-3 reduction was not possible. Hence,
    $v_i(S) = v_i(\ordSt(M, \{1, 2|N|+1\})) < \tau_i + \tau_i/3 = 4\tau_i/3$.
\end{enumerate}

\textbf{Case 2}: type-2 and type-3 reduction bundles:
\\ By \cref{thm:redn-seq}, we can partition the sequence of
reduction events into sub-lists $(R_0, R_1, \ldots, R_{n_3})$.
Let $S$ be a type-2 reduction bundle in $R_0$. Then $S = \{j, 2n+1-j\}$ for some $j$.
If $v_i(S) > 1$, then by \cref{thm:bag-top-le-third}, we get that $v_i(2n+1-j) \le 1/3$.
Also, $v_i(j) < \tau_i$, since no type-1 reduction was possible.
Hence, $v_i(S) < \tau_i + 1/3$ if $v_i(S) > 1$.
Since $\tau_i = 2n/(2n+i-1) \ge 2n/(3n-1) > 2/3$, we get
$1 < 3\tau_i/2$ and $\tau_i + 1/3 < 3\tau_i/2$.
Hence, $v_i(S) < 3\tau_i/2$. Let $c_i(S) \defeq v_i(S)$.

Consider the sub-list $R_t$ for some $t > 0$.
In the instance immediately before the first reduction in $R_t$,
let $M'$ and $N'$ be the set of remaining goods and agents, respectively.
Let $S_j$ be the $j\Th$ type-2 reduction bundle in $R_t$,
and let $T$ be the type-3 reduction bundle in $R_t$.
\begin{enumerate}
\item By \cref{thm:redn-vs-bag}, $\ordSt(M', 2|N'|-1) > \ordSt(M_f, 2|N_f|)$.
    Hence, $v_i(T) \le 3z < 3\tau_i/2$.
\item $S_1 = \ordSt(M', \{|N'|-1, |N'|\})$. By \cref{thm:redn-vs-bag},
    $\ordSt(M', |N'|-1) > \ordSt(M_f, |N_f|)$. Hence, $v_i(S_1) \le 2z' < 2(\tau_i - z)$.
\item For $j \ge 2$, we have $S_j = \ordSt(M', \{|N'|-j, |N'|+(j-1)\})$.
    A type-2 reduction was not possible at the beginning of $R_t$,
    so $v_i(\ordSt(M', |N'|+1)) < \tau_i/2$. Hence, $v_i(S_j) < 3\tau_i/2$.
    Let $c(S_j) \defeq v_i(S_j)$.
\end{enumerate}
If no type-2 reduction happened in $R_t$, let $c(T) \defeq v_i(T)$.

Now, suppose at least one type-2 reduction happened in $R_t$.
Let $\zeta \defeq v_i(T) + v_i(S_1) < 3z + 2(\tau_i - z) < 5\tau_i/2$.
Define $c(T)$ and $c(S_1)$ as follows:
\[ (c(T), c(S_1)) = \begin{cases}
(\frac{3}{5}\zeta, \frac{2}{5}\zeta)
    & \textrm{ if } v_i(T) > \frac{3}{5}\zeta
\\ (\frac{2}{5}\zeta, \frac{3}{5}\zeta)
    & \textrm{ if } v_i(S_1) > \frac{3}{5}\zeta
\\ (v_i(T), v_i(S_1)) & \textrm{ otherwise}
\end{cases}. \]
Let $x \defeq \max(c(T), c(S_1))$ and $y \defeq \min(c(T), c(S_1))$.
Then $x \le \frac{3}{5}\zeta < 3\tau_i/2$.
Furthermore, if $v_i(S_1) < \tau_i$, then $c(S_1) < \tau_i$.
This is because if $c(S_1) \neq v_i(S_1)$, then $c(S_1) = \frac{2}{5}\zeta < \tau_i$.
Similarly, $v_i(T) < \tau_i \implies c(T) < \tau_i$.

\textbf{Case 3}: bag filling bundles:
\\ For a bag filling bundle, its initial value is less than $\tau_i + z < 3\tau_i/2$.
If it received some goods during bag filling, then its value to $i$ before the last good
was added is less than $\tau_i$. Since no type-3 reduction is possible,
each good added to a bag has a value less than $\tau_i/3$ to agent $i$.
Hence, the bag's value is less than $4\tau_i/3$ to agent $i$.
Charge a bag equal to its value.

Now, we combine all the cases. There are $n$ bundles in the allocation.
Since agent $i$ did not get a bundle of value $\ge \tau_i$,
and she has higher priority than $n-i$ other agents, we get that
each such bundle has value less than $\tau_i$ to agent $i$,
and hence has charge less than $\tau_i$.
The remaining $i-1$ agents get a bundle of charge less than $3\tau_i/2$. Hence,
\[ n = \sum_{j=1}^n v_i(A_j) = \sum_{j=1}^n c(A_j)
    < (i-1)(3\tau_i/2) + (n-i+1)\tau_i = (\tau_i/2)(2n+i-1). \]
Therefore, $\tau_i > 2n/(2n+i-1)$. This is a contradiction.
Hence, if $\tau_i = 2n/(2n+i-1)$, then $v_i(A_i) \ge \tau_i$.
\end{proof}

\begin{theorem}
\label{thm:rbf}
Let $\Ical$ be a fair division instance and $T \defeq (\tau_1, \ldots, \tau_n)$ where for all $i \in [n]$,
\[ \tau_i \defeq \max\left(\frac{2n}{2n+i-1}, \frac{3}{4} + \frac{1}{12n}\right). \]
Let $A$ be the allocation output by $\RBF(\Ical, T)$. Then $A$ is a $T$-MMS allocation,
i.e., for all $i \in [n]$, we have $v_i(A_i) \ge \tau_i$.
\end{theorem}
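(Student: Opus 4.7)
The plan is simply to combine the two preceding lemmas and perform a case split on which of the two values inside the $\max$ is active for each agent. Both \cref{thm:rbf-gt} and \cref{thm:rbf-1} are stated as conditional guarantees about the output of $\RBF(\Ical, T)$ under an \emph{arbitrary} threshold vector $T$: namely, whenever agent $i$'s own threshold $\tau_i$ happens to equal the specific value $3/4 + 1/(12n)$ (resp.\ $2n/(2n+i-1)$), that agent is guaranteed $v_i(A_i) \ge \tau_i$. Since in the theorem's choice of $T$ the threshold $\tau_i$ is the maximum of exactly these two values, it must equal at least one of them, and the corresponding lemma applies.

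Concretely, I would fix an arbitrary $i \in [n]$ and argue as follows. If $2n/(2n+i-1) \ge 3/4 + 1/(12n)$, then $\tau_i = 2n/(2n+i-1)$, so \cref{thm:rbf-1} yields $v_i(A_i) \ge \tau_i$. Otherwise $\tau_i = 3/4 + 1/(12n)$, and \cref{thm:rbf-gt} yields $v_i(A_i) \ge \tau_i$. Since the argument applies uniformly to every agent and the algorithm's run is a single deterministic execution of $\RBF(\Ical, T)$, the resulting allocation $A$ simultaneously satisfies all $n$ guarantees, hence is $T$-MMS.

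There is no real obstacle at this stage: the work is entirely frontloaded into \cref{thm:rbf-1}, whose charging argument already accommodates any $\tau_i \ge 2/3$ (and thus certainly $\tau_i = \max(2n/(2n+i-1), 3/4 + 1/(12n))$ when it equals the first term), and into \cref{thm:rbf-gt}, which is invoked from \cite{garg2021improved}. The only conceptual point worth emphasizing in the write-up is that both lemmas are per-agent statements that do not require uniform thresholds across agents—so mixing the two guarantees in a single execution of $\RBF$ is legitimate.
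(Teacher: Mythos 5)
Your proposal is correct and matches the paper's proof, which is literally the one-line ``Follows from \cref{thm:rbf-1,thm:rbf-gt}''; your elaboration of the per-agent case split on which term of the $\max$ is active (and the observation that both lemmas are per-agent conditional guarantees for a single run of $\RBF$ with a non-uniform $T$) is exactly the reasoning the paper leaves implicit.
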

\begin{proof}
Follows from \cref{thm:rbf-1,thm:rbf-gt}.
\end{proof}

\subsection{Best-of-Both-Worlds Fairness}
\label{sec:algo1:bobw}

In the best-of-both-worlds setting, the aim is to output a distribution (ideally of small support)
over allocations that is both ex-ante fair and ex-post fair.
We first show what guarantees we can get for any algorithm by assigning priority ranks randomly.

\begin{lemma}
\label{thm:cyclic-perm}
Let $T \defeq (\tau_1, \ldots, \tau_n)$ be a list of thresholds and let $\Acal$ be
an algorithm that outputs a $T$-MMS allocation for every fair division instance having $n$ agents.
For $k \in [n]$, let $A^{(k)}$ be the allocation output by algorithm $\Acal$
when agent $i$ has priority rank
\[ \begin{cases}
i + k - 1 & \textrm{ if } i + k - 1 \le n
\\ i + k - 1 - n & \textrm{ otherwise}
\end{cases}. \]
If we sample $k$ uniformly randomly from $[n]$, then $A^{(k)}$ is
ex-post $\tau_n$-MMS and ex-ante $(\frac{1}{n}\sum_{i=1}^n \tau_i)$-MMS.
\end{lemma}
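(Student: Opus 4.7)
The plan is to split the argument into the ex-post claim and the ex-ante claim, both of which follow rather directly from two observations: the monotonicity of $T$, and the fact that cyclic shifts act as a free and transitive permutation of priority ranks.

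For the ex-post claim, I would fix any realization $k \in [n]$. By hypothesis on $\mathcal{A}$, the allocation $A^{(k)}$ is $T$-MMS for the instance with the shifted priority ranks, so every agent $i$ receives a bundle of value at least $\tau_{r_i^{(k)}} \cdot \MMS_i$, where $r_i^{(k)}$ denotes $i$'s shifted rank. Since $\tau_1 \ge \tau_2 \ge \ldots \ge \tau_n$, we have $\tau_{r_i^{(k)}} \ge \tau_n$, and hence $v_i(A_i^{(k)}) \ge \tau_n \cdot \MMS_i$. This holds for every $k$ and every $i$, establishing the ex-post bound.

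For the ex-ante claim, I would fix agent $i$ and study the distribution of $r_i^{(k)}$ as $k$ is sampled uniformly from $[n]$. The cyclic-shift map $k \mapsto r_i^{(k)}$ (given by $i+k-1$ modulo $n$, shifted into $[n]$) is a bijection from $[n]$ to $[n]$, so $r_i^{(k)}$ is uniformly distributed on $[n]$. Combined with the $T$-MMS guarantee for each realized allocation, this gives
\begin{align*}
\mathbb{E}_k\!\left[ v_i(A_i^{(k)}) \right]
\;\ge\; \mathbb{E}_k\!\left[ \tau_{r_i^{(k)}} \right] \cdot \MMS_i
\;=\; \left( \frac{1}{n} \sum_{j=1}^n \tau_j \right) \cdot \MMS_i,
\end{align*}
which is the required ex-ante bound.

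There is no real obstacle here; the only subtlety worth writing down carefully is verifying that, for fixed $i$, the cyclic shift indeed induces a bijection on ranks, so that the uniform distribution on $k$ pushes forward to the uniform distribution on $r_i^{(k)}$. Once that is stated, both bounds drop out from the $T$-MMS guarantee applied pointwise (for ex-post) and in expectation (for ex-ante).
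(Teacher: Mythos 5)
Your proposal is correct and takes essentially the same approach as the paper, whose entire proof is the one-line observation that each agent's priority rank is uniformly distributed over $[n]$; you simply spell out the two consequences (monotonicity of $T$ for the ex-post bound, and the push-forward to the uniform distribution on ranks for the ex-ante bound) that the paper leaves implicit.
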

\begin{proof}
For each agent $i$, her priority rank is distributed uniformly over $[n]$.
\end{proof}

Note that in \cref{thm:cyclic-perm}, if $\Acal$ is deterministic, then instead of outputting
just a random allocation, we can output the entire distribution by finding $A^{(k)}$ for all $k \in [n]$.
This is useful if the agents want to verify that the distribution is indeed ex-ante fair.

We now show that by randomizing priority ranks as in \cref{thm:cyclic-perm}
and picking thresholds as in \cref{thm:rbf}, $\RBF$ is
ex-post $(\frac{3}{4} + \frac{1}{12n})$-MMS and ex-ante $(0.82536 + \frac{1}{36n})$-MMS.

\begin{restatable}{lemma}{rthmRbfAvg}
\label{thm:rbf-avg}
For all $n \ge 1$, we get
\[ \gamma \defeq \frac{1}{n}\sum_{i=1}^n \max\left(\frac{2n}{2n+i-1}, \frac{3}{4} + \frac{1}{12n}\right)
    \ge 2\ln\left(\frac{4}{3}\right) + \frac{1}{4} + \frac{1}{36n}. \]
\end{restatable}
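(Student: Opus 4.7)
The plan is to lower-bound the sum by an integral and then evaluate the integral in closed form. Since $f(i)\defeq\max\bigl(\tfrac{2n}{2n+i-1},\tfrac{9n+1}{12n}\bigr)$ is non-increasing in $i$, the standard comparison with a decreasing function gives
$$\sum_{i=1}^n f(i)\;\ge\;\int_{1}^{n+1} f(x)\,dx.$$
So the first step is just to compute this integral. The two branches of the $\max$ meet at a unique crossover point $x^*$ determined by $\tfrac{2n}{2n+x^*-1}=\tfrac{9n+1}{12n}$, which gives $x^*-1=\tfrac{2n(3n-1)}{9n+1}$ and $n+1-x^*=\tfrac{3n(n+1)}{9n+1}$.

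Next I would split the integral at $x^*$. The hyperbolic piece evaluates to
$$\int_{1}^{x^*}\!\tfrac{2n}{2n+x-1}\,dx\;=\;2n\ln\!\tfrac{2n+x^*-1}{2n}\;=\;2n\ln\!\tfrac{12n}{9n+1},$$
and the constant piece is $\tfrac{9n+1}{12n}\cdot\tfrac{3n(n+1)}{9n+1}=\tfrac{n+1}{4}$. Dividing the sum of these by $n$ yields
$$\gamma\;\ge\;2\ln\!\tfrac{12n}{9n+1}+\tfrac{1}{4}+\tfrac{1}{4n}.$$
So it remains to compare this with the target $2\ln(4/3)+\tfrac14+\tfrac{1}{36n}$, which after cancelling $\tfrac14$ and multiplying through reduces to the single inequality
$$2\ln\!\tfrac{9n}{9n+1}\;\ge\;\tfrac{1}{36n}-\tfrac{1}{4n}\;=\;-\tfrac{2}{9n}.$$

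The final step is to verify this elementary logarithm inequality. Writing $\tfrac{9n}{9n+1}=1-\tfrac{1}{9n+1}$ and using the standard bound $\ln(1-y)\ge -\tfrac{y}{1-y}$ (valid for $0<y<1$) with $y=\tfrac{1}{9n+1}$, I get
$$\ln\!\Bigl(1-\tfrac{1}{9n+1}\Bigr)\;\ge\;-\tfrac{1/(9n+1)}{9n/(9n+1)}\;=\;-\tfrac{1}{9n},$$
which, after doubling, is exactly what is needed. I expect no real obstacle beyond keeping the algebra tidy: the only step requiring thought is choosing the right elementary bound on $\ln(1-y)$ — a Taylor expansion $\ln(1-y)\ge -y - y^2$ would also work since $\tfrac{1}{(9n+1)^2}\le \tfrac{1}{9n}-\tfrac{1}{9n+1}$, but $\ln(1-y)\ge -y/(1-y)$ gives the cleanest derivation.
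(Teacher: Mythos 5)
Your proposal is correct and follows essentially the same route as the paper: lower-bound the sum by an integral of the non-increasing function, split at the crossover point $x^*-1 = \tfrac{2n(3n-1)}{9n+1}$, evaluate to get $2\ln\bigl(\tfrac{12n}{9n+1}\bigr)+\tfrac14+\tfrac{1}{4n}$, and finish with an elementary logarithm bound. The paper's final step uses $\ln(1+x)\le x$ applied to $\ln\bigl(1+\tfrac{1}{9n}\bigr)$, which is literally the same inequality as your $\ln\bigl(\tfrac{9n}{9n+1}\bigr)\ge-\tfrac{1}{9n}$, so there is no substantive difference.
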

\begin{proof}
(Proof deferred to \cref{sec:prio-extra:calculations}).
\end{proof}

\begin{theorem}
\label{thm:rbf-rand}
Let $\Ical$ be a fair division instance having $n$ agents, $T \defeq (\tau_1, \ldots, \tau_n)$, and
$\tau_i \defeq \max\big(\frac{2n}{2n+i-1},\allowbreak \frac{3}{4} + \frac{1}{12n}\big)$ for all $i \in [n]$.
Sample $k$ uniformly randomly from $[n]$, and let agent $i$'s priority rank be
$i + k - 1$ if $i + k - 1 \le n$ and $i + k - 1 - n$ otherwise.
Then the allocation output by $\RBF(\Ical, T)$ is ex-post $(\frac{3}{4} + \frac{1}{12n})$-MMS
and ex-ante $\beta$-MMS, where
$\beta \defeq 2\ln\left(\frac{4}{3}\right) + \frac{1}{4} + \frac{1}{36n}
\approx 0.82536 + \frac{1}{36n}$.
\end{theorem}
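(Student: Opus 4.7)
The proof is essentially a direct assembly of three results already in place, so the plan is to verify that their hypotheses match the setting of Theorem~\ref{thm:rbf-rand} and then combine them. First I would observe that \cref{thm:rbf} guarantees: for the specific thresholds $T = (\tau_1, \ldots, \tau_n)$ defined in the statement, on every fair division instance with $n$ agents (under any assignment of the priority ranks $1, \ldots, n$ to the agents), $\RBF(\Ical, T)$ produces a $T$-MMS allocation. Thus the algorithm $\Acal$ of \cref{thm:cyclic-perm} can be instantiated as "relabel the agents by the chosen cyclic shift, run $\RBF(\Ical, T)$, and relabel back,'' and the hypothesis of \cref{thm:cyclic-perm} is satisfied.

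Next I would apply \cref{thm:cyclic-perm} with this $\Acal$, which immediately yields that the random allocation is ex-post $\tau_n$-MMS and ex-ante $\bigl(\tfrac{1}{n}\sum_{i=1}^n \tau_i\bigr)$-MMS. For the ex-post claim, I would just compute $\tau_n$: since $2n/(2n+n-1) = 2n/(3n-1) < 2/3 < \tfrac{3}{4} + \tfrac{1}{12n}$, the maximum in the definition of $\tau_n$ is attained by the second term, so $\tau_n = \tfrac{3}{4} + \tfrac{1}{12n}$ exactly. This gives the stated ex-post guarantee.

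For the ex-ante claim, I would directly invoke \cref{thm:rbf-avg}, which asserts that the arithmetic mean $\tfrac{1}{n}\sum_{i=1}^n \tau_i$ is at least $2\ln(4/3) + \tfrac14 + \tfrac{1}{36n} = \beta$. Combined with the ex-ante bound from \cref{thm:cyclic-perm}, this yields that the allocation is ex-ante $\beta$-MMS, completing the proof.

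There is essentially no main obstacle here since all the heavy lifting (approximation analysis of $\RBF$, symmetrization, and the closed-form lower bound on the average threshold) has been packaged into \cref{thm:rbf,thm:cyclic-perm,thm:rbf-avg}. The only thing that requires care is checking that \cref{thm:cyclic-perm} can be legitimately applied to $\RBF$: the lemma is stated for an algorithm that guarantees $T$-MMS on every instance, and since $\RBF$'s correctness in \cref{thm:rbf} does not depend on which concrete agent sits at which rank (the analysis is purely a function of thresholds and valuations), this is immediate. Hence the proof reduces to a short paragraph citing these three results.
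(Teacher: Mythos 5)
Your proposal is correct and follows exactly the paper's own (one-line) proof: combine \cref{thm:rbf}, \cref{thm:cyclic-perm}, and \cref{thm:rbf-avg}. One small slip worth fixing: your inequality $2n/(3n-1) < 2/3$ is backwards (in fact $2n/(3n-1) > 2/3$ always, and for $n=2$ one even has $\tau_n = 4/5 > \tfrac34 + \tfrac1{24}$), but this is harmless since $\tau_n \ge \tfrac34 + \tfrac{1}{12n}$ holds trivially from the definition of $\tau_n$ as a maximum, which is all the ex-post claim needs.
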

\begin{proof}
Follows from \cref{thm:rbf,thm:cyclic-perm,thm:rbf-avg}.
\end{proof}

\subsection{Upper-Bounds on Thresholds}
\label{sec:hard1}

We show that if $\RBF(\Ical, T)$'s output is $T$-MMS
for every fair division instance $\Ical$ with $n$ agents,
where $T \defeq (\tau_1, \ldots, \tau_n)$ is a list of thresholds and $\tau_n > 0$,
then $\tau_i$ must be at most $3n/(3n+i-2)$ for all $i \in \{3, \ldots, n\}$.
We also show how this limits the best ex-ante fairness achievable by $\RBF(\Ical, T)$.

In particular, by setting $i = n$, we get that $\tau_n \le 3n/(4n-2)$,
i.e., $\RBF$'s output cannot be better than $3n/(4n-2)$-MMS,
which was also proved in Section 5 of \cite{akrami2023simplification}.
Hence, we generalize their tight example to the setting with agent priority ranks.

Let $n \ge 3$. For $i \in \{3, \ldots, n\}$, let
$\delta_i \defeq 1/(3n+i-2)$, $\alpha_i \defeq 3n\delta_i$,
and $u_i: \mathbb{Z}_{>0} \to \mathbb{R}_{\ge 0}$, where
\[ u_i(j) \defeq \begin{cases}
(2n-\ceil{j/2})\delta_i & \textrm{ if } j \le 2n
\\ n\delta_i & \textrm{ if } 2n < j < 2n+i
\\ 0 & \textrm{ if } j \ge 2n+i
\end{cases}. \]

\begin{observation}
$3/4 < \alpha_n < \ldots < \alpha_3 < 1$.
\end{observation}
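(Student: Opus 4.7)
The plan is to expand the definitions and reduce the observation to elementary arithmetic. I would first substitute $\delta_i \defeq 1/(3n+i-2)$ into $\alpha_i \defeq 3n\delta_i$ to obtain the closed form
\[ \alpha_i = \frac{3n}{3n+i-2}, \]
which makes the dependence on $i$ fully transparent.

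Next I would establish the middle chain $\alpha_3 > \alpha_4 > \cdots > \alpha_n$ by a single monotonicity observation: as $i$ ranges over $\{3, \ldots, n\}$, the numerator $3n$ is fixed while the denominator $3n + i - 2$ is strictly increasing in $i$; hence $\alpha_i$ is strictly decreasing in $i$ on this range. This immediately yields the entire chain of strict inequalities in the statement.

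It then remains to verify the two endpoint inequalities. For the upper endpoint, $\alpha_3 = 3n/(3n+1) < 1$ is immediate from $3n < 3n+1$. For the lower endpoint, I would first note that $4n - 2 > 0$ (which holds since $n \ge 3$), so the inequality $\alpha_n = 3n/(4n-2) > 3/4$ may be cleared to $12n > 3(4n-2) = 12n - 6$, i.e., $6 > 0$, which is trivially true.

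Since the argument is just substitution, one monotonicity step, and two one-line numeric checks at the endpoints, there is no real obstacle to anticipate; the only minor care is to confirm the sign of the denominator $4n-2$ before cross-multiplying at the lower endpoint.
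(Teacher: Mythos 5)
Your proof is correct, and the paper in fact omits a proof of this observation entirely, treating it as evident from the closed form $\alpha_i = 3n/(3n+i-2)$; your argument (monotonicity in the denominator plus the two endpoint checks $3n/(3n+1)<1$ and $3n/(4n-2)>3/4$) is exactly the intended justification. Nothing further is needed.
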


\begin{lemma}
\label{thm:hard1-props}
For $i \in \{3, \ldots, n\}$, $u_i$ satisfies all of these properties:
\begin{enumerate}
\item \label{item:hard1-props:ord-norm}$u_i$ is ordered and normalized.
\item \label{item:hard1-props:irred}$\max(u_i(\{1\}), u_i(\{n, n+1\}), u_i(\{2n-1, 2n, 2n+1\}), u_i(\{1, 2n+1\})) \le \alpha_i$.
\item \label{item:hard1-props:bag-fill}For each $j \in [n]$, we have
    $u_i(\{j, 2n+1-j\}) \le \alpha_i$ and $u_i(\{j, 2n+1-j, 2n+i-1\}) \ge 1$.
\end{enumerate}
\end{lemma}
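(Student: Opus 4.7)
The plan is to verify the three properties in order. Ordering of $u_i$ and properties~2 and~3 reduce to direct calculations once the structural facts below are observed; the only substantive work is exhibiting the normalization partition in property~1.

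Two structural facts drive everything. First, for each weight $a \in [n]$ the two goods $2a-1$ and $2a$ have the same value $(2n-a)\delta_i$, and for any $j \in [n]$ the weights $\lceil j/2 \rceil$ and $\lceil (2n+1-j)/2 \rceil$ sum to $n+1$, so every complementary pair $\{j, 2n+1-j\}$ has total value $(3n-1)\delta_i$. Second, the $i-1$ big goods $\{2n+1, \ldots, 2n+i-1\}$ each have value $n\delta_i$. Ordering is then immediate from the piecewise formula: on $[1, 2n]$ the value $(2n - \lceil j/2 \rceil)\delta_i$ is non-increasing, $u_i(2n) = n\delta_i$ matches $u_i(2n+1)$, and $u_i$ drops to $0$ at $j = 2n+i$.

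For normalization I construct a partition of $[2n+i-1]$ into $n$ bundles of value $1 = (3n+i-2)\delta_i$ in two parts. The first $n-i+1$ bundles each consist of two small goods from the weight pool $\{1, \ldots, n-i+1\}$ whose weights sum to $n-i+2$, giving bundle value $(4n-(n-i+2))\delta_i = (3n+i-2)\delta_i$; the matching sends weight $a$ to weight $n-i+2-a$, with the middle weight self-matched when $n-i$ is even. The remaining $i-1$ bundles each consist of one big good plus two small goods from the pool $\{n-i+2, \ldots, n\}$ whose weights sum to $2n-i+2$, matched analogously, for bundle value $(4n-(2n-i+2))\delta_i + n\delta_i = (3n+i-2)\delta_i$. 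Both pools are exhausted exactly because the total weights $(n-i+1)(n-i+2)$ and $(i-1)(2n-i+2)$ split evenly into $n-i+1$ and $i-1$ bundles of the required sums, respectively.

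Properties~2 and~3 then follow from the structural facts: $u_i(\{1\}) = (2n-1)\delta_i$; $u_i(\{n, n+1\}) = u_i(\{1, 2n+1\}) = (3n-1)\delta_i$; $u_i(\{2n-1, 2n, 2n+1\}) = 3n\delta_i = \alpha_i$ (all at most $\alpha_i$); and $u_i(\{j, 2n+1-j, 2n+i-1\}) = (4n-1)\delta_i$, which is at least $1 = (3n+i-2)\delta_i$ iff $i \le n+1$, holding by hypothesis. The main obstacle is writing the partition pairings cleanly enough to cover both parities of $n-i$ (for the first pool) and $i$ (for the second pool) in the middle-weight self-matching; everything else is arithmetic.
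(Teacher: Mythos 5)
Your proof is correct and follows essentially the same route as the paper's: verify ordering at the only non-obvious junction ($u_i(2n)=u_i(2n+1)=n\delta_i$), exhibit a normalization partition consisting of $n-i+1$ pairs from the first $2(n-i+1)$ goods with weight-sum $n-i+2$ together with $i-1$ bundles pairing two of the remaining small goods (weight-sum $2n-i+2$) with one large good, and finish properties~2 and~3 by the same direct weight computations. Your weight-matching description of the partition is equivalent to the paper's explicit index formulas (and handles the parity bookkeeping the paper's closed form absorbs silently), so there is nothing substantively different.
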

\begin{proof}
$u_i(2n) = n\delta_i = u_i(2n+1) > 0$, so $u_i$ is ordered.

Let $k \defeq n-i+1$. Let $M \defeq (M_1, \ldots, M_n)$ where
\[ M_j \defeq \begin{cases}
\{j, 2k+1-j\} & \textrm{ if } 1 \le j \le k
\\ \{k+j, 2n+3k+1-j, 2n+j-k\} & \textrm{ if } k+1 \le j \le n
\end{cases}. \]
Then we can verify that $u_i(M_j) = 1$ for all $j \in [n]$.

$u_i(\{1, 2n+1\}) = u_i(\{n, n+1\}) = (3n-1)\delta_i \le \alpha_i$.
$u_i(\{2n-1, 2n, 2n+1\}) = 3n\delta_i = \alpha_i$.

$u_i(\{j, 2n+1-j\}) = (3n-1)\delta_i \le \alpha_i$.
$u_i(\{j, 2n+1-j, 2n+i-1\}) = (4n-1)\delta_i \ge 1$.
\end{proof}

For any $\eps \in (0, 1)$ such that $1/\eps \in \mathbb{Z}_{>0}$,
define valuation function $w_{\eps}: [n/\eps] \to \mathbb{R}_{\ge 0}$ as $w_{\eps}(j) = \eps$ for all $j$.
Note that $w_{\eps}$ is ordered and normalized.

\begin{theorem}
\label{thm:hard1}
Let $T \defeq (\tau_1, \ldots, \tau_n)$, where $\tau_n > 0$.
If $\RBF(\Ical, T)$ outputs a $T$-MMS allocation for every fair division instance $\Ical$
with $n$ agents, then $\tau_i \le \alpha_i$ for all $i \in \{3, \ldots, n\}$,
where $\alpha_i \defeq 3n/(3n+i-2)$.
\end{theorem}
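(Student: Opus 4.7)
The plan is to prove the contrapositive: fix $i \in \{3, \ldots, n\}$ with $\tau_i > \alpha_i$, and exhibit an instance $\Ical$ on which $\RBF(\Ical, T)$ fails to be $T$-MMS. Choose $\eps \in (0, \tau_n/3)$ with $1/\eps \in \mathbb{Z}_{>0}$; such $\eps$ exists because $\tau_n > 0$. Let $\Ical$ have $n$ agents and $m \defeq n/\eps$ goods; assign valuation $u_i$ (extended by $0$ on $\{2n+i, \ldots, m\}$) to agents with priority ranks $1, \ldots, i$, and valuation $w_\eps$ to agents with ranks $i+1, \ldots, n$. By \cref{thm:hard1-props}(\ref{item:hard1-props:ord-norm}) and the definition of $w_\eps$, $\Ical$ is ordered and normalized, so every agent's MMS equals $1$.

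First I would verify that phase 1 performs no reduction. By \cref{thm:hard1-props}(\ref{item:hard1-props:irred}) every reduction bundle $S_k$ is worth at most $\alpha_i$ to a $u_i$-agent, and since $\tau_j \ge \tau_i > \alpha_i$ for $j \in [i]$, no $u_i$-agent likes any $S_k$. Every $S_k$ contains at most three goods, hence is worth at most $3\eps < \tau_n \le \tau_j$ to any $w_\eps$-agent $j > i$. Thus the phase-$1$ loop halts immediately and bag filling proceeds on the full instance.

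Next I would trace \cref{algo:bagFill}. The initial bags $B_k = \{k, 2n+1-k\}$ have value $(3n-1)\delta_i$ to every $u_i$-agent and $2\eps$ to every $w_\eps$-agent, both strictly below the respective thresholds, so no one is satisfied. When good $2n+1$ is appended to some bag, its value for any $u_i$-agent becomes $(3n-1)\delta_i + n\delta_i = (4n-1)\delta_i \ge 1 \ge \tau_j$ for $j \le i$, while it rises only to $3\eps < \tau_n$ for $w_\eps$-agents; by the smallest-index tie-break, agent $1$ claims it. Iterating, for each $t \in [i-1]$ the algorithm appends good $2n+t$ to some remaining bag, which becomes satisfying for every still-present $u_i$-agent (and for no $w_\eps$-agent), and agent $t$ claims it. After these $i-1$ claims $U_A = \{i, \ldots, n\}$ and every remaining bag is worth exactly $(3n-1)\delta_i$ to agent $i$. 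Since $u_i$ vanishes on $\{2n+i, \ldots, m\}$, no further additions can raise agent $i$'s view of any bag, so she is never satisfied. The algorithm either satisfies all $w_\eps$-agents and leaves agent $i$ the last bag, or runs out of goods and enters the error branch; in either case agent $i$ receives a bundle of value $(3n-1)\delta_i < \alpha_i < \tau_i$, contradicting $T$-MMS.

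The main obstacle is the non-determinism of the ``arbitrary bag'' choice in \cref{algo:bagFill}; the argument sidesteps it by observing that after each freshly added good, the only satisfied agents are $u_i$-agents whose indices lie in $\{t, t+1, \ldots, i\}$, so the smallest-index tie-break forces each freshly-satisfied bag into the hands of some agent with index strictly less than $i$, preventing agent $i$ from ever absorbing a good that would push her bag above $\tau_i$.
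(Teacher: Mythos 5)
Your proposal is correct and builds the same hard instance as the paper: the same choice of $\eps$, the same assignment of $u_i$ to the first $i$ agents and $w_\eps$ to the rest, and the same verification (via \cref{thm:hard1-props}) that no reductions occur and that no agent likes any initial bag. The only place you diverge is the concluding step. You trace the execution of \cref{algo:bagFill} round by round, arguing that each good $2n+t$ immediately satisfies all remaining $u_i$-agents and is claimed by agent $t$ under the smallest-index tie-break, so that agent $i$ in particular is starved. The paper instead closes with a pigeonhole count: every agent in $[i]$ has threshold exceeding $(3n-1)\delta_i$, and the only goods with positive $u_i$-value outside the initial bags are the $i-1$ goods in $[2n+i-1]\setminus[2n]$, so the $i$ disjoint bundles given to agents $1,\ldots,i$ cannot all contain one. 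The paper's count is agnostic to the ``arbitrary bag'' choice and to tie-breaking, which is exactly the non-determinism you spend your last paragraph handling; your trace is valid but does more work to reach the same contradiction, and it additionally pins down that agent $i$ specifically is the one who fails, which is not needed.
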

\begin{proof}
Assume \wLoG{} that each agent $i$ has priority rank $i$.
Suppose $\RBF(\Ical, T)$ outputs a $T$-MMS allocation
for every fair division instance $\Ical$ with $n$ agents,
but $\tau_i > \alpha_i$ for some $i \in \{3, \ldots, n\}$.

Pick $\eps \in (0, \tau_n/3)$ such that $1/\eps \in \mathbb{Z}_{>0}$.
Consider a fair division instance $([n], [n/\eps], v)$, where $v_1 = \ldots = v_i = u_i$
and $v_{i+1} = \ldots = v_n = w_{\eps}$.
By \cref{thm:hard1-props}.\ref{item:hard1-props:irred}, no reduction events take place during $\RBF$.
Hence, $\RBF$ simply delegates everything to $\bagFill$.

By \cref{thm:hard1-props}.\ref{item:hard1-props:bag-fill}, all bags initially have value
at most $\alpha_i$ for the first $i$ agents, and value $2\eps$ for the remaining agents.
Since $\alpha_i < \tau_i$ and $2\eps < \tau_n$, no agent wants any bag initially.

For the first $i$ agents, only goods in $[2n+i-1]$ have positive value.
Since $\RBF$ outputs a $T$-MMS allocation, each agent in $[i]$
receives a bag with at least one good from $[2n+i-1] \setminus [2n]$.
But this is impossible since there are only $i-1$ goods in $[2n+i-1] \setminus [2n]$.
This is a contradiction. Hence, $\tau_i \le \alpha_i$.
\end{proof}

Our approach in \cref{sec:algo1:bobw} to get best-of-both-worlds fairness is via \cref{thm:cyclic-perm},
i.e., we first find a list $T \defeq (\tau_1, \ldots, \tau_n)$ of thresholds such that
$\RBF(\Ical, T)$ outputs a $T$-MMS allocation for every fair division instance $\Ical$ with $n$ agents,
and then we run $\RBF$ with random priority ranks.
Then we get ex-post $\tau_n$-MMS and ex-ante $\frac{1}{n}(\sum_{i=1}^n \tau_i)$-MMS.
We now show the limitations of this approach.

By \cref{thm:hard1}, if we want any ex-post MMS guarantee for the output of \cref{algo:rbf},
the best ex-ante MMS guarantee we can get is
\[ \frac{1}{n}\left(2 + \sum_{i=3}^n \frac{3n}{3n+i-2}\right). \]
We now show an upper-bound on this quantity.

\begin{restatable}{lemma}{rthmHardIExa}
\label{thm:hard1-avg}
For all $n \ge 2$,
\[ \frac{1}{n}\left(2 + \sum_{i=3}^n \frac{3n}{3n+i-2}\right)
    \le 3\ln\left(\frac{4}{3}\right) + \frac{1}{2n} \approx 0.86305 + \frac{1}{2n}. \]
\end{restatable}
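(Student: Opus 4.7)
The plan is to reduce the claim to a harmonic-vs-log inequality, bound the harmonic sum by an integral via the midpoint rule, and close the resulting $\Theta(1/n)$ gap with a Pad\'e-style lower bound on $\ln(1+t)$. Define $S_n \defeq \sum_{k=3n+1}^{4n-2} \tfrac{1}{k}$ (the empty sum for $n=2$). Reindexing via $k = 3n+i-2$ gives $\sum_{i=3}^n \tfrac{3n}{3n+i-2} = 3n\, S_n$, so the left-hand side of the claim equals $\tfrac{2}{n} + 3 S_n$, and it suffices to show $S_n \le \ln(4/3) - \tfrac{1}{2n}$.

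To bound $S_n$ I would use an integral comparison. Since $1/x$ is convex on $(0,\infty)$, the midpoint rule underestimates: $\int_{k-1/2}^{k+1/2} \tfrac{dx}{x} \ge \tfrac{1}{k}$ for every integer $k \ge 1$ (this is Jensen applied to the convex function $1/x$ on an interval of length $1$ centered at $k$). Summing over $k = 3n+1, \ldots, 4n-2$ telescopes to $S_n \le \int_{3n+1/2}^{4n-3/2} \tfrac{dx}{x} = \ln\tfrac{8n-3}{6n+1}$. It then remains to prove $\ln\tfrac{8n-3}{6n+1} \le \ln\tfrac{4}{3} - \tfrac{1}{2n}$, or equivalently $\ln\tfrac{24n+4}{24n-9} \ge \tfrac{1}{2n}$.

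For the latter I would invoke the elementary inequality $\ln(1+t) \ge \tfrac{2t}{2+t}$ for $t \ge 0$: both sides vanish at $t=0$, and comparing derivatives reduces to $(2+t)^2 \ge 4(1+t)$, i.e., $t^2 \ge 0$. Setting $t \defeq \tfrac{13}{24n-9}$ (so that $1+t = \tfrac{24n+4}{24n-9}$) gives $\ln(1+t) \ge \tfrac{26}{48n-5}$, and the final inequality $\tfrac{26}{48n-5} \ge \tfrac{1}{2n}$ rearranges to the trivially true $52n \ge 48n-5$. This uniformly handles all $n \ge 2$, including the degenerate $n=2$ (where $S_n=0$ and both bounds read $0 \le \ln(4/3)-1/4 \approx 0.038$).

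The main obstacle is that once the $\ln(4/3)$ terms cancel, the inequality is tight to order $\Theta(1/n)$, so every approximation in the chain has to be second-order accurate. The naive endpoint-aligned bound $S_n \le \int_{3n}^{4n-2} dx/x$ loses a $\Theta(1/n)$ term and would fail, as would $\ln(1+t) \ge t - t^2/2$; the midpoint-rule bound paired with the Pad\'e-type estimate $\tfrac{2t}{2+t}$ is the minimal pair of loosenings that still reduces the final step to a single-line linear inequality in $n$.
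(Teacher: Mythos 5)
Your proof is correct: the reindexing to $S_n=\sum_{k=3n+1}^{4n-2}\tfrac1k$, the Hermite--Hadamard/midpoint bound $S_n\le\ln\tfrac{8n-3}{6n+1}$, and the Pad\'e estimate $\ln(1+t)\ge\tfrac{2t}{2+t}$ with $t=\tfrac{13}{24n-9}$ all check out, and the strategy (sum-to-integral comparison followed by an elementary logarithm inequality) is the same as the paper's. However, your closing claim that the ``naive endpoint-aligned bound $S_n\le\int_{3n}^{4n-2}dx/x$ loses a $\Theta(1/n)$ term and would fail'' is false, and the paper's proof is precisely that naive bound: it uses $\sum_{i=a}^b f(i)\le f(a)+\int_a^b f$ to get $\tfrac2n+3\ln\tfrac{4n-2}{3n}=\tfrac2n+3\ln\tfrac43+3\ln\bigl(1-\tfrac1{2n}\bigr)$ and then finishes with the one-line inequality $\ln(1-x)\le -x$, since $3\ln\bigl(1-\tfrac1{2n}\bigr)\le-\tfrac3{2n}$ supplies exactly the slack needed. (Equivalently, in your notation the endpoint bound reduces the claim to $\ln\tfrac{2n}{2n-1}\ge\tfrac1{2n}$, which follows immediately from $\ln(1+u)\ge\tfrac{u}{1+u}$.) So your midpoint rule and the sharper $\tfrac{2t}{2+t}$ bound are an over-engineering of a step that admits a simpler argument; the proof is nonetheless valid as written.
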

\begin{proof}
(Proof deferred to \cref{sec:prio-extra:calculations}).
\end{proof}

\begin{theorem}
\label{thm:hard1-exa}
If we fix the number of agents to $n$, the ex-ante MMS guarantee obtainable from $\RBF$
by picking priority ranks randomly (as in \cref{thm:cyclic-perm})
is not better than $\beta$-MMS, where
$\beta \defeq 3\ln(\frac{4}{3}) + \frac{1}{2n} \approx 0.86305 + \frac{1}{2n}$,
and the ex-post MMS guarantee is not better than $\gamma$-MMS,
where $\gamma \defeq 3n/(4n-2) = \frac{3}{4} + \frac{3}{8n-4}$.
\end{theorem}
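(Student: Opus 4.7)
The plan is to assemble the theorem essentially by bookkeeping on the two earlier results: \cref{thm:hard1} (which caps each individual $\tau_i$) and \cref{thm:cyclic-perm} (which tells us how the ex-post/ex-ante factors arise from the list $T$ after randomizing priority ranks), with \cref{thm:hard1-avg} providing the clean closed-form upper bound on the average.

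First I would handle the ex-post part, which is immediate: if $\RBF(\Ical, T)$ outputs a $T$-MMS allocation for every $n$-agent instance, then \cref{thm:hard1} with $i = n$ forces $\tau_n \le 3n/(3n + n - 2) = 3n/(4n-2)$. Under the randomization scheme of \cref{thm:cyclic-perm}, every agent's realized threshold is at least $\tau_n$, so the ex-post guarantee equals $\tau_n$, and therefore cannot exceed $3n/(4n-2) = \tfrac{3}{4} + \tfrac{3}{8n-4} = \gamma$.

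Next, for the ex-ante part, \cref{thm:cyclic-perm} shows that the ex-ante MMS factor equals $\tfrac{1}{n}\sum_{i=1}^n \tau_i$, so it suffices to upper-bound this average subject to the constraints from \cref{thm:hard1}. I would use the trivial bounds $\tau_1, \tau_2 \le 1$ (which are tight at the top of the list since $T$ is non-increasing and no constraint forbids equality) and, for $i \in \{3, \ldots, n\}$, the bound $\tau_i \le 3n/(3n+i-2)$ from \cref{thm:hard1}. Summing these gives
\[ \frac{1}{n}\sum_{i=1}^n \tau_i \;\le\; \frac{1}{n}\Bigl(2 + \sum_{i=3}^n \frac{3n}{3n+i-2}\Bigr), \]
which by \cref{thm:hard1-avg} is at most $3\ln(4/3) + 1/(2n) = \beta$.

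There is essentially no obstacle beyond invoking the preceding lemmas in the right order; the only thing worth a sentence of justification in the write-up is why we may use $\tau_1 \le 1$ and $\tau_2 \le 1$ without loss (they are forced because each $\tau_i$ is a fraction of an MMS value, so $\tau_i > 1$ would be unsatisfiable on any normalized instance where some agent values the grand bundle at exactly $n$). Once that is noted, the theorem follows by combining \cref{thm:hard1}, \cref{thm:cyclic-perm}, and \cref{thm:hard1-avg}.
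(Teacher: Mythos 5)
Your proposal is correct and takes essentially the same route as the paper: the paper's proof is a one-line citation of \cref{thm:hard1,thm:cyclic-perm,thm:hard1-avg}, with the surrounding text making exactly the argument you spell out (the term $2$ accounting for $\tau_1,\tau_2 \le 1$, the remaining $\tau_i$ capped by \cref{thm:hard1}, and the average bounded by \cref{thm:hard1-avg}). Your extra sentence justifying $\tau_1,\tau_2\le 1$ is a reasonable addition the paper leaves implicit.
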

\begin{proof}
Follows from \cref{thm:hard1,thm:cyclic-perm,thm:hard1-avg}.
\end{proof}

\bibliographystyle{alphaurl}
\bibliography{bibliography}

\appendix

\section{Missing Proofs of Section~\ref{sec:prelim}}
\label{sec:prelims-extra}

\ordNorm*
\begin{proof}
    Let $\ins$ be an arbitrary instance. We create a $d$-normalized ordered instance $\mathcal{I}'' = (N, M, V'')$ such that from any $1$-out-of-$d$ MMS allocation for $\mathcal{I}''$, one can obtain a $1$-out-of-$d$ MMS allocation for the original instance $\mathcal{I}$.

    First of all, we can ignore all agents $i$ with $\MMS^d_i=0$ since no good needs to be allocated to them. Recall that for all $i \in N$, $P^i = (P^i_1, \ldots, P^i_d)$ is a $d$-MMS partition of agent $i$. For all $i \in N$ and $g \in M$, we define $v'_{i,g} = v_i(g)/v_i(P^i_j)$ where $j$ is such that $g \in P^i_j$. Now for all $i \in N$, let $v'_i: 2^M \rightarrow \mathbb{R}_{\geq 0}$ be defined as an additive function such that $v'_i(S)=\sum_{g \in S} v'_{i,g}$.  Note that $v'_{i,g} \leq v_i(g)/\MMS^d_i(M)$ for all $g \in M$ and thus,
    \begin{align}
        v_i(S) \geq v'_i(S) \cdot \MMS^d_i(M). \label{ineq-apx}
    \end{align}
    Since $v'_i(P^i_j)=1$ for all $i \in N$ and  $j \in [d]$, $\mathcal{I}' = (N, M, V')$ is a $d$-normalized instance. If a $1$-out-of-$d$ MMS allocation exists for $\mathcal{I}$, let $X$ be one such allocation. By Inequality \eqref{ineq-apx}, $v_i(X_i) \geq v'_i(X_i) \cdot \MMS^d_i(M) \geq \MMS^d_i(M)$. Thus, every allocation that is $1$-out-of-$d$ MMS for $\mathcal{I}'$ is $1$-out-of-$d$ MMS for $\mathcal{I}$ as well.
    For all agents $i$ and $g \in [m]$, let $v''_{i,g}$ be the $g$-th number in the multiset of $\{v_i(1), \ldots, v_i(m)\}$. Let $v''_i: 2^M \rightarrow \mathbb{R}_{\geq 0}$ be defined as an additive function such that $v''_i(S)=\sum_{g \in S} v''_{i,g}$. Let $\mathcal{I''} = (N,M,V'')$. Note that $\mathcal{I}''$ is ordered and $d$-normalized.
    Barman and Krishnamurthy \cite{barman2020approximation} proved that for any allocation $X$ in $\mathcal{I''}$, there exists and allocation $Y$ in $\mathcal{I'}$ such that $v'_i(Y_i) \geq v''_i(X_i)$. Therefore, from any $1$-out-of-$d$ MMS allocation in $\mathcal{I}''$, one can obtain a $1$-out-of-$d$ MMS allocation in $\mathcal{I'}$ and as already shown before, it gives a $1$-out-of-$d$ MMS allocation for $\mathcal{I}$.
\end{proof}


\section{Limitations of Oblivious Analysis}
\label{sec:hard2}

We start by showing an attempt to improve the upper bounds of \cref{sec:hard1}.
Although our attempt does not succeed, it reveals a shortcoming of existing
techniques (including ours) to analyze bag-filling algorithms.
We formally characterize this using a concept called \emph{obliviousness},
and study the limitations of oblivious techniques.

\subsection{Attempt to get a Hard Example}
\label{sec:hard2:ex}

Consider a fair division instance $\Ical \defeq ([n], [m], v)$ with $n \defeq 6$ agents,
where each agent $i$ has priority rank $i$. We will analyze $\RBF$ from the perspective of agent 4.
\Cref{sec:hard1} says that $\tau_4$ should be at most $0.9$ for $\RBF$ to always succeed.
We will try to improve this to $\tau_4 \le 5/6 = 0.8\overline{3}$.
According to agent 4, say the first 3 goods have value $5/6$, the next 9 goods of value $1/3$, and
the remaining $3t$ goods have value $\eps \defeq 1/6t$, for some $t \in \mathbb{Z}_{\ge 3}$.

$\Ical$ is ordered and normalized according to agent 4:
3 of her MMS bundles have 3 goods of value $1/3$ each, and each of the remaining bundles
has one good of value $5/6$ and $t$ goods of value $1/6t$.

Suppose $\tau_4 > 5/6 + 2\eps$. Then agent 4 does not want any bundle during reduction, since
$v_1(1) \le v_1(\{1, 2n+1\}) = 5/6 + \eps < \tau_4$, $v_1(\{n, n+1\}) = 2/3 < \tau_4$,
and $v_1(\{2n-1, 2n+1, 2n+1\}) = 2/3 + \eps < \tau_4$.
Suppose the other agents also do not want any of the above bundles.
Hence, no reduction events happen during $\RBF$, and $\bagFill$ starts.

According to agent 4, the first 3 bags have value $5/6 + 1/3 = 7/6$ each.
Suppose the first 3 agents also want these bags.
Since they have a smaller priority rank, these bags go to the first 3 agents.
Agent 4 does not like any of the remaining bags, since they have value $2/3$.
Suppose agents 5 and 6 also do not like these bags. Hence, we start adding goods to these bags.
Since all goods in $[m] \setminus [2n]$ have value $\eps$ to agent 4, agent 4 would want
a bag only if at least $t+3$ goods are added to it. Suppose agents 5 and 6 like a bag
iff it gets at least $t+2$ goods from $[m] \setminus [2n]$.
Since there are only $3t$ goods of value $\eps$, agent 4 does not get a bag of value $\ge \tau_4$.

At first glance, this seems to prove that $\RBF$ cannot give agent 4 more than $5/6$-MMS.
But there is a catch: we have not mentioned what the valuations and thresholds of the other agents are.
Are there valuations and thresholds for which $\RBF$ behaves as above?

When all agents have the same valuation function as agent 4, and agents 5 and 6 have thresholds
in the range $(5/6+\eps, 5/6+2\eps]$, then $\RBF$ indeed behaves as described above.
However, this threshold range is unreasonable since \cref{sec:hard1} says that
$\tau_6 \le 9/11 < 5/6$ for $\RBF$ to succeed.
Are there (ordered and normalized) valuations and reasonable thresholds for the agents such that
$\RBF$ behaves as described above? We could not resolve this question.

\subsection{Obliviousness}
\label{sec:hard2:obl}

Let us think of fair division algorithms as interactive protocols,
where the algorithm repeatedly makes value queries to agents, i.e.,
it finds out $v_i(S)$ for some agent $i$ and some set $S$ of goods,
instead of just querying the value of each good for each agent upfront.
$\RBF$ can be interpreted in this way: during the reduction phase,
it queries $v_i(S)$ for $S \in \{\{1\}, \{n, n+1\}, \{2n-1, 2n, 2n+1\}, \{1, 2n+1\}\}$,
and during the bag-filling phase, it repeatedly queries the value of some bag.

Let $\Acal$ be an algorithm for fair division, let $i^*$ be an agent,
and let $T \defeq (\tau_1, \ldots, \tau_n)$ be a list of thresholds.
Generally, when we analyze $\Acal$, we ask the following question:
\textsl{Does $\Acal$'s output give each agent $i$ at least $\tau_i$ times her MMS,
given that all agents answer value queries truthfully?}

An \emph{oblivious analysis} of $\Acal$, on the other hand, asks the following question:
\textsl{For each $i \in [n]$, does $\Acal$'s output give agent $i$ at least $\tau_i$ times her MMS
if agent $i$ answers value queries truthfully and
all other agents are free to answer value queries non-truthfully?}
Note that in an oblivious analysis, the other agents' responses are not required to be
consistent with any (additive, ordered, normalized) valuation function.
Also, we do not assume that the other agents are rational, i.e., they may answer value
queries arbitrarily, even if it hurts them.

In the oblivious setting, we wish to show that an agent $i$ can secure a certain
fraction of her MMS for any possible responses of the other agents.
This is a worst-case analysis, so imagine that the other $n-1$ agents are
controlled by an adversary (who knows $i$'s valuation function)
whose sole motive is to reduce the value of $i$'s bundle.
Furthermore, the oblivious setting is more demanding than a non-oblivious one, i.e.,
if $i$ is guaranteed $\alpha$ times her MMS in the oblivious setting,
then she also gets $\alpha$ times her MMS in the non-oblivious setting.

The analysis of bag-filling algorithms in the literature
\cite{akrami2023simplification,garg2021improved,Akrami2023BreakingT,garg2019approximating},
and our analysis in \cref{sec:algo1}, work in the oblivious setting.
In fact, it is unclear how to \emph{not} do an oblivious analysis.
In this sense, obliviousness is a hard-to-bypass barrier towards the design
of approximate MMS algorithms.

Furthermore, we can get tighter upper bounds on thresholds in the oblivious setting.
E.g., \cref{sec:hard2:ex} shows that for 6 agents,
we need $\tau_4 \le 5/6$ for $\RBF$ to succeed in the oblivious setting.
We now generalize this example to an arbitrary number of agents.

\begin{theorem}
\label{thm:hard2}
Let $T \defeq (\tau_1, \ldots, \tau_n)$ be a list of thresholds, where $n \ge 2$.
If $\RBF(\cdot, T)$ always outputs a $T$-MMS allocation in the oblivious setting,
then for each $i \in [n]$, we have
\[ \tau_i \le \max\left(\frac{5}{6}, 1-\frac{i-1}{3n}\right). \]
\end{theorem}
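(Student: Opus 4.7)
My plan is to exhibit, for each $i \in [n]$ with $\tau_i > \max(5/6, 1 - (i-1)/(3n))$, an oblivious instance on which $\RBF$ fails to guarantee agent $i$ a bundle of value $\tau_i$. The construction is a direct generalization of \cref{sec:hard2:ex} (which corresponds to $n=6, i=4$), parameterized by $i$ and $H \defeq \max(5/6, 1 - (i-1)/(3n))$.

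Agent $i$'s valuation $u_i$ will have an MMS partition with $(i-1)$ ``high'' bundles (each: one good of value $H$ plus $L \defeq (1-H)/\epsilon$ ``low'' goods of value $\epsilon$) and $(n - i + 1)$ ``mid'' bundles (each summing to $1$ via goods of value close to $1/3$). The parameters are chosen so that the $(i-1)$ high goods together with all $(2n - i + 1)$ mid goods occupy exactly the top $2n$ positions, while low goods sit strictly beyond position $2n$. Consequently, the initial bags $B_j = \{j, 2n+1-j\}$ have value $H + 1/3 \ge \tau_i$ for $j \le i-1$ (``top bags'', one high and one mid) and value $\approx 2/3 < \tau_i$ for $j > i-1$ (``mid bags'', two mids). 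When $k \defeq (2n-i+1)/(n-i+1)$ is not an integer, I mix mid bundles of sizes $\lfloor k \rfloor$ and $\lceil k \rceil$ (with per-good values $1/\lfloor k \rfloor$ and $1/\lceil k \rceil$) while checking that the resulting mid bag values remain bounded by the required quantities.

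The adversarial strategy proceeds in three stages. First, during the reduction loop, all adversarial agents claim to dislike every candidate $S_k$, and I verify that agent $i$ herself also dislikes each $S_k$ (using the explicit values from the construction together with the choice of $H$), so no reduction fires and execution passes to $\bagFill$. Second, once bag filling begins, adversarial agents $1, \ldots, i-1$ claim to like the top bags $B_1, \ldots, B_{i-1}$ and grab them by higher priority (agent $i$ would also like them but is preempted). Third, adversarial agents $i+1, \ldots, n$ remain silent while the algorithm distributes low goods, with the adversary controlling the ``arbitrary bag'' tie-breaking in \cref{algo:bagFill} to steer goods away from agent $i$'s bag, and each adversary grabs her target mid bag the moment it accumulates $c \defeq \lfloor (\tau_i - 2/3)/\epsilon \rfloor$ low goods---just below agent $i$'s threshold.

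A resource-accounting argument then yields the bound. The total supply of low goods is $(i-1)L$, of which adversarial drains consume $(n-i)c$. Taking $\epsilon \to 0$, agent $i$'s final bag has value $2/3 + (i-1)(1-H) - (n-i)(\tau_i - 2/3)$, which stays below $\tau_i$ precisely when $\tau_i > 2/3 + (i-1)(1-H)/(n-i+1)$; plugging in $H = 1 - (i-1)/(3n)$ recovers the Case-A bound $\tau_i > 1 - (i-1)/(3n)$, and $H = 5/6$ recovers the Case-B bound $\tau_i > 5/6$. The main obstacle I anticipate is the non-integer $k$ regime: uniform mid bundles would overflow extra mid goods into bag filling (where a single added mid good pushes a mid bag above $\tau_i$ and lets agent $i$ grab it), while a naive mixed-size scheme can produce mid bags of value $1$ instead of $2/3$. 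Handling this cleanly requires careful choice of mixed mid-bundle sizes---and for small $i$ in Case A (most delicately $i = 2$), possibly adding a tail of low-value padding inside some mid bundles---so that every mid bag's value is at most $5/6$ and thus below $\tau_i$.
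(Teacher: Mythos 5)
Your high-level adversarial strategy (suppress reductions, let higher-priority adversaries grab the valuable initial bags, have lower-priority adversaries drain the low-value goods, then close with a resource count) is the same as the paper's, but the construction as parameterized has a genuine gap in Case B, i.e.\ when $i > n/2+1$ and the binding bound is $5/6$. You always use $i-1$ ``high'' bundles, each carrying $(1-H)/\eps$ low goods, so the total low-good mass is $(i-1)(1-H)$, while the $n-i+1$ remaining bags can each absorb only about $\tau_i - 2/3$ of that mass before agent $i$ would accept them. Your own success condition, $\tau_i > 2/3 + (i-1)(1-H)/(n-i+1)$, with $H = 5/6$ becomes $\tau_i > 2/3 + (i-1)/\bigl(6(n-i+1)\bigr)$, and for $i > n/2+1$ the right-hand side exceeds $5/6$ (for $i=n$ it exceeds $1$, making the condition vacuous). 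So the hypothesis $\tau_i > 5/6$ does \emph{not} imply your construction defeats agent $i$, and the Case-B half of the theorem is not established. The paper's proof avoids this by decoupling the number of high bundles from $i$: it uses $k_1+k_2 < i$ high bundles subject to $2k_1+k_2 \le n$, of which only $k_1 \le \floor{n/2}$ carry low goods (the other $k_2$ consist of a good of value $\alpha$ plus one good of value $1-\alpha$ and no low goods), which keeps the total low-good mass exactly equal to the absorption capacity of the $n-k_1-k_2$ leftover bags. (Your Case-A direction is fine in the sense that $\tau_i > 1-(i-1)/(3n)$ does imply $\tau_i > 2/3 + (i-1)^2/\bigl(3n(n-i+1)\bigr)$ when $i \le n/2+1$, although the two quantities coincide only at $i = n/2+1$, so ``recovers the Case-A bound'' is not literally what happens.)

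The second gap is the one you flag but do not resolve: the mid-bundle structure when $(2n-i+1)/(n-i+1)$ is not an integer. With size-2 mid bundles of per-good value $1/2$, any initial bag containing two such goods has value $1 \ge \tau_i$ and agent $i$ takes it; worse, for $i \le (n+2)/3$ the pair $\{n,n+1\}$ already has value $1$ for agent $i$ and triggers a type-2 reduction that satisfies her, so the construction fails before bag filling even starts. The fix you gesture at---padding short mid bundles with low goods so that every non-low good has value exactly $1/3$---is precisely what the paper does ($n-2k_1-k_2$ bundles consisting of two goods of value $1/3$ plus $1/(3\eps)$ low goods), but this injects an additional $(n-2k_1-k_2)/3$ of low-good mass that your accounting formula omits entirely; one must then re-verify that the total low-good mass, $k_1(1-\alpha) + (n-2k_1-k_2)/3$, equals the capacity $(n-k_1-k_2)(\alpha-2/3)$, and it is exactly this balance that forces the constraint $2k_1+k_2 \le n$ and hence the $5/6$ cap. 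Until both of these points are repaired, the proof is incomplete.
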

\begin{proof}
Let $k_1, k_2 \in [n]$ such that $k_1 + k_2 < i$ and $2k_1 + k_2 \le n$.
Let $\alpha \defeq 1 - k_1/3(n-k_2)$. Let $t \in \mathbb{Z}_{\ge 3}$
and $\eps \defeq 1/3t(n-k_2)$. Then $\alpha \in [5/6, 1-\eps)$.
Pick an arbitrary agent $i$. Assume \wLoG{} that $i \ge 2$ (otherwise theorem is trivially true).
According to agent $i$, let there be $k_1+k_2$ goods of value $\alpha$,
$2n-k_1-2k_2$ goods of value $1/3$, and $(n-k_1-k_2)^2t$ goods of value $\eps$.
Let $\tau_i > \alpha+2\eps$.

The instance is normalized for agent $i$, since the MMS partition is as follows:
\begin{enumerate}
\item $k_1$ bundles: one good of value $\alpha$ and $k_1t$ goods of value $\eps$.
\item $k_2$ bundles: one good of value $\alpha$ and one good of value $1-\alpha$.
\item $k_1$ bundles: 3 goods of value $1/3$.
\item $n-2k_1-k_2$ bundles: 2 goods of value $1/3$ and $1/3\eps$ goods of value $\eps$.
\end{enumerate}

Suppose no reduction events happen during $\RBF$.
This is possible because $v_i(\{n, n+1\}) = 2/3 < \tau_i$,
$v_i(\{2n-1, 2n, 2n+1\}) \le 2/3 + \eps < \tau_i$, and $v_i(\{1, 2n+1\}) = \alpha+\eps < \tau_i$.

Suppose $\RBF$ gives away $k_1+k_2$ bags having a good of value $\alpha$
and a good of value $1/3$ or $1-\alpha$ to the first $k_1 + k_2$ agents.
All the remaining bags initially have 2 goods of value $1/3$.
Suppose each of them receives at most $(n-k_1-k_2)t + 2$ goods of value $\eps$.
Then each of those bags has value at most $\alpha+2\eps$.
Hence, agent $i$ will not get a bag of value $\ge \tau_i$.

Hence, if $\tau_i > \alpha$, then for some large enough $t$,
$\RBF$ cannot give a bag of value $\ge \tau_i$ to agent $i$.
If $i \le n/2+1$, then set $k_1 = i-1$ and $k_2 = 0$ to get $\tau_i \le 1 - (i-1)/3n$.
Otherwise, set $k_1 = \floor{n/2}$ and $k_2 = n - 2k_1$ to get $\tau_i \le 5/6$. Hence,
\[ \tau_i \le \max\left(\frac{5}{6}, 1-\frac{i-1}{3n}\right).
\qedhere \]
\end{proof}

Our approach in \cref{sec:algo1:bobw} to get best-of-both-worlds fairness is by
running $\RBF$ with random priority ranks (as in \cref{thm:cyclic-perm}).
If $\RBF$ is $T$-MMS, this gives us ex-post $\tau_n$-MMS
and ex-ante $\frac{1}{n}(\sum_{i=1}^n \tau_i)$-MMS.
In the oblivious setting, by \cref{thm:hard1,thm:hard2},
the best ex-ante MMS guarantee we can get is
\[ \frac{1}{n}\sum_{i=1}^n \min\left(\frac{3n}{3n+i-2},
    \max\left(\frac{5}{6}, 1 - \frac{i-1}{3n}\right)\right). \]
We now show an upper-bound on this quantity.

\begin{restatable}{lemma}{rthmHardIIExa}
\label{thm:hard2-avg}
For all $n \ge 1$,
\[ \frac{1}{n}\sum_{i=1}^n \min\left(\frac{3n}{3n+i-2},
    \max\left(\frac{5}{6}, 1 - \frac{i-1}{3n}\right)\right)
    \le \frac{13}{24} + 3\ln\left(\frac{10}{9}\right) + \frac{1}{3n}. \]
\end{restatable}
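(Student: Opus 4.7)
The plan is to rewrite the summand as a three-piece piecewise function, sum each piece, and combine. Define
\[ h(i) \defeq \min\!\left(\frac{3n}{3n+i-2},\, \max\!\left(\frac{5}{6},\, 1-\frac{i-1}{3n}\right)\right). \]
First I would verify the algebraic identity $\frac{3n}{3n+i-2} \ge 1 - \frac{i-1}{3n}$ for all integers $i \ge 1$: cross-multiplying and simplifying yields the equivalent inequality $3n + (i-1)(i-2) \ge 0$, which is immediate. Consequently the outer $\min$ never triggers when the inner $\max$ selects the linear branch. Setting $k_1 \defeq \lfloor n/2 + 1 \rfloor$ (the largest $i$ with $1-(i-1)/(3n) \ge 5/6$) and $k_2 \defeq \lfloor (3n+10)/5 \rfloor$ (the largest $i$ with $3n/(3n+i-2) \ge 5/6$), we obtain
\[ h(i) = \begin{cases} 1 - (i-1)/(3n) & \text{if } 1 \le i \le k_1, \\ 5/6 & \text{if } k_1 < i \le k_2, \\ 3n/(3n+i-2) & \text{if } k_2 < i \le n. \end{cases} \]

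Next I would bound the three partial sums separately. For the first, $S_1 \defeq \sum_{i=1}^{k_1} h(i) = k_1 - k_1(k_1-1)/(6n)$ in closed form. For the middle, $S_2 \defeq (k_2 - k_1)\cdot 5/6$. For the tail, since $x \mapsto 3n/(3n+x-2)$ is decreasing, I would apply the standard integral upper bound
\[ S_3 \defeq \sum_{i=k_2+1}^{n} \frac{3n}{3n+i-2} \;\le\; \int_{k_2}^{n}\frac{3n}{3n+x-2}\,dx \;=\; 3n\ln\frac{4n-2}{3n+k_2-2}. \]
Using $k_2 > (3n+10)/5 - 1$, so $3n+k_2-2 > (18n-5)/5$, together with the elementary estimate $\ln(1-x) \le -x$, I would extract $S_3 \le 3n\ln(10/9) - c$ for an explicit positive $c = \Theta(1)$.

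Finally I would combine: the main-order calculation gives $S_1/n \to 1/2 - 1/24 = 11/24$, $S_2/n \to (1/10)(5/6) = 2/24$, and $S_3/n \to 3\ln(10/9)$, which sum to $13/24 + 3\ln(10/9)$ as claimed. The remaining task is to show that the $O(1)$ terms collected from (i) replacing $k_1$ by $n/2+1$ and $k_2$ by $3n/5+2$, (ii) the quadratic correction $k_1(k_1-1)/(6n)$ in $S_1$, and (iii) the $-c$ save in $S_3$, together fit within the claimed slack of $1/(3n)$ after dividing by $n$.

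The main obstacle is the bookkeeping around the two floor functions, which depend on $n \bmod 2$ and $n \bmod 5$; both $k_1$ and $k_2$ must be tracked with matching upper and lower bounds ($k_1 \in [(n+1)/2,\, n/2+1]$ and $k_2 \in [(3n+6)/5,\, (3n+10)/5]$) to ensure the constant-order cancellations are tight enough. I expect the cleanest presentation is to carry these interval bounds symbolically throughout, rather than doing a case split on residues, so that the final estimate collapses to a single expression whose $O(1/n)$ term is verifiably at most $1/(3n)$.
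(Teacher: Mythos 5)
Your decomposition is sound and your leading-order arithmetic is right, but the step you defer --- ``show that the $O(1)$ terms \ldots{} fit within the claimed slack'' --- is exactly where the proof lives, and the specific plan you propose for it does not close. If you substitute the interval endpoints $k_1 \in [\tfrac{n+1}{2}, \tfrac{n}{2}+1]$ and $k_2 \in [\tfrac{3n+6}{5}, \tfrac{3n+10}{5}]$ \emph{independently} into $S_1$, $S_2$, $S_3$ (upper end of $k_1$ for $S_1$, lower end for $S_2$; upper end of $k_2$ for $S_2$, lower end for $S_3$), the constant term comes out to roughly $\tfrac{11}{12} + \tfrac{5}{4} - \tfrac{5}{6} = \tfrac{4}{3}$, far exceeding the available slack of $\tfrac{1}{3}$ (the true constant is about $\tfrac14$). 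The decoupling is the problem: writing $k_1 = \tfrac{n}{2}+1-\delta_1$ and $k_2 = \tfrac{3n}{5}+2-\delta_2$ and keeping them \emph{exact}, the $\delta_1$-terms cancel between $S_1$ and $S_2$, and the $\delta_2$-terms cancel between $S_2$ and the integral bound on $S_3$ (after $\ln(1-u) \le -u$), leaving $S_1+S_2+S_3 \le \tfrac{13n}{24} + 3n\ln\tfrac{10}{9} + \tfrac14 + \tfrac{1}{24n}$, which does give the claim. So your route is completable, but only with this correlated bookkeeping; you would also need to handle $n \le 2$ separately, since there $k_2 > n$ and the three-piece split degenerates.

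The paper avoids all of this by never splitting the sum at integer breakpoints. It defines the single non-increasing function $f(x) \defeq \min\bigl(\tfrac{3n}{3n+x-1}, \max\bigl(\tfrac56, 1-\tfrac{x}{3n}\bigr)\bigr)$ on $[0,n-1]$, applies the one-shot comparison $\sum_{i=0}^{n-1} f(i) \le f(0) + \int_0^{n-1} f(x)\,dx$ (\cref{thm:int-bound}), and only then splits the \emph{integral} at the real numbers $\tfrac{n}{2}$ and $\tfrac{3n}{5}+1$. The floors never appear, there is no case analysis on $n \bmod 2$ or $n \bmod 5$, and the error budget is a single term $f(0) = 1$ plus one application of $\ln(1-x) \le -x$, landing exactly on $\tfrac{1}{3n}$. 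If you keep your approach, I'd recommend at least borrowing this trick for the presentation; as written, your sketch both omits the hardest verification and points toward a version of it that fails.
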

\begin{proof}
(Proof deferred to \cref{sec:prio-extra:calculations}).
\end{proof}

\begin{theorem}
\label{thm:hard2-exa}
If we fix the number of agents to $n$, the ex-ante MMS guarantee obtainable from $\RBF$
in the oblivious setting by picking priority ranks randomly (as in \cref{thm:cyclic-perm})
is not better than $\beta$-MMS, where
$\beta \defeq \frac{13}{24} + 3\ln\left(\frac{10}{9}\right) + \frac{1}{3n}
\approx 0.8578 + \frac{1}{3n}$,
and the ex-post MMS guarantee is not better than $\gamma$-MMS,
where $\gamma \defeq 3n/(4n-2) = \frac{3}{4} + \frac{3}{8n-4}$.
\end{theorem}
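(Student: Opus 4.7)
The plan is to assemble the theorem from three ingredients that are already in place: the randomization reduction of Theorem~\ref{thm:cyclic-perm}, the per-coordinate upper bounds from Theorems~\ref{thm:hard1} and~\ref{thm:hard2}, and the arithmetic estimate of Lemma~\ref{thm:hard2-avg}. No new ideas are needed; the statement is essentially a bookkeeping corollary.

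First I would fix any threshold list $T = (\tau_1, \ldots, \tau_n)$ for which $\RBF(\cdot, T)$ always outputs a $T$-MMS allocation in the oblivious setting. By Theorem~\ref{thm:cyclic-perm}, running $\RBF(\Ical, T)$ with cyclically-randomized priority ranks yields an allocation that is ex-post $\tau_n$-MMS and ex-ante $\bigl(\frac{1}{n}\sum_{i=1}^n \tau_i\bigr)$-MMS. Hence the best achievable ex-post and ex-ante guarantees via this approach are controlled by $\tau_n$ and by $\frac{1}{n}\sum_{i=1}^n \tau_i$ respectively, maximized over admissible $T$.

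For the ex-post claim, I would specialize Theorem~\ref{thm:hard1} to $i = n$, which directly gives $\tau_n \le 3n/(3n + n - 2) = 3n/(4n-2) = \gamma$. For the ex-ante claim, I would observe that every admissible $T$ must satisfy the bound of Theorem~\ref{thm:hard1} (valid because $\RBF$ must succeed on \emph{every} truthful instance, in particular the ones built there) as well as the oblivious bound of Theorem~\ref{thm:hard2}. Taking the pointwise minimum of the two bounds yields
\[ \tau_i \le \min\!\left(\frac{3n}{3n+i-2},\ \max\!\left(\frac{5}{6},\ 1 - \frac{i-1}{3n}\right)\right) \]
for all $i \in [n]$, where for $i \in \{1,2\}$ the first term in the min is $\ge 1$ and so is vacuous (matching the trivial bound $\tau_i \le 1$). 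Averaging this inequality over $i$ and invoking Lemma~\ref{thm:hard2-avg} bounds $\frac{1}{n}\sum_{i=1}^n \tau_i$ by $\beta = \frac{13}{24} + 3\ln(10/9) + \frac{1}{3n}$, as required.

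There is essentially no obstacle here: the only thing to verify carefully is that the two per-coordinate hardness results apply simultaneously (they are derived from different constructions, but both are worst-case lower bounds on what $\RBF$ can tolerate, so the admissible $T$ satisfies the minimum of both), and that the $i=1,2$ corner is handled by the trivial bound $\tau_i \le 1$ without spoiling the sum. Everything else reduces to plugging in, so the proof will be short and mostly symbolic.
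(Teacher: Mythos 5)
Your proposal is correct and follows essentially the same route as the paper: combine Theorem~\ref{thm:cyclic-perm} with the pointwise minimum of the bounds from Theorems~\ref{thm:hard1} and~\ref{thm:hard2}, then average via Lemma~\ref{thm:hard2-avg} for the ex-ante claim and specialize Theorem~\ref{thm:hard1} to $i=n$ for the ex-post claim. Your explicit handling of the $i \in \{1,2\}$ corner (where the Theorem~\ref{thm:hard1} bound is vacuous) is a small but welcome addition the paper leaves implicit.
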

\begin{proof}
Follows from \cref{thm:hard1,thm:hard2,thm:cyclic-perm,thm:hard1-avg}.
\end{proof}

\subsection{Bounding Sums using Integrals}
\label{sec:prio-extra:calculations}

\begin{lemma}
\label{thm:int-bound}
Let $f: [a, b] \to \mathbb{R}_{\ge 0}$ be a non-increasing function
and let $F$ be its antiderivative. Then
\[ f(b) + \int_a^b f(x)dx \le \sum_{i=a}^b f(i) \le f(a) + \int_a^b f(x)dx. \]
\end{lemma}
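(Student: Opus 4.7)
The plan is to compare the integral $\int_a^b f(x)\,dx$ with the Riemann sums obtained from the partition of $[a,b]$ into unit intervals $[i, i+1]$ for $i = a, a+1, \ldots, b-1$, exploiting the fact that $f$ is non-increasing. (We implicitly assume $a, b \in \mathbb{Z}$ with $a \le b$, since $\sum_{i=a}^b f(i)$ is what we wish to bound; if $a = b$, both inequalities reduce to $f(a) \le f(a) \le f(a)$, so we may assume $a < b$.)

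First, for the lower bound, fix any $i \in \{a, \ldots, b-1\}$. Since $f$ is non-increasing, for every $x \in [i, i+1]$ we have $f(x) \le f(i)$, hence $\int_i^{i+1} f(x)\,dx \le f(i)$. Summing over $i$ from $a$ to $b-1$ yields $\int_a^b f(x)\,dx \le \sum_{i=a}^{b-1} f(i) = \sum_{i=a}^b f(i) - f(b)$, which rearranges to the claimed lower bound $f(b) + \int_a^b f(x)\,dx \le \sum_{i=a}^b f(i)$.

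Second, for the upper bound, fix any $i \in \{a, \ldots, b-1\}$. By monotonicity, for every $x \in [i, i+1]$ we have $f(x) \ge f(i+1)$, hence $\int_i^{i+1} f(x)\,dx \ge f(i+1)$. Summing over $i$ from $a$ to $b-1$ and reindexing gives $\int_a^b f(x)\,dx \ge \sum_{i=a+1}^b f(i) = \sum_{i=a}^b f(i) - f(a)$, which rearranges to $\sum_{i=a}^b f(i) \le f(a) + \int_a^b f(x)\,dx$.

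There is no real obstacle here: this is the standard integral test comparison, and the only thing to be careful about is the direction of the inequality in each case, which is determined by whether we use the left or right endpoint of each unit subinterval to bound $f$ on that subinterval. The role of the antiderivative $F$ is purely cosmetic (so that $\int_a^b f(x)\,dx = F(b) - F(a)$ is a concrete quantity); no differentiability of $f$ is actually needed beyond Riemann-integrability, which follows from $f$ being monotone.
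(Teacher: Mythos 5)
Your proof is correct. The paper states Lemma~\ref{thm:int-bound} without any proof, treating it as the standard integral-test comparison, and your argument is exactly the canonical one: bound $f$ on each unit subinterval $[i,i+1]$ by its value at the left endpoint (for the upper bound on the integral, hence the lower bound on the sum) or the right endpoint (for the reverse), then telescope. Your observations that $a,b$ should be taken as integers and that monotonicity already guarantees Riemann-integrability are both accurate and consistent with how the lemma is invoked in Lemmas~\ref{thm:rbf-avg}, \ref{thm:hard1-avg}, and \ref{thm:hard2-avg}.
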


\rthmRbfAvg*
\begin{proof}
Define $f: [0, n-1] \to \mathbb{R}_{\ge 0}$ as
\[ f(x) \defeq \max\left(\frac{2n}{2n+x},\,\frac{3}{4} + \frac{1}{12n}\right). \]
Let $\beta \defeq 2n(3n-1)/(9n+1)$. Then
\[ \frac{2n}{2n+x} \ge \frac{3}{4} + \frac{1}{12n} \iff x \le \beta. \]
Hence,
\[ f(x) \defeq \begin{cases}
\displaystyle \frac{2n}{2n+x} & \textrm{ if } x \in [0, \beta)
\\[1em] \displaystyle \frac{3}{4} + \frac{1}{12n} & \textrm{ if } x \in (\beta, n-1]
\end{cases}. \]
Therefore,
\newcommand*{\longNegSpace}{\!\!\!\!\!\!}
\begin{align*}
\gamma &= \frac{1}{n}\sum_{i=0}^{n-1} f(i)
    \ge \frac{1}{n}\left(\frac{3}{4} + \frac{1}{12n} + \int_0^{n-1}\longNegSpace f(x)dx\right)
    \tag{by \cref{thm:int-bound}}
\\ &= \frac{1}{n}\left(2n\ln\left(\frac{2n+\beta}{2n}\right)
    + \left(\frac{3}{4} + \frac{1}{12n}\right)(n-\beta)\right)
\\ &= 2\ln\left(\frac{4}{3}\right) + \frac{1}{4} + \frac{1}{4n} - 2\ln\left(1 + \frac{1}{9n}\right)
\\ &\ge 2\ln\left(\frac{4}{3}\right) + \frac{1}{4} + \frac{1}{4n} - \frac{2}{9n}
    \tag{since $\ln(1+x) \le x$ for all $x > -1$}
\\ &= 2\ln\left(\frac{4}{3}\right) + \frac{1}{4} + \frac{1}{36n}.
\qedhere
\end{align*}
\end{proof}

\rthmHardIExa*
\begin{proof}
\begin{align*}
& \frac{1}{n}\left(2 + \sum_{i=3}^n \frac{3n}{3n+i-2}\right)
= \frac{1}{n}\left(1 + \sum_{i=0}^{n-2} \frac{3n}{3n+i}\right)
\\ &\le \frac{1}{n}\left(2 + 3n\int_0^{n-2} \frac{dx}{3n+x}\right)
    \tag{by \cref{thm:int-bound}}
\\ &= \frac{2}{n} + 3\ln\left(\frac{4}{3}\right) + 3\ln\left(1 - \frac{1}{2n}\right)
\\ &\le \frac{2}{n} + 3\ln\left(\frac{4}{3}\right) - \frac{3}{2n}
    \tag{since $\ln(1-x) \le -x$ for all $x < 1$}
\\ &= 3\ln\left(\frac{4}{3}\right) + \frac{1}{2n}.
\qedhere
\end{align*}
\end{proof}

\rthmHardIIExa*
\begin{proof}
Define $f: [0, n-1] \to \mathbb{R}_{\ge 0}$ as
\[ f(x) \defeq \min\left(\frac{3n}{3n+x-1}, \max\left(\frac{5}{6}, 1 - \frac{x}{3n}\right)\right). \]
Then for $x \ge 0$,
\[ \frac{3n}{3n+x-1} - \left(1 - \frac{x}{3n}\right)
= \frac{x(x-1) + 3n}{3n(3n+i-2)} > 0. \]
\[ \frac{5}{6} \ge 1 - \frac{x}{3n} \iff x \ge \frac{n}{2}. \]
\[ \frac{5}{6} \le \frac{3n}{3n+x-1} \iff x \le \frac{3n}{5}+1. \]
Hence,
\[ f(x) = \begin{cases}
1 - x/3n & \textrm{ if } x \in [0, n/2]
\\ 5/6 & \textrm{ if } x \in \big(\frac{n}{2}, \frac{3n}{5}+1\big]
\\ \displaystyle \frac{3n}{3n+x-1} & \textrm{ if } x \in \big(\frac{3n}{5}+1, n-1\big]
\end{cases}. \]
Therefore,
\begin{align*}
& \frac{1}{n}\sum_{i=1}^n f(i-1)
\le \frac{1}{n}\left(f(0) + \int_0^{n-1} f(x)dx\right)
    \tag{by \cref{thm:int-bound}}
\\ &= \frac{1}{n}\left(1 + \int_0^{\frac{n}{2}}\left(1 - \frac{x}{3n}\right)dx
    + \int_{\frac{n}{2}}^{\frac{3n}{5}+1} \frac{5}{6}dx + \int_{\frac{3n}{5}+1}^{n-1} \frac{3n}{3n+x-1}dx\right)
\\ &= \frac{1}{n}\left(\frac{11}{6} + \frac{13}{24}n
    + 3n\ln\left(\frac{10}{9}\left(1 - \frac{1}{2n}\right)\right)\right)
\\ &\le \frac{13}{24} + 3\ln\left(\frac{10}{9}\right) + \frac{1}{3n}.
    \tag{since $\ln(1-x) \le -x$ for all $x < 1$}
\end{align*}
\end{proof}

\end{document}